\newif\ifdraft
\providecommand{\leftsquigarrow}{%
  \mathrel{\mathpalette\reflect@squig\relax}%
}
\newcommand{\reflect@squig}[2]{%
  \reflectbox{$\m@th#1\rightsquigarrow$}%
}
\colorlet{rolecolor}{black}
\colorlet{indcolor}{black}
\colorlet{homcolor}{gray}
\colorlet{tboxcolor}{blue}
\colorlet{aboxcolor}{yellow}
  \tikzset{declare function={atanXY(\x,\y)=atan2(\y,\x);atanYX(\y,\x)=atan2(\y,\x);}}
  \tikzset{declare function={atanXY(\x,\y)=atan2(\x,\y);atanYX(\y,\x)=atan2(\x,\y);}}
\tikzset{ %
  %
  abox/.style={cylinder, shape border rotate=90, aspect=0.2, outer sep=0.05cm,
    draw, fill=aboxcolor},%
  aboxl/.style={abox, draw=aboxcolor!60, fill=aboxcolor!20},%
  aboxr/.style={abox},%
  abox3d/.style={abox, drop shadow, draw=aboxcolor!90!black, %
    left color=aboxcolor!70, right color=aboxcolor!80, middle color=white,%
    append after command={ let%
      \p{cyl@center} = ($(\tikzlastnode.before top)!0.5!(\tikzlastnode.after top)$), %
      \p{cyl@x} = ($(\tikzlastnode.before top)-(\p{cyl@center})$), %
      \p{cyl@y} = ($(\tikzlastnode.top) -(\p{cyl@center})$) in %
      (\p{cyl@center}) edge[draw=none, fill=aboxcolor!50, to path={ ellipse%
        [x radius=veclen(\p{cyl@x})-1\pgflinewidth-0.05cm, %
        y radius=veclen(\p{cyl@y})-1\pgflinewidth-0.05cm, rotate=atanXY(\p{cyl@x})]}]
      () }%
  },%
  %
  tbox/.style={thick, outer sep=0.05cm, inner sep=0.2cm, draw, fill=tboxcolor!50},%
  tboxl/.style={tbox, draw=tboxlcolor, fill=tboxlcolor!40}, %
  tboxr/.style={tbox, draw=tboxrcolor, fill=tboxrcolor!40}, %
  tbox3d/.style={tbox, drop shadow={opacity=0.35},
    thin, draw=tboxcolor!50, top color=tboxcolor!20, bottom color=tboxcolor!50
  },%
  tbox3dl/.style={tbox, draw=tboxlcolor!60, inner color=tboxrcolor!20, outer
    color=tboxrcolor!40},%
  tbox3dr/.style={tbox, draw=tboxrcolor!85, inner color=tboxrcolor!45, outer
    color=tboxrcolor!65},%
  kbbox/.style={draw, thick, outer sep=0cm, inner sep=0.2cm},%
  kbbox3d/.style={kbbox, outer sep=0.05cm, rounded corners},%
  entity/.style={draw, fill=white, thick, inner xsep=0.2cm, outer sep=0.02cm},%
  relation/.style={draw, diamond, aspect=3, fill=white, thick, inner
    ysep=0.02cm, drop shadow},%
  attribute/.style={draw, circle, fill=white, thick, inner sep=0.07cm},%
  umlclass/.style={rectangle split, rectangle split parts=3, rectangle split
    part align={center, left, left}, draw, thick, drop shadow={opacity=0.3},
    inner ysep=0.05cm, %
    rectangle split empty part height=0.2cm, fill=gray!50},%
  umlclassabstract/.style={rectangle split, rectangle split parts=3, rectangle
    split part align={center, left, left}, draw, thick, drop shadow={opacity=0.3},
    inner ysep=0.05cm, %
    rectangle split empty part height=0.0cm, fill=gray!50},%
  %
  individual/.style={draw, circle, outer sep=0.05cm, inner sep=0.08cm, thick,
    color=indcolor, text=indcolor}, %
  constant/.style={fill}, %
  region/.style={rounded corners, color=mygray, fill=mygray!30, text=black}, %
  %
  %
  isa/.style={-open triangle 60, thick},%
  erisa/.style={decoration={markings,mark=at position 1 with {\arrow[line
        width=0.1cm]{stealth}}}, postaction={decorate}},%
  assoc/.style={thick},%
  role/.style={-stealth', thick, color=rolecolor},%
  disj/.style={triangle 60-triangle 60, fill=white, ultra thin, postaction={decorate,
    decoration={markings, mark=at position 0.45 with {\draw[-,solid]
        (-2pt,-2pt) -- (2pt, 2pt); } }}},%
  trans/.style={-stealth', semithick},%
  homomor/.style={-stealth, dashed, thick, color=homcolor},%
  soledge/.style={dashed,-latex},%
  path/.style={decorate, decoration={zigzag, segment length=20pt,
      amplitude=2pt},->, ultra thin},%
  mapping/.style={decorate, decoration={coil, aspect=0, segment length=20pt,
      amplitude=1pt},-latex, ultra thin, mydarkgray},%
  disjmap/.style={decorate, decoration={coil, aspect=0, segment length=20pt,
      amplitude=1pt},-latex, ultra thin, mydarkgray, postaction={decorate,
    decoration={markings, mark=at position 0.55 with {\draw[-,solid]
        (-2pt,-2pt) -- (2pt, 2pt); } }}},%
  crossout/.style={decorate, decoration={markings, mark=at position 0.55 with
      {\draw[-,solid,thick,myred] (-2pt,-2pt) -- (2pt, 2pt); %
        \draw[-,solid,thick,myred] (-2pt,2pt) -- (2pt, -2pt);} }},%
  isometricYXZ/.style={x={(1cm,0cm)}, y={(-1.299cm,-0.75cm)}, z={(0cm,1cm)}},%
}
\def \tikzdots[#1]{
  \begin{scope}[shift={#1}, text=black, inner sep=0.05cm]
    \node at (0,0.15) {\per}; \node {\per}; \node at (0,-0.15) {\per};
  \end{scope}
}
\def\pgfdecoratedcontourdistance{0pt}
    \let\pgfdecoratedcontourdistance=\pgfmathresult}%
        \let\shorten=\pgfmathresult%
\def\aboxl[#1,#2,#3,#4,#5]#6{%
  \node[draw, cylinder, alias=cyl, shape border rotate=90, aspect=#3, %
  minimum height=#1, minimum width=#2, outer sep=-0.5\pgflinewidth, %
  color=aboxcolor!60, left color=aboxcolor!30, right color=aboxcolor!40, middle
  color=white] (#4) at #5 {};%
  \node at #5 {#6};%
  \fill [left color=aboxcolor!20, right color=aboxcolor!20] let \p1 =
  ($(cyl.before top)!0.5!(cyl.after top)$), \p2 = (cyl.top), \p3 = (cyl.before
  top), \n1={veclen(\x3-\x1,\y3-\y1)}, \n2={veclen(\x2-\x1,\y2-\y1)} in (\p1)
  ellipse (\n1 and \n2); }
\newcounter{desc@cont}
\newcommand{\myitem}[1]{\item[#1:]\def\@currentlabelname{#1}\refstepcounter{desc@cont}}
\newcommand{\axpart}[1]{#1\def\@currentlabelname{#1}\refstepcounter{desc@cont}}
\def\soledge[#1,#2,#3,#4];{
        \draw[dashed,-latex,#4] (#1) -- #3 (#2);
}
\theoremstyle{example}
\newtheorem{theorem}{Theorem}[section]
\newtheorem{definition}[theorem]{Definition}
\newtheorem{lemma}[theorem]{Lemma}
\newtheorem{corollary}[theorem]{Corollary}
\theoremstyle{definition}
\newtheorem{example}[theorem]{Example}
\newcommand{\qedboxfull}{\vrule height 5pt width 5pt depth 0pt}
\newcommand{\qedfull}{\hfill{\qedboxfull}}
\definecolor{midgrey}{rgb}{0.5,0.5,0.5}
\definecolor{darkred}{rgb}{0.7,0.1,0.1}
\newcommand{\fstyle}[1]{\mathit{#1}}
\newcommand{\scan}{{\fstyle{scan}}}
\newcommand{\hjoin}{{\fstyle{hjoin}}}
\newcommand{\mjoin}{{\fstyle{mjoin}}}
\newcommand{\mat}{{\fstyle{mat}}}
\newcommand{\jucq}{{\fstyle{jucq}}}
\newcommand{\ujucq}{{\fstyle{ujucq}}}
\newcommand{\setNoFancy}[1]{\left\{ #1 \right\}}
\newcommand{\setBNoFancy}[2]{\{ #1 \mathrel{}|\mathrel{} #2\}}
\newcommand{\vecSet}[1]{\mathbf{#1}}
\newcommand{\jp}{\fstyle{se}} 
\newcommand{\jc}{\fstyle{ea}} 
\newcommand{\MAux}[1]{\mathcal{#1}^{\fstyle{aux}}}
\newcommand{\qAux}[1]{{#1}^{\fstyle{aux}}}
\newcommand{\sign}{\fstyle{sign}} 
\newcommand{\aux}{\fstyle{aux}}
\newcommand{\myPar}[1]{\medskip\noindent\textit{\large{}#1}}
\newcommand{\OWLQL}{\textrm{OWL\,2\,QL}\xspace}
\newcommand{\adom}{\fstyle{adom}}
\newcommand{\A}{\mathcal{A}}
\newcommand{\D}{\mathcal{D}}
\renewcommand{\S}{\mathcal{S}}
\newcommand{\T}{\mathcal{T}}
\renewcommand{\O}{\mathcal{O}}
\newcommand{\M}{\mathcal{M}}
\newcommand{\ontop}{\textsl{Ontop}\xspace}
\newcommand{\ignore}[1]{}
\newcommand{\dlliteR}{\textit{DL-Lite}$_{\mathcal{R}}$\xspace}
\newcommand{\ISA}{\sqsubseteq}
\newcommand{\p}{\textsc{P}\xspace}
\newcommand{\tableRowBreak}{\\}
\newcommand{\tableColSpace}{@{\hspace{1.5em}}}
\newcommand{\C}{\mathcal{C}}
\definecolor{dark-gray}{gray}{0.3}
\newcommand{\textowl}[1]{{\texttt{\small #1}}}
\newcommand{\textsql}[1]{#1}
\newcommand{\unf}{\fstyle{unf}}
\newcommand{\cq}{\fstyle{cq}}
\newcommand{\ucq}{\fstyle{ucq}}
\newcommand{\cover}{\fstyle{cover}}
\newcommand{\unfopt}{\fstyle{unf}\!_{\fstyle{opt}}}
\newcommand{\rew}{\fstyle{rew}}
\newcommand{\Aux}{\fstyle{Aux}}
\renewcommand{\split}{\fstyle{split}}
\newcommand{\wrap}{\fstyle{wrap}}
\newcommand{\atm}{\fstyle{atm}}
\newcommand{\source}{\fstyle{source}}
\newcommand{\target}{\fstyle{target}}
\newcommand{\dist}{\fstyle{dist}}
\newcommand{\head}{\fstyle{head}}
\newcommand{\pred}{\fstyle{pred}}
\newcommand{\setB}[2]{\left\{ #1 \mathrel{}\middle|\mathrel{} #2\right\}}
\newcommand{\set}[1]{\left\{ #1 \right\}}
\renewcommand{\vec}[1]{\mathbf{#1}}
\newcommand{\distSet}[1]{|#1|}
\newcommand{\restBag}[2]{#1|_{#2}} 
\newcommand{\restSet}[2]{\pi_{#2}(#1)} 
\newcommand{\distEst}[3][\D]{\fstyle{dist}_{#1}(#2,#3)}
\newcommand{\capren}{\mathop{\Cap}}
\newcommand{\occ}[2][\cdot]{#2_{\scriptscriptstyle[#1]}}
\newcommand{\fv}[1][\D]{\fstyle{fv}_{#1}}
\newcommand{\cardEst}[1][\D]{\fstyle{ce}_{#1}}
\newcommand{\cert}{\fstyle{cert}}
\newcommand{\fragu}[1][f]{q_{|_{#1}}^{\fstyle{ucq}(\T)}}
\newcommand{\frag}[1][f]{q_{|_{#1}}}
\newcommand{\cmath}[1]{{\small\[#1\]}}
\newcommand{\eval}{\fstyle{eval}}
\newcommand{\can}[1]{#1}
\newcommand{\Ldelim}[1]{\ldelim\{{#1}{3mm}}
\newcommand{\Rdelim}[1]{\rdelim\}{#1}{3mm}}
\newenvironment{mappings}
{
  \begin{array}{l l l l}
  } 
  { \end{array} 
}
\title{Cost-Driven Ontology-Based Data Access\\ (Extended Version)}
\author{Davide Lanti, Guohui Xiao, Diego Calvanese\\
 Faculty of Computer Science\\
 Free University of Bozen-Bolzano, Bolzano, Italy\\
 \texttt{\{dlanti,xiao,calvanese\}@inf.unibz.it}}
\date{}
\begin{document}

\maketitle

\sloppy

\begin{abstract}
  In \emph{ontology-based data access} (OBDA), users are provided with a
  conceptual view of a (relational) data source that abstracts away details
  about data storage.  This conceptual view is realized through an
  \emph{ontology} that is connected to the data source through declarative
  \emph{mappings}, and query answering is carried out by translating the user
  queries over the conceptual view into SQL queries over the data source.
  Standard translation techniques in OBDA try to transform the user query into
  a union of conjunctive queries (UCQ), following the heuristic argument that
  UCQs can be efficiently evaluated by modern relational database engines.  In
  this work, we show that translating to UCQs is not always the best choice,
  and that, under certain conditions on the interplay between
  the
  ontology, the mappings, and the statistics of the data, alternative
  translations can be evaluated much more efficiently.  To find the best
  translation, we devise a cost model together with a novel cardinality
  estimation that takes into account all such OBDA components.  Our experiments
  confirm that
  \begin{inparaenum}[\itshape (i)]
  \item alternatives to the UCQ translation might produce queries
  that are orders of magnitude more efficient, and
\item the cost model we propose is faithful to the actual query evaluation cost, and hence is well suited to select the best translation.
  \end{inparaenum}
\end{abstract}

\section{Introduction}

An important research problem in Big Data is how to provide end-users with
effective access to the data, while relieving them from being aware of details
about how the data is organized and stored. The paradigm of Ontology-based Data
Access (OBDA)~\cite{PLCD*08} addresses this problem by presenting to the
end-users a convenient view of the data stored in a relational database. This
view is in the form of a \emph{virtual} \emph{RDF
 graph}~\cite{W3Crec-RDF-Concepts} that can be queried through
\emph{SPARQL}~\cite{W3Crec-SPARQL-1.1-query}.  Such virtual RDF graph is
realized by means of the \emph{TBox of an \OWLQL ontology}~\cite{W3Crec-OWL2-Profiles} that is connected to the data source
through declarative \emph{mappings}.  Such mappings associate to each
predicate in the TBox (i.e., a concept, role, or attribute) a SQL query over
the data source, which intuitively specifies how to populate the predicate from
the data extracted by the query.

Query answering in this setting is not carried out by actually materialising
the data according to the mappings, but rather by first \emph{rewriting} the
user query with respect to the TBox, and then \emph{translating} the rewritten
query into a SQL query over the data.  In state-of-the-art OBDA
systems~\cite{ontop-system}, such SQL translation is the result of
\emph{structural optimizations}, which aim at obtaining a \emph{union of
 conjunctive queries} (UCQ).  Such an approach is claimed to be effective
because
\begin{inparaenum}[\itshape (i)]
\item joins are over database values, rather than over URIs constructed by
  applying mapping definitions;
\item joins in UCQs are performed by directly accessing (usually, indexed)
  database tables, rather than materialized and non-indexed intermediate views.
\end{inparaenum}
However, the requirement of generating UCQs comes at the cost of an exponential
blow-up in the size of the user query.

A more subtle, \emph{sometimes} critical issue, is that the UCQ structure
accentuates the problem of redundant data, which is particularly severe in OBDA
where the focus is on retrieving \emph{all} the answers implied by the data and
the TBox: each CQ in the UCQ can be seen as a different attempt of enriching
the set of retrieved answers, without any guarantee on whether the attempt will
be successful in retrieving new results.  In fact, it was already observed
in~\cite{BOSX13} that generating UCQs is sometimes counter-beneficial (although
that work was focusing on a substantially different topic).

As for the rewriting step, Bursztyn et al.~\cite{BursztynGM15b,BursztynGM15}
have investigated a space of alternatives to UCQ perfect rewritings, by considering
\emph{joins of UCQs} (JUCQs), and devised a cost-based algorithm to select the
best alternative. However, the scope of their work is limited to the simplified
setting in which there are no mappings and the extension of the predicates in
the ontology is directly stored in the database. Moreover, they use their
algorithm in combination with traditional cost models from the database
literature of query evaluation costs, which, according to their experiments,
provide estimations close to the native ones of the PostgreSQL database engine.

In this work we study the problem of alternative translations in the general
setting of OBDA, where the presence of mappings needs to be taken into account.
To do so, we first study the problem of translating JUCQ perfect rewritings, such as
those from~\cite{BursztynGM15b}, into SQL queries that preserve the JUCQ
structure while maintaining property~\textit{(i)} above, i.e., the ability of
performing joins over database values, rather than over constructed URIs.
We also devise a cost model based on a \emph{novel cardinality estimation}, for
estimating the cost of evaluating a translation of a UCQ or JUCQ over the
database. The novelty in our cardinality estimation is that it exploits the
interplay between the components of an OBDA instance, namely ontology,
mappings, and statistics of the data, so as to better estimate the number of
non-duplicate answers.

We carry out extensive and in-depth experiments based on a synthetic scenario
built on top of the \emph{Winsconsin Benchmark}~\cite{DeWitt93}, a widely
adopted benchmark for databases, so as to understand the trade-off between a
translation of UCQs and JUCQs.
In these experiments we observe that:
\begin{inparaenum}[\itshape (i)]
\item factors such as the number of mapping assertions, also affected by the
  number of axioms in the ontology, and
  the number of redundant answers are the main factors for deciding which
  translation to choose;
\item the cost model we propose is faithful to the actual query evaluation
  cost, and hence is well suited to select the best alternative translation of
  the user query;
\item the cost model implemented by PostgreSQL performs surprisingly poorly in
  the task of estimating the best translation, and is significantly
  outperformed by our cost model.  The main reason for this is that PostgreSQL
  fails at recognizing when different translations are actually equivalent, and
  may provide for them cardinality estimations that differ by several orders of
  magnitude.
\end{inparaenum}

In addition, we carry out an evaluation on a real-world scenario based on the
NPD benchmark for OBDA~\cite{LRXC15}.  Also in these experiments we confirm
that alternative translations to the UCQ one may be more efficient, and that
the same factors already identified in the Winsconsin experiments determine
which choice is best.

The rest of the paper is structured as follows.  Section~\ref{s:preliminaries}
introduces the relevant technical notions underlying OBDA.
Section~\ref{s:background} place our paper with respect to the field
literature.  Section~\ref{c:planning:s:cover-based-trans} provides our
characterization for SQL translations of JUCQs.
Section~\ref{c:planning:s:unf-est} presents our novel model for cardinality
estimation, and Section~\ref{c:planning:s:unf-cost-model} the associated cost
model.  Section~\ref{c:planning:s:eval} provides the evaluation of the cost
model on the Wisconsin and NPD Benchmarks.
Section~\ref{c:planning:s:conclusion} concludes the paper.

\ignore{
We sometimes find convenient to use the $\Rightarrow$ symbol to abbreviate \emph{If, then, else} sentences, as well as logical symbols $\forall$, $\exists$ or $\land$, $\lor$ to abbreviate the text versions of \emph{for all, exists, and, or}. For instance, we sometimes abbreviate sentences such as

\begin{center}
``For all sets $A,B,C$ in $\C$, it holds that if $A=B$ and $B=C$, then $A=C$''
\end{center}
as
\[\forall A,B,C \in \C:A=B \land B=C \Rightarrow A=C.\]
}
\section{Preliminaries}\label{s:preliminaries}

In this section we introduce basic notions and notation necessary for the
technical development of this work.

From now on, we will denote tuples in \textbf{bold} font, e.g., $\vec{x}$ is a
tuple, and when convenient we treat tuples as sets.  Given a predicate symbol
$P$, a tuple $\vec{f}$ of function symbols, and a tuple $\vec{x}$ of variables,
we denote by $P(\vec{f}(\vec{x}))$ an atom where each argument is of the form
$g(\vec{y})$, where $g\in \vec{f}$ and $\vec{y} \subseteq \vec{x}$.

We rely on the OBDA framework of \cite{PLCD*08}, which we formalize here
through the notion of \emph{OBDA specification}, which is a triple
$\S=(\T,\M,\Sigma)$ where $\T$ is an \emph{ontology TBox}, $\M$ is a set of
\emph{mappings}, and $\Sigma$ is the schema of a relational database. In the
remainder of this section, we formally introduce such elements of an OBDA
specification.

We assume that ontologies are formulated in \dlliteR~\cite{CDLLR07}, which is
the DL providing the formal foundations for OWL~2~QL, the W3C standard ontology
language for OBDA~\cite{W3Crec-OWL2-Profiles}.  A \dlliteR \emph{TBox} $\T$ is
a finite set of axioms of the form $C \ISA D$ or $P \ISA R$, where $C$, $D$ are
\dlliteR \emph{concepts} and $P$, $R$ are \emph{roles}, following the \dlliteR
grammar.  A \dlliteR \emph{ABox} $\A$ is a finite set of assertions of the form
$A(a)$, $P(a,b)$, where $A$ is a concept name, $P$ a role name, and $a$, $b$
\emph{individuals}.  We call the pair $\O=(\T,\A)$ a \dlliteR \emph{ontology}.

We consider here \emph{first-order (FO) queries}~\cite{AbHV95}, and we use
$q^{\D}$ to denote the evaluation of a query $q$ over a database $\D$.  We use
the notation $q^{\A}$ also for the evaluation of $q$ over the ABox $\A$, viewed
as a database.  For an ontology $\O$, we use $\cert(q,\O)$ to denote the
\emph{certain answers} of $q$ over $\O$, which are defined as the set of tuples
$\vec{a}$ of individuals such that $\O\models q(\vec{a})$ (where $\models$
denotes the \dlliteR entailment relation).
We consider also various fragments of FO queries, notably \emph{conjunctive
 queries} (CQs), \emph{unions of CQs} (UCQs), and \emph{joins of UCQs}
(JUCQs)~\cite{AbHV95}.

Mappings specify how to populate the concepts and roles of the ontology from
the data in the underlying relational database instance of $\Sigma$. A \emph{mapping} $m$ is an
expression of the form $L(\vec{f}(\vec{x})) \leftsquigarrow q_m(\vec{x})$: the
\emph{target part} $L(\vec{f}(\vec{x}))$ of $m$ (denoted as $\target(m)$) 
is an atom over function symbols\footnote{For conciseness, we use here abstract function symbols in the
 mapping target.  We remind that in concrete mapping languages, such as
 R2RML~\cite{W3Crec-R2RML}, such function symbols correspond to IRI templates
 used to generate object IRIs from database values.} $\vec{f}$ and variables
$\vec{x}$ whose predicate name $L$ is a concept or role name; the \emph{source
 part} $q_m(\vec{x})$ of $m$ (denoted as $\source(m)$) is a FO query over $\Sigma$ with output variables\footnote{In
 general, the output variables of the source query might be a superset of the
 variables in the target, but for our purposes we can assume that they
 coincide.}  $\vec{x}$. We say that the \emph{signature} $\sign(m)$ of $m$ is
the pair $(L,\vec{f})$, and that $m$ \emph{defines} $L$.  We also define
$\sign(\M)=\setB{\sign(m)}{m \in \M}$.

Following \cite{DLLM*13}, we split each mapping $m = L(\vec{f}(\vec{x}))
 \leftsquigarrow q_m(\vec{x})$ in $\M$ into two parts by introducing an
intermediate view name $V_m$ for the FO query $q_m(\vec{x})$.  We
obtain a \emph{low-level} mapping of the form $V_m(\vec{x})
 \leftsquigarrow q_m(\vec{x})$, and a \emph{high-level} mapping of the form
$L(\vec{f}(\vec{x})) \leftsquigarrow V_m(\vec{x})$.
%
In the following, we abstract away the low-level mapping parts, and we consider
$\M$ as consisting directly of the high-level mappings.  In other words,
we directly consider the intermediate view atoms $V_m$ as the source
part, with the semantics $V_m^\D = q_m^\D$, for each database instance
$\D$.  We denote by $\Sigma_\M$ the \emph{virtual schema}
consisting of the relation schemas whose names are the intermediate view
symbols $V_m$, with attributes given by the answer variables of the
corresponding source queries.

From now on we fix an OBDA specification $\S=(\T,\M,\Sigma)$. Given a database
instance $\D$ for $\Sigma$, we call the pair $(\S,\D)$ an \emph{OBDA instance}.
We call the set of assertions $ \A_{(\M,\D)} =
\setB{L(\vec{f}(\vec{a}))}{ L(\vec{f}(\vec{x})) \leftsquigarrow V(\vec{x}) \in
  \M \text{ and } \vec{a} \in V(\vec{x})^\D} $ the \emph{virtual ABox exposed
  by $\D$ through $\M$}.  Intuitively, such an ABox is obtained by
evaluating, for each (high level) mapping $m$, its source view
$V(\vec{x})$ over the database $\D$, and by using the returned
tuples to instantiate the concept or role $L$ in the target part of
$m$.  The \emph{certain answers} $\cert(q,(\S,\D))$ to a query
$q$ over an OBDA instance $(\S,\D)$ are defined as
$\cert(q,(\T,\A_{(\M,\D)}))$.

In the virtual approach to OBDA, such answers are computed without actually
materializing $\A_{(\M,\D)}$, by transforming the query $q$ into a FO query
$q_{\mathit{fo}}$ formulated over the database schema $\Sigma$ such that
$q_{\mathit{fo}}^{\D'}=\cert(q,(\S,\D'))$, for every OBDA instance $(\S,\D')$.
To define the query $q_{\mathit{fo}}$, we introduce the following notions:
\begin{compactitem}
\item A query $q_r$ is a \emph{perfect rewriting} of a query $q'$ with respect
  to a TBox $\T$, if $\cert(q',(\T,\A))=q_r^{\A}$, for every ABox
  $\A$~\cite{CDLLR07}.
\item A query $q_t$ is an \emph{$\M$-translation} of a query
  $q'$, if $q_t^\D = q'^{\A_{(\M,\D)}}$, for every database
  $\D$ for $\Sigma$~\cite{PLCD*08}.
\end{compactitem}
Notice that, by definition, all perfect rewritings (resp., translations) of
$q'$ with respect to $\T$ (resp., $\M$) are equivalent.
Consider now a perfect rewriting $q_{\T}$ of $q$ with respect to $\T$, and then
a translation $q_{\T,\M}$ of $q_{\T}$ with respect to $\M$.  It is possible to
show that such a $q_{\T,\M}$ satisfies the condition stated above for
$q_{\mathit{fo}}$.

Many different algorithms have been proposed for computing perfect rewritings
of UCQs with respect to \dlliteR TBoxes, see, e.g.,~\cite{CDLLR07,KKPZ12b}.
As for the translation, \cite{PLCD*08} proposes an algorithm that is based on
non-recursive \emph{Datalog}~\cite{AbHV95}, extended with function symbols in
the head of rules, with the additional restriction that such rules never
produce nested terms.
We consider Datalog queries of the form $(G,\Pi)$, where $G$ is the
answer atom, and $\Pi$ is a set of Datalog rules following the
restriction above.  We abbreviate a Datalog query of the form
$(q(\vec{x}),\set{q(\vec{x}) \leftarrow B_1, \ldots, B_n})$,
corresponding to a CQ (possibly with function symbols), as $q(\vec{x})
\leftarrow B_1, \ldots, B_n$, and we also call it $q$.
From now on, given a Datalog rule $r = A \leftarrow B_1, \ldots, B_n$, we
denote by $\head(r)$ its head $A$.

\begin{definition}[\textbf{Unfolding of a UCQ}~\cite{PLCD*08}]\label{d:unf-cq}
  Let $q(\vec{x}) \leftarrow L_1(\vec{v}_1), \ldots, L_n(\vec{v}_n)$ be a
  CQ.  Then, the \emph{unfolding $\unf(q,\M)$ of $q$ w.r.t.\
   $\M$} is the Datalog query $(q_{\unf}(\vec{x}), \Pi)$, where
  $\Pi$ is a (up to variable renaming) minimal set of rules having the
  following property:
  \begin{compactitem}
  \item[] If $((m_1, \ldots, m_n),\sigma)$ is a pair such that
    $\{m_1,\ldots,m_n\}\subseteq \M$, and
    \begin{compactitem}
    \item $m_i = L_i(\vec{f}_i(\vec{x}_i)) \leftsquigarrow
       V_i(\vec{z}_i)$, for each $1\le i \le n$, and
    \item $\sigma$ is a most general unifier for the set of pairs
      $\{(L_i(\vec{v}_i),L_i(\vec{f}_i(\vec{x}_i))) \mid 1\le i\le n\}$,
    \end{compactitem}
    then the query $q_{\unf}(\sigma(\vec{x})) \leftarrow
     V_1(\sigma(\vec{z}_1)), \ldots, V_n(\sigma(\vec{z}_n))$ belongs to
    $\Pi$.
  \end{compactitem}
  The \emph{unfolding of a UCQ} $q$ is the union of the unfoldings of
  each CQ in $q$.
\end{definition}

It has been proved in~\cite{PLCD*08} that, for a UCQ $q$,
$\unf(q,\M)$ is an $\M$-translation.

\begin{theorem}[\textbf{An Unfolding is a Translation}~\cite{PLCD*08}]\label{t:sound-compl-unf}
  Let $q$ be a UCQ, and $\M$ a set of mappings. Then, 
  \[\unf(q,\M)^\D = q^{\A_{(\M, \D)}}.\]
\end{theorem}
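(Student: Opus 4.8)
The plan is to first reduce the statement to the case of a single CQ, and then prove the two inclusions separately. Since by Definition~\ref{d:unf-cq} the unfolding of a UCQ is the union of the unfoldings of its CQs, and the evaluation of a UCQ over any instance (here $\A_{(\M,\D)}$) is the union of the evaluations of its CQs, both sides of the claimed identity distribute over union. It therefore suffices to fix a single CQ $q(\vec{x}) \leftarrow L_1(\vec{v}_1), \ldots, L_n(\vec{v}_n)$ and show $\unf(q,\M)^\D = q^{\A_{(\M,\D)}}$. Throughout I assume, as in the definition, that mappings are standardized apart so that their variables are fresh, that source and target variables of each mapping coincide, and that the target function symbols are rigid and never nested (the stated restriction on the Datalog rules), so that every individual occurring in $\A_{(\M,\D)}$ is an object term of the form $\vec{f}(\vec{a})$ with $\vec{a}$ a tuple of database values.

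For the soundness inclusion $\unf(q,\M)^\D \subseteq q^{\A_{(\M,\D)}}$, I would take $\vec{a} \in \unf(q,\M)^\D$. By Definition~\ref{d:unf-cq} this is witnessed by a rule $q_{\unf}(\sigma(\vec{x})) \leftarrow V_1(\sigma(\vec{z}_1)), \ldots, V_n(\sigma(\vec{z}_n)) \in \Pi$, arising from mappings $m_i = L_i(\vec{f}_i(\vec{x}_i)) \leftsquigarrow V_i(\vec{z}_i)$ with mgu $\sigma$, together with an assignment $\theta$ of the body variables to database values such that $\theta(\sigma(\vec{z}_i)) \in V_i(\vec{z}_i)^\D$ for each $i$ and $\vec{a} = \theta(\sigma(\vec{x}))$. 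I then consider the composed substitution $\mu = \theta \circ \sigma$ restricted to the variables of $q$ and claim it is a match of $q$ into $\A_{(\M,\D)}$. Indeed, since $\sigma$ unifies $L_i(\vec{v}_i)$ with $L_i(\vec{f}_i(\vec{x}_i))$, rigidity of $\vec{f}_i$ yields $\mu(\vec{v}_i) = \theta(\sigma(\vec{f}_i(\vec{x}_i))) = \vec{f}_i(\theta(\sigma(\vec{x}_i)))$; writing $\vec{c}_i = \theta(\sigma(\vec{z}_i)) = \theta(\sigma(\vec{x}_i)) \in V_i(\vec{z}_i)^\D$, the definition of $\A_{(\M,\D)}$ gives $\mu(L_i(\vec{v}_i)) = L_i(\vec{f}_i(\vec{c}_i)) \in \A_{(\M,\D)}$. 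As $\mu(\vec{x}) = \vec{a}$, this shows $\vec{a} \in q^{\A_{(\M,\D)}}$.

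For the completeness inclusion $q^{\A_{(\M,\D)}} \subseteq \unf(q,\M)^\D$, I would start from a match $\mu$ of $q$ into $\A_{(\M,\D)}$ with $\mu(\vec{x}) = \vec{a}$. For each atom, $\mu(L_i(\vec{v}_i)) \in \A_{(\M,\D)}$, so by the definition of the virtual ABox there exist a mapping $m_i = L_i(\vec{f}_i(\vec{x}_i)) \leftsquigarrow V_i(\vec{z}_i)$ and a tuple $\vec{c}_i \in V_i(\vec{z}_i)^\D$ with $\mu(\vec{v}_i) = \vec{f}_i(\vec{c}_i)$. The crucial step is to exhibit a unifier for this choice: the substitution $\lambda$ that acts as $\mu$ on the variables of $q$ and sends $\vec{x}_i \mapsto \vec{c}_i$ (well defined because the mappings were renamed apart) unifies $\set{(L_i(\vec{v}_i), L_i(\vec{f}_i(\vec{x}_i))) \mid 1 \le i \le n}$. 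Hence an mgu $\sigma$ exists, and the corresponding rule belongs to $\Pi$ by Definition~\ref{d:unf-cq}. Finally, since $\sigma$ is most general, $\lambda = \theta \circ \sigma$ for some $\theta$, and this $\theta$ witnesses membership in the rule's evaluation: $\theta(\sigma(\vec{z}_i)) = \lambda(\vec{x}_i) = \vec{c}_i \in V_i(\vec{z}_i)^\D$ and $\theta(\sigma(\vec{x})) = \lambda(\vec{x}) = \vec{a}$, so $\vec{a} \in \unf(q,\M)^\D$.

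I expect the main obstacle to be the completeness direction, specifically the factoring of the concrete unifier $\lambda$ through the most general unifier $\sigma$ while carefully tracking which variables belong to the query and which to the freshly renamed mappings. One must argue that $\lambda$ is well defined --- i.e.\ that when a query variable is shared between two atoms the two mappings force compatible object terms, which holds precisely because $\mu$ is a \emph{single} homomorphism --- and that the rigidity and non-nesting of the function symbols turn the unification into a purely syntactic problem with the expected mgu. By contrast, the soundness direction is comparatively routine once rigidity is used to commute $\theta \circ \sigma$ past each $\vec{f}_i$.
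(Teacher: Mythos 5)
There is no in-paper proof to compare your attempt against: the paper states Theorem~\ref{t:sound-compl-unf} as a result imported from~\cite{PLCD*08} (note the citation in the theorem header), never proves it, and only uses it as a black box later (e.g., in Theorem~\ref{t:sound-compl-unf-jucq} and Lemma~\ref{t:rev-sound-compl-unf}). Judged on its own merits, your reconstruction is the standard argument behind this result and is essentially correct: reducing to a single CQ by distributing both sides over union, proving soundness by composing the mgu $\sigma$ with a valuation $\theta$ and commuting the composition past the rigid, non-nested function symbols, and proving completeness by assembling a concrete unifier $\lambda$ from the homomorphism $\mu$ and the mappings witnessing each virtual-ABox atom, then factoring $\lambda = \theta \circ \sigma$ through the mgu.

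The one step that needs tightening is the well-definedness of $\lambda$. Renaming \emph{distinct} mappings apart (your parenthetical justification) is not sufficient: the same mapping may be chosen as $m_i$ and $m_j$ for two different atoms $i \neq j$, with different witnessing tuples $\vec{c}_i \neq \vec{c}_j$, and then the assignment $\vec{x}_i \mapsto \vec{c}_i$ clashes with $\vec{x}_j \mapsto \vec{c}_j$ on the very same variables. What is needed is a fresh copy of the mapping variables for each \emph{occurrence}, i.e., per component of the tuple $(m_1,\ldots,m_n)$ in Definition~\ref{d:unf-cq}; this is what that definition's clause ``up to variable renaming'' must be read as providing. The point is not cosmetic: for $q(x,y) \leftarrow P(x), P(y)$ and $\M = \set{P(f(a)) \leftsquigarrow V(a)}$, omitting per-occurrence renaming yields only the rule $q_{\unf}(f(a),f(a)) \leftarrow V(a)$, which misses every answer with $x \neq y$, so the equality claimed by the theorem would actually fail. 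With per-occurrence freshness stated explicitly (and your appeal to $\mu$ being a single homomorphism handling the shared query variables, as you note), both of your inclusions go through as written.
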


\section{Background on Cost-Based Optimization of Query Rewriting}
\label{s:background}

Bursztyn
et.~al.~\cite{BursztynGM15b,BursztynGM15,DBLP:journals/pvldb/BursztynGM16}
study a space of alternatives to UCQ perfect rewritings, in a setting without
mappings and in which the ABox of the ontology is directly stored in the
database, under suitable assumptions on the form of the chosen relations and
indexes on them.  Their works provides empirical evidence supporting the
hypothesis that UCQ perfect rewritings could be evaluated less efficiently than
alternatives forms of rewritings.  In particular, they explore a space of
alternatives to the traditional UCQ perfect rewriting in the form of \emph{join
 of unions of conjunctive queries (JUCQs)}, and provide an algorithm that,
given a cost model, selects the best choice.

\begin{example}\label{e:obqa-jucq-1}
  Consider the TBox $\T = \set{C \sqsubseteq D}$, and the query
  $q(x,y) \leftarrow D(x), P(x,y)$. A UCQ perfect rewriting of $q$ with respect
  to $\T$ is the Datalog query $(q_{\rew}(x,y),\Pi)$, where
  \[
    \Pi ~=~ \left\{
      \begin{array}{l@{~~}c@{~~}l}
        q_{\rew}(x,y) &\leftarrow& C(x), P(x,y)\\
        q_{\rew}(x,y) &\leftarrow& D(x), P(x,y)
      \end{array}
    \right\}
  \]
  An alternative perfect rewriting can be obtained by applying well-known
  distributive rules for disjunction and conjunction over $q_{\rew}$.  By doing
  so, we obtain the alternative perfect rewriting $(q'_{\rew}(x,y),\Pi')$,
  where
  \[
    \Pi' ~=~ \left\{
      \begin{array}{l@{~~}c@{~~}l}
        q'_{\rew}(x,y) &\leftarrow& q_{{\rew}_D}(x), P(x,y)\\
        q_{{\rew}_D}(x) &\leftarrow& C(x)\\
        q_{{\rew}_D}(x) &\leftarrow& D(x)
      \end{array}
    \right\}
  \]
  Observe that the above query is a JUCQ. In particular, it corresponds to the
  query
  \[q_{\cover}(x,y) \leftarrow q_{{\rew}_D}(x), q_{{\rew}_P}(x,y)\]
  where
  \begin{itemize}
  \item $q_{{\rew}_D}(x)$ is a UCQ perfect rewriting of
    $q_D(x) \leftarrow D(x)$ with respect to $\T$, and
  \item $q_{{\rew}_P}(x,y)$ is a UCQ perfect rewriting of
    $q_P(x,y) \leftarrow P(x,y)$ with respect to $\T$.
    \qedfull
  \end{itemize}
\end{example}

In this work we study the problem of alternative translations in a full-fledged
OBDA scenario with mapppings. A classical pitfall in such scenario is that it
is very easy to lose the ability of performing joins over database values, so
that one has to perform them over URIs constructed by applying mapping
definitions. The next example shows this issue.

\begin{example}\label{e:obda-jucq-2}
  Consider the ontology and queries from Example~\ref{e:obqa-jucq-1} and the
  following set of mappings:
  \[
    \M ~=~ \left\{
      \begin{array}{l@{~~}c@{~~}l}
        C(f(a)) & \leftsquigarrow& V_1(a)\\
        D(f(b)) & \leftsquigarrow& V_2(b)\\
        D(g(c)) & \leftsquigarrow& V_3(c)\\
        P(f(d),h(e)) & \leftsquigarrow& V_4(d,e)
      \end{array}
    \right\}
  \]

  By applying the unfolding procedure to queries $q_{\rew_D}$ and $q_{\rew_P}$,
  and joining the resulting queries, we obtain the following $\M$-translation
  $q_{\fstyle{trans}}$ of $q'_{\rew}$:
  \[q_{\fstyle{trans}}(x,y) \leftarrow q_{\unf_{D}}(x), q_{\unf_{P}}(x,y)\]
  where
  $q_{\unf_{D}}(x)$ is a Datalog query of program
  \[\Pi_{\unf_D} = \hspace{-1.5ex}
    \begin{mappings}
      \Ldelim{3} & q_{\unf_D}(f(a)) & \leftarrow V_1(a) & \Rdelim{3} \\
  & q_{\unf_D}(f(b)) & \leftarrow V_2(b) \\
  & q_{\unf_D}(g(c)) & \leftarrow V_3(c)
\end{mappings}
\]
and $q_{\unf_{P}}(x,y)$ is a Datalog query of program
\[\Pi_{\unf_P} = \set{q_{\unf_P}(f(d),h(e)) \leftarrow V_4(d,e)}\]
The SQL query corresponding to $q_{\fstyle{trans}}$, in algebra notation, is a
query of the form
\[\pi_{x,y}(\pi_{x/f(a)}(V_1) \cup \pi_{x/f(b)}(V_2) \cup \pi_{x/g(c)}(V_3))
  \Join \pi_{x/f(d),y/h(e)}(V_4)
\]
where each expression of the form $x/f(\vec{y})$ appearing in the projections
denotes the application of the function symbol $f$ over database values
instantiating the attributes in $\vec{y}$, and such applications construct
individuals under the answer variable $x$.  Apart from that $\pi$ behaves as
the standard projection operator.

Observe that $q_{\fstyle{trans}}$ is a JUCQ, and that it contains a join over
the result of the application of function symbols to database values. For
instance, in the Ontop OBDA system such application leads to an inefficient
join over columns whose values are constructed from the concatenation of URI
templates and database values. \qedfull
\end{example}

To formalize the intuitions given in the example above, we now introduce some
terminology from~\cite{BursztynGM15b}, which we use in our technical
development.

\begin{definition}[Cover~\cite{BursztynGM15b}]
Let $q$ be a query consisting of atoms $F=\set{L_1,\ldots,L_n}$. A
\emph{cover for $q$} is a collection $C$ of
non-empty subsets of $F$, called \emph{fragments}, such that
\begin{inparaenum}[\itshape (i)]
\item $\bigcup_{f\in C} f = F$ and
\item no fragment is included into another one.
\end{inparaenum}
\end{definition}

\begin{definition}[Fragment Query~\cite{BursztynGM15b}]
  Let $C$ be a cover for a query $q$. The \emph{fragment
    query} $\frag(\vec{x}_f)$, for $f\in C$, is the query whose body
  consists of the atoms in $f$ and whose answer variables $\vec{x}_f$
  are given by the answer variables $\vec{x}$ of $q$ that appear in
  the atoms of $f$, union the existential variables in $f$ that are
  shared with another fragment $f'\in C$, with $f'\neq f$.
\end{definition}

Given a fragment query $\frag(\vec{x}_f)$, and a TBox $\T$, we use the notation
$\fragu(\vec{x}_f)$ to denote a UCQ perfect rewriting of $\frag(\vec{x}_f)$
with respect to $\T$.

\begin{definition}[Cover-based Perfect Rewriting~\cite{BursztynGM15b}]\label{d:c-based-fol-ref}
  Let $C$ be a cover for a query $q$, and $\T$ a TBox.
Consider the query $q_C(\vec{x}) \leftarrow \bigwedge_{f\in C} \fragu(\vec{x}_f)$. Then
\emph{$q_C$ is a cover-based JUCQ perfect rewriting of $q$ with respect to
 $\T$ and $C$} if it is a perfect rewriting of $q$ with respect to
$\T$.
\end{definition}

In \dlliteR, not every cover leads to a cover-based JUCQ perfect
rewriting. Authors in~\cite{BursztynGM15b} gave a sufficient condition, called
``safety'', to guarantee that every cover leads to a cover-based JUCQ perfect
rewriting.

\begin{theorem}[Cover-based query answering~\cite{BursztynGM15b}]\label{t:c-based-qa} Applying Definition~\ref{d:c-based-fol-ref} on a safe cover $C$ for $q$ with respect to a TBox $\T$, and any rewriting technique producing a UCQ, yields a cover-based JUCQ perfect rewriting $q_{r}$ of $q$ with respect to $\T$.
\end{theorem}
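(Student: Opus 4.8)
The plan is to peel off Definition~\ref{d:c-based-fol-ref} first: since $q_C$ is by construction a JUCQ of the prescribed shape $q_C(\vec{x}) \leftarrow \bigwedge_{f\in C}\fragu(\vec{x}_f)$, it \emph{is} a cover-based JUCQ perfect rewriting precisely when it is an ordinary perfect rewriting of $q$ with respect to $\T$, i.e.\ when $q_C^{\A} = \cert(q,(\T,\A))$ for every ABox $\A$. So the entire statement reduces to this one equality, and I would establish it through the canonical (universal) model $\CalU_{\T,\A}$ of $(\T,\A)$, using the standard \dlliteR fact that for any UCQ $q'$ one has $\cert(q',(\T,\A)) = \setB{\vec{a}}{\CalU_{\T,\A}\models q'(\vec{a})}$ with $\vec{a}$ ranging over tuples of \emph{named} individuals. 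Because each $\fragu(\vec{x}_f)$ is a UCQ perfect rewriting of the fragment query $\frag(\vec{x}_f)$ — and, since all perfect rewritings are equivalent, it is irrelevant which UCQ technique produced it, which disposes of the ``any rewriting technique'' clause — we have $\fragu(\vec{x}_f)^{\A} = \cert(\frag(\vec{x}_f),(\T,\A))$. Hence $q_C^{\A} = \Join_{f\in C}\cert(\frag(\vec{x}_f),(\T,\A))$, and the goal becomes to show this natural join equals $\cert(q,(\T,\A))$.

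For the inclusion $q_C^{\A}\subseteq\cert(q,(\T,\A))$, which I expect to hold for \emph{every} cover and not to need safety, take $\vec{a}$ in the join. For each fragment $f$ there is a homomorphism $h_f$ from the body of $\frag(\vec{x}_f)$ into $\CalU_{\T,\A}$ sending $\vec{x}_f$ to the corresponding subtuple of named individuals. I would glue the $h_f$ into a single homomorphism $h$ from the body of $q$ into $\CalU_{\T,\A}$: two fragments can disagree only on a variable they share, but every shared variable is either an answer variable of $q$ (pinned to its value in $\vec{a}$) or an existential variable that the Fragment Query definition promotes to an answer variable of both fragment queries, hence a join variable of $q_C$ forced to a common value. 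The resulting $h$ with $h(\vec{x})=\vec{a}$ witnesses $\vec{a}\in\cert(q,(\T,\A))$; note this argument survives overlapping fragments, since a shared atom's existential variables are again promoted.

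The converse inclusion is where safety enters. Given $\vec{a}\in\cert(q,(\T,\A))$, fix a homomorphism $h$ from the body of $q$ into $\CalU_{\T,\A}$ with $h(\vec{x})=\vec{a}$, and set $h_f := h|_f$. Each $h_f$ is automatically a homomorphism of $\frag(\vec{x}_f)$, and the various $h_f$ agree on shared variables because they all descend from one $h$; thus $\vec{a}$ belongs to the join \emph{provided} each $h_f(\vec{x}_f)$ is a tuple of named individuals, i.e.\ a genuine certain answer of the fragment. The answer variables of $q$ already map to named individuals; the only possible failure is a shared existential variable $z\in\vec{x}_f$ that $h$ sends to an \emph{anonymous} individual of the chase. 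This is exactly what the safety condition must rule out: it guarantees that no join variable of $q_C$ can be matched to an anonymous individual in any homomorphism into the canonical model, whence every $h_f(\vec{x}_f)$ is named and $\vec{a}$ lies in the join.

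I expect the main obstacle to be this completeness step, and specifically the bridge between the \emph{syntactic} safety condition on the cover and the \emph{semantic} property I invoke, namely that join variables never match anonymous elements. Closing it requires the standard analysis of the \dlliteR chase, whose anonymous part is a forest of trees of fresh individuals generated by the $\exists$-axioms, each with a unique predecessor: one argues that an atom whose variable is shared across fragments imposes connectivity that such a tree-shaped, single-predecessor individual cannot realize unless that individual is in fact named, and that safety is precisely the combinatorial condition excluding the remaining configurations. Once this correspondence is in place, the two gluing arguments are routine, and all of the genuine content sits in deriving the ``named image'' property of join variables from safety.
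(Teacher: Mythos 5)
There is no proof in the paper to compare against: Theorem~\ref{t:c-based-qa} is imported verbatim from \cite{BursztynGM15b}, stated without proof, and the paper never even defines the ``safety'' condition it invokes. So your proposal has to be judged on its own merits. Its skeleton is the right one and matches the argument of the cited work: reduce to showing $q_C^{\A}=\cert(q,(\T,\A))$ for every ABox $\A$ (using that all perfect rewritings of a fragment query are equivalent, which indeed disposes of the ``any rewriting technique'' clause), then prove both inclusions via the canonical model. Your soundness direction is correct as written, and you are right that it requires no safety: by the Fragment Query definition, any variable occurring in two fragments is an answer variable of both fragment queries, hence a join variable of $q_C$ pinned to a common named value, so the fragment homomorphisms glue into a single homomorphism witnessing $\vec{a}\in\cert(q,(\T,\A))$; this covers overlapping fragments as well.

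The genuine gap is in the completeness step, and it is exactly the one you flag yourself. You never have a definition of safety to work with, so the claim that ``safety is precisely the combinatorial condition excluding the remaining configurations'' is not a proof step: it defines safety by the very property you need (that join variables are never matched to anonymous elements of the chase), which makes the argument circular as written. In \cite{BursztynGM15b}, safety is a concrete syntactic condition on the cover relative to the TBox, and the actual mathematical content of the theorem is the derivation of your semantic ``named image'' property from that syntactic condition, via the analysis of the tree-shaped, single-predecessor anonymous part of the \dlliteR canonical model. You could not have closed this step from the paper alone, since the paper omits the definition; but as it stands, your proposal is a correct proof schema conditional on an unstated and unproved lemma, not a complete proof.
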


\section{Cover-based Translation in OBDA}\label{c:planning:s:cover-based-trans}

In this section, we propose a cover-based translation of UCQ perfect
rewritings. To do so, we rely on our fixed OBDA specification
$\S=(\T,\M,\Sigma)$.  In addition, we also fix a query $q(\vec{x})$ and a
(safe) cover $C$ for it, as well as its cover-based JUCQ perfect rewriting
$q_C(\vec{x}) \leftarrow \bigwedge_{f\in C} \fragu$ with respect to $\T$ and
$C$. We introduce two different characterizations of unfoldings of $q_C$, which
produce $\M$-translations of $q$. The first characterization relies on the
intuition of joining the unfoldings of each fragment query in $q_C$.

\begin{definition}[Unfolding of a JUCQ -- Type~1]
  \label{d:unf-jucq}
  Let $C$ be a cover for a query $q$.
  For each $f\in C$, let $Aux_f$ be an auxiliary predicate for
  $\fragu(\vec{x}_f)$, and $U_f$ a view symbol for the
  unfolding $\unf(\fragu(\vec{x}_f), \M)$.
  Consider the set $\MAux{\M} = \setB{\Aux_f(\vec{x}_f)
    \leftsquigarrow U_f(\vec{x}_f)}{f\in C}$ of mappings
  associating the auxiliary predicates to the auxiliary view names. Then, we
  define the \emph{unfolding $\unf(q_C,\M)$ of $q_C$ with respect
   to\ $\M$} as
  $\unf(\qAux{q}_C(\vec{x}) \leftarrow \bigwedge_{f\in C}
   \Aux_f(\vec{x}_f), \MAux{\M})$.
\end{definition}

\begin{theorem}[Translation~1]\label{t:sound-compl-unf-jucq}
  Let $q_C(\vec{x}) \leftarrow \bigwedge_{i=1}^n q^{\ucq}_{|f_i}$ be a JUCQ
  cover-based perfect rewriting. Then, the query $\unf(q_C, \M)$ is an
  $\M$-translation of $q_C$.
  \begin{proof}
    By definition of $\M$-translation, we need to prove that, for each OBDA instance $\D$ of $\S$, the following equality holds:

    \[\unf(q_C, \M)^\D = q_C^{\can{{\A_{(\M, \D)}}}}.\]

    By Definition~\ref{d:unf-jucq}, we know that:

    \begin{equation}\label{eq:planning:unfC-def}
      \unf(q_C, \M) = \unf(\qAux{q}_C(\vec{x}) \leftarrow \bigwedge_{f \in C} \Aux_f(\vec{x}_f), \MAux{M}),
    \end{equation}
    with $\MAux{M}$ and $\Aux_f$ as in Definition~\ref{d:unf-jucq}. Let $\D$ be an arbitrary database instance for $\Sigma$. Then, it is not hard to verify that $q_C(\vec{x})^{\can{\A_{(\M,\D)}}} = \qAux{q}_C(\vec{x})^{\can{\A_{(\MAux{M},\D)}}}$, by applying Definition~\ref{d:unf-cq} and using the distributive property of the join operation over the union operation. By using Theorem~\ref{t:sound-compl-unf}, it holds that $\qAux{q}_C(\vec{x})^{\can{\A_{(\MAux{M},\D)}}} = \unf(\qAux{q}_C, \MAux{M})^\D$. By applying the transitive property of $=$, we obtain that $q_C(\vec{x})^{\can{\A_{(\M,\D)}}} = \unf(\qAux{q}_C(\vec{x}), \MAux{M})^\D$. The thesis follows by applying Equation~\eqref{eq:planning:unfC-def}, and from the fact that the choice of $\D$ was arbitrary.
\end{proof}
\end{theorem}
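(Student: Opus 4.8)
The plan is to establish the defining equation of an $\M$-translation, namely $\unf(q_C,\M)^\D = q_C^{\A_{(\M,\D)}}$, for an arbitrary but fixed database instance $\D$ for $\Sigma$; since nothing in the argument will depend on the choice of $\D$, this suffices. My first move is to unpack the left-hand side through Definition~\ref{d:unf-jucq}, which replaces each fragment $f\in C$ by an auxiliary atom $\Aux_f(\vec{x}_f)$, introduces a view $U_f$ holding the unfolding $\unf(\fragu(\vec{x}_f),\M)$ of that fragment's UCQ rewriting, and sets $\unf(q_C,\M) = \unf(\qAux{q}_C, \MAux{M})$ with $\qAux{q}_C(\vec{x}) \leftarrow \bigwedge_{f\in C}\Aux_f(\vec{x}_f)$. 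This rewriting is the whole point of the construction: it collapses the JUCQ $q_C$ into a single conjunctive query over the fresh auxiliary predicates, to which the already-proved unfolding theorem applies.

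Next I would note that $\qAux{q}_C$ is a UCQ with a single disjunct over the signature of $\MAux{M}$, so Theorem~\ref{t:sound-compl-unf} applies directly and yields $\unf(\qAux{q}_C, \MAux{M})^\D = \qAux{q}_C(\vec{x})^{\A_{(\MAux{M},\D)}}$. Combined with the previous step, this leaves exactly one equality to prove, $q_C^{\A_{(\M,\D)}} = \qAux{q}_C(\vec{x})^{\A_{(\MAux{M},\D)}}$, after which transitivity of $=$ closes the chain. The supporting observation I would use here is that the auxiliary virtual ABox $\A_{(\MAux{M},\D)}$ populates each $\Aux_f$ with $U_f^\D = \unf(\fragu(\vec{x}_f),\M)^\D$, and a second appeal to Theorem~\ref{t:sound-compl-unf} identifies this with $\fragu(\vec{x}_f)^{\A_{(\M,\D)}}$, i.e.\ with the answers of the fragment's rewriting over the original virtual ABox.

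The step I expect to be the real obstacle --- the one likely to be dismissed as routine --- is the equality $q_C^{\A_{(\M,\D)}} = \qAux{q}_C(\vec{x})^{\A_{(\MAux{M},\D)}}$. My plan for it is to evaluate $q_C = \bigwedge_{f\in C}\fragu(\vec{x}_f)$ over $\A_{(\M,\D)}$ and use the distributivity of join over union to push the outer join inside the unions, so that the resulting expression coincides with joining the per-fragment answer sets $\fragu(\vec{x}_f)^{\A_{(\M,\D)}}$ computed above, which is precisely what evaluating $\bigwedge_{f\in C}\Aux_f(\vec{x}_f)$ against $\A_{(\MAux{M},\D)}$ does. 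Two details demand care. First, the answer variables $\vec{x}_f$ must carry the existential variables shared between fragments (as guaranteed by the fragment query definition), for otherwise the join in $\qAux{q}_C$ would fail to re-couple the fragments on their shared positions. Second, because unfolding substitutes function symbols of $\M$ into the fragment answers (Definition~\ref{d:unf-cq}), the join is performed over constructed individuals, so one must check that equality of these constructed terms matches equality of the individuals they denote in $\A_{(\M,\D)}$; granting this, the two joins agree tuple-by-tuple and the equality follows.
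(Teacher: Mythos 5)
Your proposal is correct and follows essentially the same route as the paper: unpack $\unf(q_C,\M)$ via Definition~\ref{d:unf-jucq}, apply Theorem~\ref{t:sound-compl-unf} to the auxiliary CQ over $\MAux{\M}$, establish the bridging equality $q_C^{\A_{(\M,\D)}} = \qAux{q}_C^{\A_{(\MAux{\M},\D)}}$ via distributivity of join over union, and conclude by transitivity. In fact, your treatment of the bridging step --- making explicit the second appeal to Theorem~\ref{t:sound-compl-unf} to identify $U_f^\D$ with $\fragu(\vec{x}_f)^{\A_{(\M,\D)}}$, and noting the role of shared existential variables in the fragments' answer variables --- spells out precisely what the paper dismisses as ``not hard to verify.''
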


The above unfolding characterization for JUCQs corresponds to a translation containing SQL joins over URIs resulting from the application of function symbols to database values, in a similar fashion as in Example~\ref{e:obda-jucq-2}, rather than over (indexed) database values directly. In general, such joins cannot be evaluated efficiently by RDBMSs~\cite{RoRe15}.

In the remainder of this section we propose a solution to this issue by
defining an alternative characterization for the unfolding of a JUCQ, that
ensures that joins are always performed on database values. To do so, we first
need to introduce a number of auxiliary notions and lemmas.

\begin{definition}[Equivalent Sets of Mappings]\label{c:prel:d:equiv-mappings}
  Two sets of mapping assertions $\M$ and $\M'$ are said to be
  \emph{equivalent}, denoted as $\M \equiv \M'$, if
  \[
    \text{For each database instance $\D$ of $\Sigma$, it holds that } \A_{(\M,
     \D)} = \A_{(\M', \D)}
  \]
\end{definition}

A direct consequence of
Theorem~\ref{t:sound-compl-unf} is that unfoldings with respect to equivalent
mappings must be equivalent.
\begin{lemma}{\label{t:rev-sound-compl-unf}}
  Let $\M'$ be a set of mappings such that $\M' \equiv \M$. Then, for every UCQ $q$, it holds
  \[\unf(q, \M) \equiv \unf(q, \M')\]
\begin{proof}
    \noindent We prove the contrapositive of the statement.

    \noindent Without loss of generality, let $a$ be such that $a \in \unf(q, \M)^\D$, and
    $a \notin \unf(q, \M')^\D$, for some data instance $\D$ of $\Sigma$.
    \noindent For Theorem~\ref{t:sound-compl-unf}, the above implies that
    \[a \in q^{\A_{(\M, \D)}}, a \notin q^{\A_{(\M', \D)}}.\]
    \noindent Hence, $\A_{(\M, \D)} \neq A_{(\M', \D)}$, and therefore $\M \not \equiv \M'$.
  \end{proof}
\end{lemma}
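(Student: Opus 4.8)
The plan is to derive the statement directly from the semantic characterisation of unfoldings in Theorem~\ref{t:sound-compl-unf} together with the definition of mapping equivalence. First I would make explicit what the conclusion means: since $\unf(q,\M)$ and $\unf(q,\M')$ are (Datalog) queries, writing $\unf(q,\M) \equiv \unf(q,\M')$ asserts ordinary query equivalence, i.e.\ that $\unf(q,\M)^\D = \unf(q,\M')^\D$ holds for every database instance $\D$ of $\Sigma$. Thus it suffices to fix an arbitrary such $\D$ and show that the two evaluations coincide.

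Next I would apply Theorem~\ref{t:sound-compl-unf} to each side. Since $q$ is a UCQ, the theorem yields $\unf(q,\M)^\D = q^{\A_{(\M,\D)}}$ and, applied to $\M'$, $\unf(q,\M')^\D = q^{\A_{(\M',\D)}}$. This reduces the goal to proving $q^{\A_{(\M,\D)}} = q^{\A_{(\M',\D)}}$, i.e.\ that evaluating the fixed query $q$ over the two virtual ABoxes gives the same answers.

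Finally I would invoke the hypothesis $\M' \equiv \M$: by Definition~\ref{c:prel:d:equiv-mappings} this means precisely that $\A_{(\M,\D)} = \A_{(\M',\D)}$ for every $\D$, so in particular the two virtual ABoxes exposed by our fixed $\D$ are the very same set of assertions. Evaluating one and the same query over identical databases produces identical results, which closes the chain of equalities and, since $\D$ was arbitrary, establishes $\unf(q,\M) \equiv \unf(q,\M')$.

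There is essentially no hard step here: the argument is a two-step substitution chaining Theorem~\ref{t:sound-compl-unf} with Definition~\ref{c:prel:d:equiv-mappings}. The only point that warrants a moment's care is that Theorem~\ref{t:sound-compl-unf} is stated for UCQs, which is exactly the class to which the hypothesis restricts $q$, so both applications are legitimate. Equivalently, one may argue by contraposition --- assuming some answer $a$ witnesses $\unf(q,\M)^\D \neq \unf(q,\M')^\D$, Theorem~\ref{t:sound-compl-unf} forces $\A_{(\M,\D)} \neq \A_{(\M',\D)}$ and hence $\M \not\equiv \M'$ --- but the direct version above is equally short.
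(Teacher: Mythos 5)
Your proof is correct and uses exactly the same ingredients as the paper's: Theorem~\ref{t:sound-compl-unf} to identify each unfolding's answers with the answers of $q$ over the corresponding virtual ABox, and Definition~\ref{c:prel:d:equiv-mappings} to equate those two ABoxes. The paper merely phrases this identical argument in contrapositive form (a variant you yourself note at the end), so the two proofs coincide in substance.
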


\begin{definition}[Restriction of a Mapping to a Signature]\label{d:restriction-mapping-to-template}

  Let $(L,\vec{f}) \in \sign(\M)$ be a signature in $\M$. Then, the
  \emph{restriction $\M|_{(L,\vec{f})}$ of $\M$ with respect to the
   signature $(L,\vec{f})$} is the set $\setB{m \in \M}{\sign(m) = (L,\vec{f})}$ of mappings.
\end{definition}

  \begin{definition}[Wrap of a Mapping]\label{d:wrap-mapping}
  Let the set $\setB{L(\vec{f}(\vec{v}_i)) \leftsquigarrow V_i(\vec{v}_i)}{1\le i\le n}$ be the restriction $\M|_{(L, \vec{f})}$ of $\M$ with respect to
  the signature $(L, \vec{f})$, and $\vec{f}(\vec{v})$ a tuple of terms over fresh variables $\vec{v}$. Then, the
  \emph{wrap of $\M|_{(L, \vec{f})}$} is the (singleton) set
  $\wrap(\M|_{(L, \vec{f})}) = \set{L(\vec{f}(\vec{v})) \leftsquigarrow
    W(\vec{v})}$ of mappings, where $W$ is a fresh view name for the Datalog query $(W(\vec{v}), \set{W(\vec{v}_i) \leftarrow V_i(\vec{v}_i) \mid 1\le i\le n})$
  with answer variables $\vec{v}$.

  The \emph{wrap of $\M$} is the set $\wrap(\M) = \bigcup_{(L,
    \vec{f}) \in \sign(\M)} \wrap(\M|_{(L, \vec{f})})$ of mappings.
\end{definition}

\begin{example}
  Consider the following set of mapping assertions
  \[\M|_{(A, f)} = \hspace{-1.5ex}
  \begin{mappings}
    \Ldelim{2} & A(f(x,y)) & \leftsquigarrow V_1(x,y) & \Rdelim{2} \\
    & A(f(a,b)) & \leftsquigarrow V_2(a,b)
  \end{mappings}
  \]
Let $v,v'$ be fresh variables. Then, $\wrap(\M|_{(A, f)}) = \set{A(f(v,v'))
 \leftsquigarrow W(v,v')}$, where $W(v,v')$ is the Datalog query
\[(W(v,v'), \set{W(x,y) \leftarrow V_1(x, y), W(a,b) \leftarrow V_2(a, b)})\]
\end{example}

\begin{lemma}\label{l:wrap-template}
Given a signature $(L, \vec{f}) \in \sign(\M)$, the following equivalence holds:

  \[\M|_{(L, \vec{f})} \equiv \wrap(\M|_{(L, \vec{f})})\]
  \begin{proof}
    Let $\M|_{(L, \vec{f})}$ be the set $\setB{L(\vec{f}(\vec{v}_i)) \leftsquigarrow V_i(\vec{v}_i)}{i = 1, \ldots, n}$. Then, for every data instance $\D$,
    the virtual ABox $\A_{(\M|_{(L, \vec{f})}, \D)}$ can be written as:
    \begin{equation}\label{eq:wrap-template-1:plan}
    \setB{L(\vec{f}(\vec{o}))}{\exists i\in\{1,\ldots,n\}: \vec{o} \in V_i(\vec{v}_i)^\D}.
    \end{equation}

    The Set~\eqref{eq:wrap-template-1:plan} can equivalently be rewritten as:

    \begin{equation}\label{eq:wrap-template-2:plan}
      \setB{L(\vec{f}(\vec{o}))}{\vec{o} \in (W(\vec{v}), \setB{W(\vec{v}_i) \leftarrow V_i(\vec{v}_i)}{i = 1, \ldots, n})^\D},
    \end{equation}
    for some fresh view symbol $W$. The thesis follows by observing that the Set~\eqref{eq:wrap-template-2:plan} corresponds to the definition of the virtual ABox $\A_{(\wrap(\M|_{(L, \vec{f})}), \D)}$.
\end{proof}
\end{lemma}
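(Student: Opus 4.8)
The plan is to unfold both virtual ABoxes directly from the definition of $\A_{(\cdot, \D)}$ and to check that they coincide for an arbitrary but fixed database instance $\D$ of $\Sigma$; since $\D$ is arbitrary, Definition~\ref{c:prel:d:equiv-mappings} then yields the equivalence. First I would write the restriction as $\M|_{(L, \vec{f})} = \setB{L(\vec{f}(\vec{v}_i)) \leftsquigarrow V_i(\vec{v}_i)}{1 \le i \le n}$. Because every mapping in this set carries the very same target signature $(L, \vec{f})$, the atoms it exposes differ only in their argument tuple, so $\A_{(\M|_{(L, \vec{f})}, \D)}$ is exactly the set of atoms $L(\vec{f}(\vec{o}))$ such that $\vec{o} \in V_i^\D$ for some $i \in \set{1, \ldots, n}$.

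Next I would treat the wrap. By Definition~\ref{d:wrap-mapping}, $\wrap(\M|_{(L, \vec{f})})$ is the singleton $\set{L(\vec{f}(\vec{v})) \leftsquigarrow W(\vec{v})}$, so its virtual ABox is the set of atoms $L(\vec{f}(\vec{o}))$ with $\vec{o} \in W^\D$. The crux of the argument is then to identify the two side conditions, i.e.\ to show that $W^\D = \bigcup_{i=1}^n V_i^\D$. This is precisely the semantics of the Datalog query $(W(\vec{v}), \setB{W(\vec{v}_i) \leftarrow V_i(\vec{v}_i)}{1 \le i \le n})$ introduced in the wrap: a non-recursive program all of whose rules derive the single head predicate $W$, each from one source view $V_i$, evaluates to the union of the $V_i^\D$. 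Substituting this identity into the description of $\A_{(\M|_{(L, \vec{f})}, \D)}$ above turns it verbatim into $\A_{(\wrap(\M|_{(L, \vec{f})}), \D)}$.

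Since the two virtual ABoxes coincide on the arbitrarily chosen $\D$, the equivalence $\M|_{(L, \vec{f})} \equiv \wrap(\M|_{(L, \vec{f})})$ follows. I expect this proof to be short, and the only step that is more than pure bookkeeping is the identity $W^\D = \bigcup_i V_i^\D$; even that is immediate once one observes that $W$ is defined as nothing but the union of the $V_i$, and that the function-symbol tuple $\vec{f}$ plays no role in the source views, so wrapping merely relocates the union from the target side (several mappings that share a signature) to the source side (a single mapping over a union view). The main thing to be careful about is notational: keeping the fresh tuple $\vec{v}$ and the per-view tuples $\vec{v}_i$ aligned through the implicit matching performed in the definition of the virtual ABox.
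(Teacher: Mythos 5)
Your proposal is correct and follows essentially the same route as the paper's proof: both unfold the two virtual ABoxes from the definition and identify them via the observation that the wrap's Datalog view $W$ evaluates to $\bigcup_{i=1}^n V_i^\D$. The only difference is presentational — you state this identity explicitly as the crux, whereas the paper folds it into the step rewriting Set~\eqref{eq:wrap-template-1:plan} as Set~\eqref{eq:wrap-template-2:plan}.
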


\begin{lemma}\label{l:wrap-mapping}
  Let $\M$ be a set of mappings. Then $\M \equiv \wrap(\M)$.
\begin{proof}
  It follows directly from Lemma~\ref{l:wrap-template}, and from the fact that fresh view symbols are introduced for each signature.
\end{proof}
\end{lemma}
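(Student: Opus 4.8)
The plan is to reduce the claim to the already-established Lemma~\ref{l:wrap-template} by exploiting that both the $\wrap$ operation and the virtual ABox construction decompose componentwise along signatures. First I would record the structural observation that the signatures in $\sign(\M)$ induce a partition of $\M$: every mapping assertion has exactly one signature $(L,\vec{f})$, so the restrictions $\M|_{(L,\vec{f})}$ are pairwise disjoint and satisfy $\M = \bigcup_{(L,\vec{f}) \in \sign(\M)} \M|_{(L,\vec{f})}$. This is the bookkeeping that lets me lift the per-signature equivalence to the whole mapping set.

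Next I would show that the virtual ABox construction distributes over this partition. Fix an arbitrary database instance $\D$ for $\Sigma$. Since $\A_{(\M,\D)}$ is by definition a union ranging over the individual mapping assertions of $\M$, grouping those assertions according to their signature yields $\A_{(\M,\D)} = \bigcup_{(L,\vec{f}) \in \sign(\M)} \A_{(\M|_{(L,\vec{f})},\D)}$. Applying the same observation to the definition $\wrap(\M) = \bigcup_{(L,\vec{f}) \in \sign(\M)} \wrap(\M|_{(L,\vec{f})})$ gives $\A_{(\wrap(\M),\D)} = \bigcup_{(L,\vec{f}) \in \sign(\M)} \A_{(\wrap(\M|_{(L,\vec{f})}),\D)}$. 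With both sides written as unions indexed by the same set of signatures, I can compare them termwise.

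Finally I would invoke Lemma~\ref{l:wrap-template} componentwise: for each signature $(L,\vec{f}) \in \sign(\M)$ it gives $\A_{(\M|_{(L,\vec{f})},\D)} = \A_{(\wrap(\M|_{(L,\vec{f})}),\D)}$. Taking the union over all signatures yields $\A_{(\M,\D)} = \A_{(\wrap(\M),\D)}$, and since $\D$ was arbitrary, $\M \equiv \wrap(\M)$ follows from Definition~\ref{c:prel:d:equiv-mappings}. The only genuinely delicate point — and the reason the one-line proof stresses that \emph{fresh view symbols are introduced for each signature} — is the distribution step for $\wrap(\M)$: each $\wrap(\M|_{(L,\vec{f})})$ introduces its own fresh view name $W$ together with a Datalog definition, and freshness guarantees that the view definitions attached to distinct signatures are independent. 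Were view names shared across signatures, the evaluation of $\wrap(\M)$ could conflate the per-signature source queries and the clean union above would fail; the freshness condition is precisely what rules this out, so I would make sure to state it explicitly rather than treat the decomposition as automatic.
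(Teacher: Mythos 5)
Your proposal is correct and follows exactly the route of the paper's own (one-line) proof: decompose $\M$ and $\wrap(\M)$ signature-by-signature, apply Lemma~\ref{l:wrap-template} to each restriction $\M|_{(L,\vec{f})}$, and use the freshness of the introduced view symbols to justify that the per-signature virtual ABoxes combine cleanly by union. You have simply spelled out the details that the paper leaves implicit, including the correct identification of why freshness matters.
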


The $\operatorname{wrap}$ operation produces a set of mappings in which there do not exist two mappings sharing the same signature. Hence, it groups the mappings for a signature into a single mapping.The next lemma formalizes this property.

\begin{lemma}\label{l:wrap-template-sharing}
  Consider the wrap $\wrap(\M)$ of the set $\M$ of mappings. Then, for every mapping $m, m' \in \wrap(\M)$, it holds:
  \[\text{If }\sign(m) = \sign(m')\text{, then } m = m'\]
\end{lemma}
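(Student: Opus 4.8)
The plan is to peel back the definition of $\wrap(\M)$ as a union of singletons indexed by the distinct signatures occurring in $\sign(\M)$, and then to observe that wrapping preserves the signature of the restriction it is applied to. Once these two facts are in place, the claim reduces to the elementary observation that two mappings with a common signature must come from the same singleton, which by construction contains a single mapping.

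Concretely, I would first recall from Definition~\ref{d:wrap-mapping} that $\wrap(\M) = \bigcup_{(L, \vec{f}) \in \sign(\M)} \wrap(\M|_{(L, \vec{f})})$, and that for each signature $(L, \vec{f})$ the set $\wrap(\M|_{(L, \vec{f})})$ is the \emph{singleton} $\set{L(\vec{f}(\vec{v})) \leftsquigarrow W(\vec{v})}$, where $W$ is a fresh view name obtained by combining the source views of the mappings in $\M|_{(L, \vec{f})}$. The key structural remark is then that the unique mapping in this singleton has target atom $L(\vec{f}(\vec{v}))$, whose predicate name is $L$ and whose tuple of function symbols is $\vec{f}$; hence its signature is exactly $(L, \vec{f})$. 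In other words, the wrap of a restriction with signature $(L, \vec{f})$ is again a mapping of signature $(L, \vec{f})$: wrapping replaces only the source part (a single combined view $W$) and leaves the target's predicate name and function-symbol tuple untouched.

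With this established, the proof concludes as follows. Take $m, m' \in \wrap(\M)$ with $\sign(m) = \sign(m') = (L, \vec{f})$. By the indexing of the union, each of $m$ and $m'$ belongs to some term $\wrap(\M|_{(L', \vec{f'})})$ with $(L', \vec{f'}) \in \sign(\M)$, and by the structural remark the signature of any mapping in that term equals $(L', \vec{f'})$. Since $\sign(m) = \sign(m') = (L, \vec{f})$, both $m$ and $m'$ must therefore originate from the same term $\wrap(\M|_{(L, \vec{f})})$ (any other index $(L', \vec{f'}) \neq (L, \vec{f})$ would yield a mapping whose signature differs from $(L, \vec{f})$). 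As $\wrap(\M|_{(L, \vec{f})})$ is a singleton, it follows that $m = m'$.

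I do not expect any genuine obstacle here; the lemma is essentially bookkeeping about the wrap construction. The only point deserving explicit care is the verification that wrapping preserves the signature, i.e.\ that the fresh combined source view $W$ does not alter the target predicate name or the function-symbol tuple $\vec{f}$ of the mapping. Once that is spelled out, the correspondence between a signature and its wrapped singleton is manifestly injective on signatures, and the stated implication is immediate.
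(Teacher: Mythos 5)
Your proof is correct and matches the paper's argument: the paper simply states that the lemma ``follows directly from the definition of the $\wrap$ operation,'' and your write-up is precisely that argument spelled out --- $\wrap(\M)$ is a union of signature-indexed singletons, wrapping preserves the signature, so equal signatures force membership in the same singleton. Nothing is missing; you have just made explicit what the paper leaves implicit.
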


\begin{proof}
It follows directly from the definition of the $\wrap$ operation.
\end{proof}

 We now introduce an operation that \emph{splits} a mapping according to the function symbols adopted on its source part.

\begin{definition}[Split]\label{d:splitted-mapping}
  Let $m =L(\vec{x}) \leftsquigarrow U(\vec{x})$ be a mapping where $U$ is a
  view name for a NRLP query
  $(U(\vec{x}), \set{U(\vec{f}_i(\vec{x}_i)) \leftarrow V_i(\vec{x}_i) \mid
   1\le i\le n})$. Then, the \emph{split of $m$} is the set
  $\split(m) = \setB{L(\vec{f}_i(\vec{x}_i)) \leftsquigarrow
   V_i(\vec{x}_i)}{1\le i\le n}$ of mappings.  For a mapping $m'$ not of the
  form above, the split of $m'$ is defined as the set $\set{m'}$, that is,
  $\split(m') = \set{m'}$.  We denote by $\split(\M)$ the \emph{split} of the
  set $\M$ of mappings, which is defined as the set
  $\bigcup_{m \in \M} \split(m)$ of mappings.
\end{definition}

\begin{example}{\label{e:splitted-mapping}}
  Consider a mapping $m = A(x) \leftsquigarrow U(x)$, where $U(x) = (U(x), \Pi)$  and
  \[\Pi =
  \begin{mappings}
    \Ldelim{3} & U(h(a,b)) & \leftarrow V_1(a,b) & \Rdelim{3}\\
    & U(h(c,d)) & \leftarrow V_2(c,d)\\
    & U(i(e,f,g)) & \leftarrow V_3(e,f,g).
  \end{mappings}
  \]
  \noindent Then,
  \[\split(\M) =
  \begin{array}{l l l l}
    \ldelim\{{3}{3mm} & A(h(a,b)) & \leftsquigarrow V_1(a,b) & \rdelim\}{3}{3mm} \\
    & A(h(c,d)) & \leftsquigarrow V_2(c,d)\\
    & A(i(e,f,g)) & \leftsquigarrow V_3(e,f,g).
  \end{array}
  \]
\end{example}

\begin{lemma}\label{l:split-mapping}
  Let $m$ be a mapping. Then $\set{m} \equiv \split(\set{m})$.
\end{lemma}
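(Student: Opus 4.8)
The plan is to unfold both sides of the claimed equivalence to their virtual ABoxes and verify, instance by instance, that they coincide, exactly as in the mirror-image argument of Lemma~\ref{l:wrap-template}. First I would dispose of the degenerate case: if $m$ is not of the form $L(\vec{x}) \leftsquigarrow U(\vec{x})$ with $U$ a view name for a non-recursive query carrying function symbols in its rule heads, then by Definition~\ref{d:splitted-mapping} we have $\split(\set{m}) = \set{m}$, so $\set{m} \equiv \split(\set{m})$ holds trivially by reflexivity of $\equiv$.

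For the main case, write $m = L(\vec{x}) \leftsquigarrow U(\vec{x})$, where $U$ is the view name for the query $(U(\vec{x}), \set{U(\vec{f}_i(\vec{x}_i)) \leftarrow V_i(\vec{x}_i) \mid 1\le i\le n})$, and fix an arbitrary database instance $\D$ of $\Sigma$. The key step is to make the semantics of the intermediate view explicit: since the defining query is non-recursive and never produces nested terms, its answer set over $\D$ is the union of the per-rule contributions, namely $U(\vec{x})^\D = \bigcup_{i=1}^n \setB{\vec{f}_i(\vec{b})}{\vec{b} \in V_i(\vec{x}_i)^\D}$, i.e.\ each rule contributes the image of $V_i(\vec{x}_i)^\D$ under the tuple of function symbols $\vec{f}_i$. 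Substituting this into the definition of the virtual ABox (using $V_m^\D = q_m^\D$), the ABox exposed by the single mapping is $\A_{(\set{m},\D)} = \setB{L(\vec{o})}{\vec{o}\in U(\vec{x})^\D} = \setB{L(\vec{f}_i(\vec{b}))}{1\le i\le n,\ \vec{b}\in V_i(\vec{x}_i)^\D}$.

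Next I would compute the right-hand side directly. By Definition~\ref{d:splitted-mapping}, $\split(\set{m}) = \setB{L(\vec{f}_i(\vec{x}_i)) \leftsquigarrow V_i(\vec{x}_i)}{1\le i\le n}$, and the virtual ABox of a set of mappings is the union of the contributions of its members, so $\A_{(\split(\set{m}),\D)} = \bigcup_{i=1}^n \setB{L(\vec{f}_i(\vec{b}))}{\vec{b}\in V_i(\vec{x}_i)^\D}$, which is the very same set obtained above. Since $\D$ was arbitrary, the equality of virtual ABoxes gives $\set{m} \equiv \split(\set{m})$ by Definition~\ref{c:prel:d:equiv-mappings}. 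The only genuinely delicate point is the first computation, namely reading off $U(\vec{x})^\D$ from the Datalog definition of the view and pushing the function-symbol applications to the outside; once that evaluation is written out explicitly, the two sides match syntactically and nothing further is needed.
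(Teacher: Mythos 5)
Your proposal is correct and follows essentially the same route as the paper's own proof: both dispose of the degenerate case where $\split(\set{m})=\set{m}$, then fix an arbitrary instance $\D$, expand the virtual ABox $\A_{(\set{m},\D)}$ by unfolding the per-rule semantics of the view $U$, and observe that the resulting set is syntactically the virtual ABox $\A_{(\split(\set{m}),\D)}$. Your explicit rewriting of $U(\vec{x})^\D$ as the union of the images of the $V_i(\vec{x}_i)^\D$ under the $\vec{f}_i$ is exactly the step the paper performs when it replaces membership in $U(\vec{x})^\D$ by the existence of a rule in $\Pi$ whose body holds, so there is no substantive difference.
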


\begin{proof}
    The thesis trivially holds if $m$ is of a form such that $\set{m} = \split(m)$.
    If $\set{m} \neq \split(m) $, then $m$ must be of the form $L(\vec{x}) \leftsquigarrow U(\vec{x})$, in which $U$ is a view name for a NRLP query $(U(\vec{x}), \Pi)$, where $\Pi = \set{U(\vec{f}_i(\vec{x}_i)) \leftarrow V_i(\vec{x}_i) \mid 1\le i\le n}$. By Definition~\ref{c:prel:d:equiv-mappings}, we need to prove that, for each instance $\D$ of $\S$, $\A_{(\set{m}, \D)} = \A_{(split(\set{m}), \D)}$. Let $\D$ be an arbitrary database instance. By definition of virtual ABox:

    \[A_{(\set{m}, \D)} = \setB{L(\vec{f}_i(\vec{a}_i))}{\vec{f}_i(\vec{a}_i) \in U(\vec{x})^\D}\]

    The set above is equal to

    \begin{align*}
      & \setB{L(\vec{f}_i(\vec{a}_i))}{\exists (q(\vec{f}_i(\vec{x}_i)) \leftarrow V_i(\vec{x}_i)) \in \Pi \text{ s.t.\ } \vec{a}_i \in V_i(\vec{x}_i)^\D} \\
      = & \setB{L(\vec{f}_i(\vec{a}_i))}{\exists (L(\vec{f}_i(\vec{x}_i)) \leftsquigarrow V_i(\vec{x}_i)) \in \operatorname{split}(\set{m}) \text{ s.t.\ } \vec{a}_i \in V_i(\vec{x}_i)^\D}
      = \A_{(split(\set{m}), \D)}.
    \end{align*}
  \end{proof}

\begin{corollary}
  \label{cor:split-mapping}
  Let $\M$ be a set of mappings.  Then $\M \equiv \split(\M)$.
\end{corollary}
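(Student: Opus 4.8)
The plan is to lift the single-mapping result of Lemma~\ref{l:split-mapping} to an arbitrary set of mappings, using the fact that the virtual-ABox operator distributes over unions of mapping sets. The whole argument reduces to checking equality of virtual ABoxes for every database instance $\D$, which by Definition~\ref{c:prel:d:equiv-mappings} is exactly what $\M \equiv \split(\M)$ requires.

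First I would record the essentially definitional observation that, for any two sets of mappings $\M_1,\M_2$ and any database instance $\D$ of $\Sigma$,
\[
\A_{(\M_1 \cup \M_2,\, \D)} = \A_{(\M_1,\, \D)} \cup \A_{(\M_2,\, \D)}.
\]
This is immediate from the set-builder definition of $\A_{(\M,\D)}$, whose quantification ranges over the individual mappings of $\M$ and treats each of them independently, so that splitting the set of mappings simply splits the set over which the assertions are collected. By a straightforward induction this extends to any finite union; in particular, writing $\M = \bigcup_{m\in\M}\set{m}$, we obtain
\[
\A_{(\M,\, \D)} = \bigcup_{m\in\M} \A_{(\set{m},\, \D)}.
\]

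Next I would apply the same distributivity to the right-hand side. Since $\split(\M) = \bigcup_{m\in\M}\split(m)$ by Definition~\ref{d:splitted-mapping}, the observation above yields
\[
\A_{(\split(\M),\, \D)} = \bigcup_{m\in\M} \A_{(\split(\set{m}),\, \D)}.
\]
Now Lemma~\ref{l:split-mapping} applies term by term: for each $m\in\M$ we have $\A_{(\set{m},\,\D)} = \A_{(\split(\set{m}),\,\D)}$. Taking the union over all $m\in\M$, the two displayed unions coincide, hence $\A_{(\M,\,\D)} = \A_{(\split(\M),\,\D)}$. As $\D$ was arbitrary, Definition~\ref{c:prel:d:equiv-mappings} gives $\M \equiv \split(\M)$, which is the thesis.

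I expect the only delicate point to be the distributivity step, and in particular making sure it behaves correctly when the splits of distinct mappings overlap or when fresh view symbols introduced during splitting interact. This is not actually an obstacle, because both sides of the equality are \emph{sets} of assertions, so any duplicated assertions produced across different mappings are absorbed identically on the two sides; the argument never relies on the mappings being disjoint. Everything else is a routine chaining of equalities, so once the distributivity of $\A_{(\cdot,\D)}$ over mapping unions is stated cleanly, the corollary follows directly from Lemma~\ref{l:split-mapping}.
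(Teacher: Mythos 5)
Your proposal is correct and follows essentially the same route as the paper's own proof: fix an arbitrary $\D$, decompose $\A_{(\M,\D)}$ as $\bigcup_{m\in\M}\A_{(\set{m},\D)}$, apply Lemma~\ref{l:split-mapping} to each singleton, and recombine via Definition~\ref{d:splitted-mapping}. The only difference is that you state the union-distributivity of the virtual-ABox operator explicitly as a lemma-like observation, whereas the paper treats it as trivial; this is a presentational refinement, not a different argument.
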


\begin{proof}
By Definition~\ref{c:prel:d:equiv-mappings}, we need to prove that, for each
instance $\D$ of $\S$, $\A_{(\M, \D)} = \A_{(split(\M), \D)}$. Let $\D$ be a
data instance for $\S$. The following trivially holds:
\[
  \A_{(\M, \D)} = \bigcup_{m \in \M} \A_{(\set{m}, \D)}.
\]

By Lemma~\ref{cor:split-mapping},
\[
  \bigcup_{m \in \M} \A_{(\set{m}, \D)} = \bigcup_{m \in \M}
  \A_{(\split(\set{m}), \D)}.
\]

By Definition~\ref{d:splitted-mapping}, we obtain
\[
  \bigcup_{m \in \M} \A_{(\split(\set{m}), \D)} = \A_{(\split(\M), \D)}.
\]
\end{proof}

\begin{definition}[Unfolding of a JUCQ -- Type~2]\label{d:unf-opt-jucq}
  Let $\qAux{q}_C$ be a query and $\MAux{\M}$ a set of mappings as in
  Definition~\ref{d:unf-jucq}. Then, the \emph{optimized unfolding}
  $\unfopt(q_C(\vec{x}), \M)$ of $q_C$ with respect to $\M$ is defined as
  $\unf(\qAux{q}_C(\vec{x}), \wrap(\split(\MAux{\M})))$.
\end{definition}

Observe that the optimized unfolding of a JUCQ is a \emph{union of JUCQs} (UJUCQ). Moreover, where each JUCQ produces answers built from a \emph{single} tuple of function symbols, if all the attributes are kept in the answer. The next example, aimed at clarifying the notions introduced so far, illustrates this.

\begin{example}
  Let $q(x,y,z) \leftarrow P_1(x,y), C(x), P_2(x,z)$, and consider a
  cover $\set{f_1, f_2}$ generating fragment queries $q|_{f_1} =
   q(x,y) \leftarrow P_1(x,y), C(x)$ and $q|_{f_2} = q(x,z) \leftarrow
   P_2(x,z)$. Consider the set of mappings
  \[
    \M=
  \left\{\begin{array}{l@{~}l@{~}l@{\qquad\qquad}llll}
      P_1(f(a),g(b)) &\leftsquigarrow& V_1(a,b) &  P_1(f(a), g(b)) & \leftsquigarrow V_2(a,b)\\
     P_1(h(a), i(b)) & \leftsquigarrow& V_3(a,b) & C(f(a)) & \leftsquigarrow V_4(a)\\
     P_2(f(a), k(b)) & \leftsquigarrow& V_5(a,b) & P_2(f(a), h(b)) & \leftsquigarrow V_6(a,b)\\
  \end{array}\right\}
  \]
\emph{Translation I.}  According to Definition~\ref{d:unf-jucq}, the JUCQ $q(x,y,z) \leftarrow q|_{f_1}(x,y), q|_{f_2}(x,z)$ can be rewritten as the auxiliary query $q^{\aux}(x,y,z) = \Aux_1(x,y), \Aux_2(x,z)$ over mappings
  \[
    \MAux{\M} =
    \begin{tabular}{l l \tableColSpace l l }
      \ldelim\{{1}{2mm} & $\Aux_1(x,y) \leftsquigarrow U_1(x,y)$ & $\Aux_2(x,z) \leftsquigarrow U_2(x,z)$ & \rdelim\}{1}{2mm}
    \end{tabular}
  \]
where $U_1$ is a view name for $\unf(q|_{f_1}(x,y), \M) = (U_1(x,y), \Pi_1)$, and $U_2$ is a view name for $\unf(q|_{f_2}(x,z), \M) = (U_2(x,z), \Pi_2)$, such that

\noindent\begin{tabular}{l l}
  \begin{minipage}{.55\linewidth}
    \cmath{
      \Pi_1 =
  \begin{tabular}{l l  l l  }
      \ldelim\{{2}{2mm} &  $U_1(f(a),g(b))$ &  $\leftarrow V_1(a,b), V_4(a)$ & \rdelim\}{2}{2mm}\\
&  $U_1(f(a),g(b))$ &  $\leftarrow V_2(a,b), V_4(a)$
  \end{tabular}
}
\end{minipage}
  \begin{minipage}{.45\linewidth}
    \cmath{\Pi_2 =
  \begin{tabular}{l l  l l }
\ldelim\{{2}{2mm} &  $U_2(f(a),k(b))$ & $\leftarrow V_5(a,b)$ & \rdelim\}{2}{2mm}\\
 & $U_2(f(a),h(b))$ &  $\leftarrow V_6(a,b)$\\
  \end{tabular}
}
  \end{minipage}
\end{tabular}

\noindent \emph{Translation II.} By Definition~\ref{d:splitted-mapping}, we compute the split of $\MAux{\M}$:
\cmath{
\split(\MAux{\M}) =\!\!\!
\begin{tabular}{l l \tableColSpace l l}
  \ldelim\{{2}{2mm} & $\Aux_1(f(a),g(b)) \leftsquigarrow V_1(a,b), V_4(a)$ & $\Aux_2(f(a),k(b)) \leftsquigarrow V_5(a,b)$ &   \rdelim\}{2}{2mm}\\
  & $\Aux_1(f(a),g(b)) \leftsquigarrow V_2(a,b), V_4(a)$ & $\Aux_2(f(a),h(b)) \leftsquigarrow V_6(a,b)$\\
\end{tabular}
}%
By Definition~\ref{d:wrap-mapping}, we compute the wrap of
$\split(\MAux{\M})$:
\cmath{\wrap(\split(\MAux{\M})) =
\begin{tabular}{l l \tableColSpace ll}
  \ldelim\{{2}{2mm} & $\Aux_1(f(a),g(b)) \leftsquigarrow W_{3}(a,b)$ & $\Aux_2(f(a),k(b)) \leftsquigarrow W_{4}(a,b)$ & \rdelim\}{2}{2mm} \\
 &  & $\Aux_2(f(a),h(b)) \leftsquigarrow W_{5}(a,b)$ &
\end{tabular}
}%
where $W_{3}(a,b)$, $W_{4}(a,b)$, $W_{5}(a,b)$
are Datalog queries whose programs are respectively
\\[3mm]
\noindent\begin{tabular}{l l}
  \begin{minipage}{.53\linewidth}
    \cmath{\Pi_3 =
  \begin{tabular}{l l l l}
   \ldelim\{{2}{2mm} &      $W_{3}(a,b)$ & $\leftarrow V_1(a,b), V_4(a)$ & \rdelim\}{2}{2mm}\\
 &     $W_{3}(a,b)$ & $ \leftarrow V_2(a,b), V_4(a) $
  \end{tabular}
}
  \end{minipage}
  \begin{minipage}{.45\linewidth}
\vspace{-3mm}
  \begin{tabular}{l l l l}
    $\Pi_4$ =  \ldelim\{{1}{2mm}   & $W_{4}(a,b)$ & $\leftarrow V_5(a,b)$ & \rdelim\}{1}{2mm}\\[1mm]
    $\Pi_5$ =  \ldelim\{{1}{2mm}   & $W_{5}(a,b)$ & $\leftarrow V_6(a,b)$ & \rdelim\}{1}{2mm}\\
  \end{tabular}
  \end{minipage}
\end{tabular}

\noindent Finally, by Definition~\ref{d:unf-opt-jucq} we compute the optimized unfolding of $q_C$ with respect to $\M$:
\cmath{\unfopt(q_C(x,y,z), \M) =\unf(\qAux{q}(x,y,z), \wrap(\split(\MAux{\M}))) = (\qAux{q}_{\unf}(x,y,z), \Pi_{\unf}) }%
where
\cmath{
  \Pi_{\unf} =
  \begin{tabular}{l l  l}
     \ldelim\{{2}{2mm} & $\qAux{q}_{\unf}(f(a),g(b),k(b')) \leftarrow
      W_3(a,b), W_4(a,b')$ &     \rdelim\}{2}{2mm} \\
 & $\qAux{q}_{\unf}(f(a),g(b),h(b')) \leftarrow W_3(a,b),
  W_5(a,b')$ &
 \end{tabular}
}%
Observe that $\unfopt(q_C(x,y,z), \M)$ is a UJUCQ. Moreover, each of the
two JUCQs in $\qAux{q}_{\unf}$ contributes with answers built out of a
specific tuple of function symbols.
\qedfull
\end{example}

\begin{theorem}[Translation~2]
  The query $\unfopt(q_C, \M)$ is an $\M$-translation of $q_C$.
\begin{proof}

  By Definition~\ref{d:unf-opt-jucq},
  \[\unf_{opt}(q_C(\vec{x}), \M) = \unf(\qAux{q}_C(\vec{x}), \wrap(\split(\MAux{M}))).\]
\noindent By Lemma~\ref{l:wrap-mapping} and Corollary~\ref{cor:split-mapping},
  $\wrap(\split(\MAux{M})) \equiv \MAux{M}$. Hence, by
  Lemma~\ref{t:rev-sound-compl-unf},

  \[\unf(\qAux{q}_C(\vec{x}), \wrap(\split(\MAux{M}))) \equiv \unf(\qAux{q}_C(\vec{x}), \MAux{M}).\]

  \noindent The thesis is proved by observing that, according to
  Theorem~\ref{t:sound-compl-unf-jucq}, $\unf(\qAux{q}_C(\vec{x}), \MAux{M})$ is an $\M$-translation of $q_C$.
\end{proof}
\end{theorem}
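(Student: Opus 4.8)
The plan is to show that $\unfopt(q_C,\M)$ has exactly the same evaluation as the Type-1 unfolding $\unf(q_C,\M)$ on every database, and then to inherit the translation property directly from Theorem~\ref{t:sound-compl-unf-jucq}. I would begin by expanding the two definitions so that the comparison reduces to the mapping sets alone. By Definition~\ref{d:unf-opt-jucq}, $\unfopt(q_C(\vec{x}),\M) = \unf(\qAux{q}_C(\vec{x}), \wrap(\split(\MAux{M})))$, whereas by Definition~\ref{d:unf-jucq} the Type-1 unfolding is $\unf(q_C,\M) = \unf(\qAux{q}_C(\vec{x}), \MAux{M})$. Both unfold the single auxiliary query $\qAux{q}_C(\vec{x}) \leftarrow \bigwedge_{f\in C}\Aux_f(\vec{x}_f)$, and differ only in the mapping set supplied, so it suffices to compare $\wrap(\split(\MAux{M}))$ with $\MAux{M}$.

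The next step is to establish the mapping equivalence $\wrap(\split(\MAux{M})) \equiv \MAux{M}$. Instantiating Corollary~\ref{cor:split-mapping} on the set $\MAux{M}$ gives $\MAux{M} \equiv \split(\MAux{M})$, and instantiating Lemma~\ref{l:wrap-mapping} on the set $\split(\MAux{M})$ gives $\split(\MAux{M}) \equiv \wrap(\split(\MAux{M}))$. I would then chain these two, which requires observing that $\equiv$ is transitive; this is immediate from Definition~\ref{c:prel:d:equiv-mappings}, since $\equiv$ is defined by equality of the virtual ABoxes $\A_{(\cdot,\D)}$ over all $\D$, and set-equality is transitive.

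With the mapping equivalence in hand, I would invoke Lemma~\ref{t:rev-sound-compl-unf}, taking the two mapping sets to be $\MAux{M}$ and $\wrap(\split(\MAux{M}))$. Here the one point demanding care is the applicability check: the query being unfolded, $\qAux{q}_C(\vec{x}) \leftarrow \bigwedge_{f\in C}\Aux_f(\vec{x}_f)$, is a single CQ over the auxiliary predicates, hence a (degenerate) UCQ, so the lemma applies and yields $\unf(\qAux{q}_C(\vec{x}), \wrap(\split(\MAux{M}))) \equiv \unf(\qAux{q}_C(\vec{x}), \MAux{M})$, i.e.\ the two unfoldings agree on every database $\D$ for $\Sigma$. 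Finally, by Theorem~\ref{t:sound-compl-unf-jucq} the right-hand side is an $\M$-translation of $q_C$, so $\unf(\qAux{q}_C(\vec{x}), \MAux{M})^\D = q_C^{\A_{(\M,\D)}}$ for every $\D$; combining this with the equality of evaluations just established gives $\unfopt(q_C,\M)^\D = q_C^{\A_{(\M,\D)}}$ for every $\D$, which is precisely the definition of an $\M$-translation.

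I expect no substantive obstacle here: the statement is essentially a transitivity argument threading together previously proved results. All the genuine difficulty — the treatment of function symbols in the head, the distributivity of joins over unions, and the correctness of the $\split$ and $\wrap$ constructions — has already been discharged in Theorem~\ref{t:sound-compl-unf-jucq} and in Lemmas~\ref{l:wrap-mapping} and the split lemmas underlying Corollary~\ref{cor:split-mapping}. The only places warranting attention are the correct instantiation of the equivalence results on $\MAux{M}$ (rather than on $\M$ itself) and the verification that $\qAux{q}_C$ qualifies as a UCQ so that Lemma~\ref{t:rev-sound-compl-unf} is applicable.
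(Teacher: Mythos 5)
Your proposal is correct and follows essentially the same route as the paper's own proof: expand $\unfopt$ via Definition~\ref{d:unf-opt-jucq}, establish $\wrap(\split(\MAux{\M})) \equiv \MAux{\M}$ from Corollary~\ref{cor:split-mapping} and Lemma~\ref{l:wrap-mapping}, transfer this to equivalence of unfoldings via Lemma~\ref{t:rev-sound-compl-unf}, and conclude by Theorem~\ref{t:sound-compl-unf-jucq}. Your added care points (explicit transitivity of $\equiv$ and checking that $\qAux{q}_C$ qualifies as a UCQ so the lemma applies) are details the paper leaves implicit, but they do not change the argument.
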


\section{Unfolding Cardinality Estimation}\label{c:planning:s:unf-est}

For convenience, in this section, we use relational algebra notation~\cite{AbHV95} for CQs.
To deal with multiple occurrences of the same predicate in a CQ, the
corresponding algebra expression would contain renaming operators.  However, in
our cardinality estimations we need to understand when two attributes actually
refer to the same relation, and this information is lost in the presence of
renaming.  Instead of introducing renaming, we first explicitly replace
multiple occurrences of the same predicate name in the CQ by aliases (under the
assumption that aliases for the same predicate name are interpreted as the same
relation). Specifically, we use alias $\occ[i]{V}$ to represent the $i$-th
occurrence of predicate name $V$ in the CQ.  Then, when translating the aliased
CQ to algebra, we use \emph{fully qualified attribute names} (i.e., each
attribute name is prefixed with the (aliased) predicate name).  So, to
reconstruct the relation name $V$ to which an attribute $\occ[i]{V}.x$ refers,
it suffices to remove the occurrence information $\occ[i]{}$ from the prefix
$\occ[i]{V}$.  When the actual occurrence of $V$ is not relevant, we use
$\occ{V}$ to denote the alias.

We first consider the restricted case of CQs, which we call \emph{basic CQs},
whose algebra expression is of the form
\[E = \occ{V}^0 \Join_{\theta_1} \occ{V}^1
  \Join_{\theta_2}\cdots\Join_{\theta_n} \occ{V}^n,\]%
where, the $V^i$s denote predicate names, and for each $i\in\{1,\ldots,n\}$,
the join condition $\theta_i$ is a single equality, i.e., of the form
$\occ{V}^j.\vec{x}=\occ{V}^i.\vec{y}$, for some $j<i$.  We then extend our
cardinality estimation to arbitrary CQs, allowing for projections and arbitrary
joins.

Given a basic CQ $E$ as above, we denote by $E^{(m)}$, for $1\le m\le n$, the
sub-expression of $E$ up to the $m$-th join operator, namely
$E^{(m)}= \occ{V}^0 \Join_{\theta_1} \occ{V}^1
\Join_{\theta_2}\cdots\Join_{\theta_m} \occ{V}^m$.

\smallskip

In the following, in addition to an OBDA specification, we also fix a database
instance $\D$ for $\Sigma$.
We use $V$ and $W$ to denote relation names (with an
associated relation schema) in the virtual schema $\M_\Sigma$, whose
associated relations consist of (multi)sets of labeled tuples (see the
\emph{named perspective} in \cite{AbHV95}). Given a relation $S$, we
denote by $\distSet{S}$ the number of (distinct) tuples in $S$, by
$\restSet{S}{L}$ the \emph{projection} of $S$ over attributes
$L$ (under \emph{set-semantics}), by
$\restBag{S}{L}$ the \emph{restriction} of $S$ over attributes
$L$ (under \emph{bag-semantics}), and by
$\restSet{S_1}{L_1}\capren\restSet{S_2}{L_2}$ intersection of relations
disregarding attribute names, i.e.,
$\restSet{S_1}{L_1}\cap \rho_{L_2\mapsto L_1}(\restSet{S_2}{L_2})$.
We also use the classical notation $P(\alpha)$ to denote the probability
that an event $\alpha$ happens.

\subsection{Background on Cardinality Estimation}

In this subsection we list a number of assumptions that are commonly made by models of cardinality estimations proposed in the database literature (e.g., see~\cite{SiKS05}). Some of these assumptions will be maintained also in our cardinality estimator, while others will be relaxed or dropped due to the additional information given by the structure of the mappings and the ontology, which is not available in a traditional database setting.

\myPar{Uniform distribution in the interval.} This assumption holds when values are uniformly distributed across one interval. Formally:
\begin{equation}\label{eq:min-max} \tag{a1}
  P(C < v) = (v - min(C)) / (max(C) - min(C))
\end{equation}
  for each value $v$ in a column $C$.

  \myPar{Uniform distribution across distinct values.} This assumption holds when values in a column are uniformly distributed across the range of values. Formally:
  \begin{equation}\label{eq:uniformity} \tag{a2}
    \forall v_1, v_2 \in C.\ P(C = v_1) = P(C = v_2)
  \end{equation}
  for all values $v_1$ and $v_2$ in a column $C$.

  A consequence of this assumption is that, for each relation name $T$ and set of attributes $\vecSet{x}$,
  \[P(\restBag{T}{\vecSet{x}} = \vec{v}) = P(\restSet{T}{\vecSet{x}} = \vec{v}) = 1/\distSet{\restSet{T}{\vecSet{x}}}\text{, }\forall \vec{v} \in \pi_\vecSet{x}(\restSet{T}{\vecSet{x}}).\]
  Hence, \emph{number of repetitions per tuple of elements} can be calculated as
  \begin{equation}{\label{eq:num-rep-per-tuple}}
    \distSet{T} \cdot P(\restBag{T}{\vecSet{x}} = \vec{v}) = \frac{\distSet{T}}{\distSet{\restSet{T}{\vecSet{x}}}}.
  \end{equation}
  From Equation~\eqref{eq:num-rep-per-tuple}, we directly compute the \emph{ratio of distinct tuples over $\vec{x}$ in $T$} as:
  \[\frac{\distSet{\restSet{T}{\vecSet{x}}}}{\distSet{T}}\]
\myPar{Independent Distributions.} This assumption holds when the distributions between different attributes are independent. Formally:
\begin{equation}\label{eq:independence} \tag{a3}
  P(C_1=v_1 \mid C_2 = v2) = P(C_1 = v_1)
\end{equation}
\myPar{Facing Values.} This assumption holds when attributes join ``as much as possible''. That is, given a join $T \Join_{\vec{x} = \vec{y}} S$, it is assumed that
\begin{equation}\label{eq:facing-assumption} \tag{a4}
  \distSet{\restSet{T}{\vecSet{x}} \cap \restSet{S}{\vecSet{y}}} = min(\distSet{\restSet{T}{\vecSet{x}}},\distSet{\restSet{S}{\vecSet{y}}}).
\end{equation}

Under these assumptions, the cardinality of a join
$V \Join_{\vec{x}=\vec{y}} W$ can be estimated~\cite{SwSc94} to be:
\begin{equation}
  \label{eq:std-estimator}
  \small
  \fv(V \Join_{\vec{x}=\vec{y}} W) \cdot
  \frac{\distSet{V^{\D}}}{\distEst{V}{\vec{x}}} \cdot
  \frac{\distSet{W^{\D}}}{\distEst{W}{\vec{y}}}
\end{equation}
where $\fv(V \Join_{\vec{x}=\vec{y}} W)$ is an estimation of the number of distinct values satisfying the
join condition $V \Join_{\vec{x}=\vec{y}} W$ (i.e., $\fv$ estimates
$\distSet{\restSet{V^{\D}}{\vecSet{x}} \capren
  \restSet{W^{\D}}{\vecSet{y}}}$,
and $\distEst{V}{\vec{x}}$ (resp.,
$\distEst{W}{\vec{y}}$) corresponds to the estimation of
$\distSet{\restSet{V^{\D}}{\vec{x}}}$ (resp.,
$\distSet{\restSet{W^{\D}}{\vec{y}}}$), both calculated according to the
aforementioned assumptions.  Note that the fractions such as
$\frac{\distSet{V^{\D}}}{\distEst{V}{\vec{x}}}$ estimate the number of
tuples associated to each value that satisfies the join condition, and derive directly from Assumption~\eqref{eq:uniformity}.

\subsection{Cardinality Estimation of CQs}

\paragraph{Cardinality Estimator.}

Given a basic CQ $E'$, $\cardEst(E')$ estimates the number $\distSet{{E'}^\D}$
of distinct results in the evaluation of $E'$ over $\D$. We define it as
\begin{equation}\label{eq:f}
  \small
  \cardEst(E \Join_{\occ[p]{V}.\vec{x}=\occ[q]{W}.\vec{y}} \occ[q]{W}) =
  \begin{cases}
    \displaystyle
    \left\lceil\frac{\fv(\occ[p]{V} \Join_{\occ[p]{V}.\vec{x}=\occ[q]{W}.\vec{y}} \occ[q]{W}) \cdot \distSet{V^\D} \cdot \distSet{W^\D}}{\distEst{V}{\occ[p]{V}.\vec{x}} \cdot \distEst{W}{\occ[q]{W}.\vec{y}}}\right\rceil, & \text{if $E = V$}\\[1em]
    \displaystyle \left\lceil\frac{\fv(E \Join_{\occ[p]{V}.\vec{x}=\occ[q]{W}.\vec{y}} \occ[q]{W})
        \cdot \cardEst(E) \cdot \distSet{W^\D}}{\distEst{E}{\occ[p]{V}.\vec{x}} \cdot
        \distEst{W}{\occ[p]{V}.\vec{y}}}\right\rceil, & \text{otherwise.}
  \end{cases}
\end{equation}
Our cardinality estimator exploits assumptions~\eqref{eq:uniformity} and~\eqref{eq:independence}
above, and relies on our definitions of the \emph{facing values estimator}
$\fv$ and of the \emph{distinct values estimator} $\operatorname{dist}_\D$,
which are based on additional statistics collected with the help of the
mappings, instead of being based on assumptions~\eqref{eq:min-max} and~\eqref{eq:facing-assumption},
as in Formula~\eqref{eq:std-estimator}.

\paragraph{Facing Values Estimator.}
Given a basic CQ $E' = E \Join_{\occ[p]{V}.\vec{x}=\occ[q]{W}.\vec{y}} \occ[q]{W}$, the
estimation $\fv(E')$ of the cardinality
$\distSet{\restSet{E^\D}{V.\vec{x}} \capren \restSet{W^\D}{W.\vec{y}}}$ is
defined as
\begin{equation}\label{eq:k}
  \small
  \fv(E \Join_{\occ[p]{V}.\vec{x}=\occ[q]{W}.\vec{y}} \occ[q]{W}) =
  \begin{cases}
    \distSet{\restSet{V^\D}{\vecSet{x}} \capren \restSet{W^\D}{\vecSet{y}}}, & \text{if } E = V \\[1em]
    \displaystyle
    \left\lceil \distSet{\restSet{V^\D}{\vecSet{x}} \capren \restSet{W^\D}{\vecSet{y}}} \cdot
      \frac{\distEst{E}{\occ[p]{V}.\vec{x}}}{\distEst{V}{\occ[p]{V}.\vec{x}}}\right\rceil, & \text{otherwise,}
  \end{cases}
\end{equation}
where $\distSet{\restSet{V^\D}{\vecSet{x}} \capren \restSet{W^\D}{\vecSet{y}}}$
is assumed to be a statistic available after having analyzed the mappings
together with the data instance. The fraction
$\frac{\distEst{E}{\occ[p]{V}.\vec{x}}}{\distEst{V}{\occ[p]{V}.\vec{x}}}$ is a scaling factor
relying on assumption~\eqref{eq:uniformity}.

\paragraph{Distinct Values Estimator.}

\begin{definition}[Equivalent Attributes]
  Let $Q$ be a set of qualified attributes, and $E$ be basic CQ. We define the set $\jc(E, Q)$ of equivalent attributes of $Q$ in $E$ as $\bigcup_{i>0} C_i$, where
  \begin{itemize}
  \item $C_1 := \set{Q}$
  \item {\small $C_{n+1} := C_{n} \cup \setB{Q'}{\exists Q'' \in C_{n} \text{ s.t.\ } Q'=Q'' \text{ or } Q''=Q' \text{ is a join condition in }E}$}, $n \ge 1$
  \end{itemize}
\end{definition}

\begin{definition}[Join Sub-Expression]
  Given a basic CQ $E$ and a set $\occ[p]{V}.\vec{x}$ of qualified attributes, the
  expression $\jp(E,\occ[p]{V}.\vec{x})$ denotes the longest sub-expression $E^{(n)}$ in
  $E$, for some $n>1$, such that
  $E^{(n)}= E^{(n-1)} \Join_{\occ[q]{W}.\vec{y} = \occ[r]{U}.\vec{z}} \occ[r]{U}$, for some
  relation name $W$, tuples of attributes $\vec{y}$ and $\vec{z}$ such that $\occ[r]{U}.\vec{z} \in \jc(E, \occ[p]{V}.\vec{x})$,
  if $E^{(n)}$ exists, and $\bot$ otherwise.
\end{definition}
For $E$ and $\occ[p]{V}.\vec{x}$, the estimation $\distEst{E}{\occ[p]{V}.\vec{x}}$ of the
cardinality $\distSet{\restSet{E^\D}{\occ[p]{V}.\vec{x}}}$ is defined as
\begin{equation}\label{eq:dist}
  \small
  \distEst{E}{\occ[p]{V}.\vec{x}}=
  \begin{cases}
    \distSet{\restSet{V^\D}{\vecSet{x}}}, & \text{if } E = V \\[0.5em]
    \displaystyle
    \min\left\{\left\lceil \fv(E')\cdot
        \frac{\cardEst(E)}{\cardEst(E')}\right\rceil\!, \fv(E')\right\},
    & \text{if } \jp(E, \occ[p]{V}.\vec{x}) = E'\neq \bot\\[1em]
    \displaystyle
    \min\left\{\left\lceil\distSet{\restSet{V^\D}{\vecSet{x}}}
        \cdot \frac{\cardEst(E)}{\distSet{V^\D}}\right\rceil\!,
      \distSet{\restSet{V^\D}{\vecSet{x}}}\right\},\!\!\!\!
    & \text{otherwise.}
  \end{cases}
\end{equation}%
where $\distSet{\restSet{V^\D}{\vecSet{x}}}$ is assumed to be a statistic
available after having analyzed the mappings together with the data
instance. Observe that the fractions $\frac{\cardEst(E)}{\cardEst(E')}$ and
$\frac{\cardEst(E)}{\distSet{V^\D}}$ are again scaling factors relying on
assumption~\eqref{eq:uniformity}.  Also, $\distEst{E}{V.\vec{x}}$ must not increase
when the number of joins in $E$ increases, which explains the use of $\min$ for
the case where the number of distinct results in $E$ increases with the number
of joins.
\begin{example}\label{e:f-estimator}
  \begin{figure}[t]
  \scriptsize
\centering
\begin{tabular}{c c c}
  \begin{minipage}{.25\textwidth}
    \vspace{-1.9cm}
  \includegraphics[width=\textwidth]{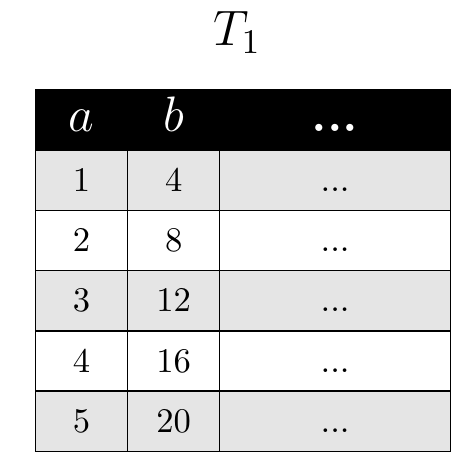}
  \end{minipage}
  &
  \begin{minipage}{.25\textwidth}
    \includegraphics[width=\textwidth]{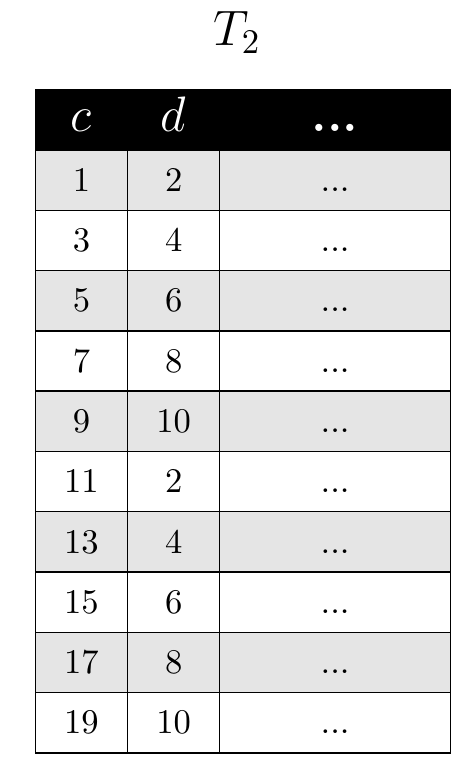}
  \end{minipage}
  &
  \begin{minipage}{.25\textwidth}
    \includegraphics[width=\textwidth]{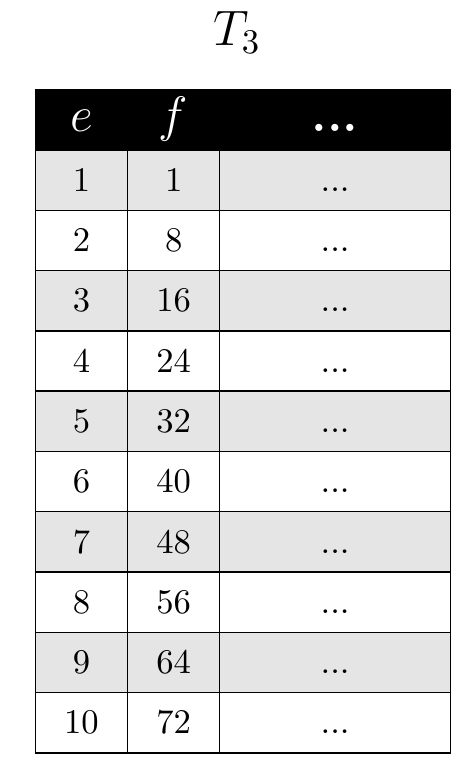}
  \end{minipage}
\end{tabular}
\caption{Data instance $\D$.\label{f:f-estimator-example}}
\end{figure}

Consider the data instance $\D$ from Figure~\ref{f:f-estimator-example}. Relevant statistics are:

\begin{itemize}
\item $\distSet{T_1^\D} = 5$,\quad $\distSet{T_2^\D} = \distSet{T_3^\D} = 10$
\item
  $\distSet{\restSet{T_1^\D}{\textsql{a}}} =
  \distSet{\restSet{T_2^\D}{\textsql{d}}} = 5$,\quad
  $\distSet{\restSet{T_2^\D}{\textsql{c}}} =
  \distSet{\restSet{T_3^\D}{\textsql{f}}} =
  \distSet{\restSet{T_3^\D}{\textsql{e}}} = 10$,
\item
  $\distSet{\restSet{T_1^\D}{\textsql{a}} \capren
   \restSet{T_2^\D}{\textsql{c}}} = 3$,\quad
  $\distSet{\restSet{T_2^\D}{\textsql{d}} \capren
   \restSet{T_3^\D}{\textsql{e}}} = 5$,\quad
  $\distSet{\restSet{T_1^\D}{\textsql{a}} \capren
   \restSet{T_3^\D}{\textsql{f}}} = 1$.
\end{itemize}
We calculate $\cardEst(E)$ for the basic CQ
$E = T_1 \Join_{T_1.\textsql{a}=T_2.\textsql{c}} T_2
\Join_{T_2.\textsql{d}=T_3.\textsql{e}} T_3
\Join_{T_1.{\textsql{a}}=T_3'.\textsql{f}} T_3'$, where $T_3'$ is an alias
(written in this way for notational convenience) for the table $T_3$. To do so,
we first need to calculate the estimations $\cardEst(E^{(1)})$ and
$\cardEst(E^{(2)})$.
{\small
\begin{align}
  \cardEst(E^{(1)}) \enskip = \enskip & \cardEst(T_1
  \Join_{T_1.\textsql{a}=T_2.\textsql{c}} T_2) \enskip = \enskip \left\lceil\frac{\fv(T_1
   \Join_{T_1.\textsql{a}=T_2.\textsql{c}} T_2) \cdot \distSet{T_1^\D} \cdot
   \distSet{T_2^\D}}{\distEst{T_1}{\textsql{a}} \cdot
   \distEst{T_2}{\textsql{c}}}\right\rceil \nonumber\\
  \enskip = \enskip & \left\lceil\frac{\distSet{\restSet{T_1^\D}{\textsql{a}} \capren
    \restSet{T_2^\D}{\textsql{c}}} \cdot \distSet{T_1^\D} \cdot
   \distSet{T_2^\D}}{\distSet{\restSet{T_1^\D}{\textsql{a}}} \cdot
   \distSet{\restSet{T_2^\D}{\textsql{c}}}}\right\rceil \enskip = \enskip \left\lceil(3 \cdot 5 \cdot 10) / (5 \cdot 10)\right\rceil \enskip = \enskip 3 \nonumber\\
  \label{e:cost-application:it2}
  \cardEst(E^{(2)}) \enskip  = \enskip & \cardEst(E^{(1)}
  \Join_{T_2.\textsql{d}=T3.\textsql{e}} T_3)
   = \!\left\lceil\!\frac{\fv(E^{(1)} \Join_{T_2.\textsql{d}=T3.\textsql{e}}
   T_3) \cdot \cardEst(E^{(1)}) \cdot
   \distSet{T_3^\D}}{\distEst{E^{(1)}}{T_2.\textsql{d}} \cdot
   \distEst{T_3}{\textsql{e}}}\right\rceil
\end{align}
}%
By Formula~\eqref{eq:dist}, $\distEst{E^{(1)}}{T_2.\textsql{d}}$ in
Formula~\eqref{e:cost-application:it2} can be calculated as
{\small
\begin{align*}
  \distEst{E^{(1)}}{T_2.\textsql{d}} \enskip = \enskip &
  \min\left\{\left\lceil\frac{\distSet{\restSet{T_2^\D}{\textsql{d}}}}{\distSet{T_2^\D}}
    \cdot
    \cardEst(E^{(1)})\right\rceil, \distSet{\restSet{T_2^\D}{\textsql{d}}} \right\} \\
  \enskip = \enskip & \min\left\{\left\lceil\frac{5}{10} \cdot 3\right\rceil,
    5\right\}
  \enskip = \enskip {\left\lceil\frac{3}{2}\right\rceil} \enskip = \enskip 2
\end{align*}
}%
By Formula~\eqref{eq:k},
$\fv(E^{(1)} \Join_{T_2.\textsql{d}=T3.\textsql{e}} T_3)$ in
Formula~\eqref{e:cost-application:it2} can be calculated as
{\small
\begin{align*}
  \fv(E^{(1)} \Join_{T_2.\textsql{d}=T3.\textsql{e}} T_3) \enskip = \enskip & \left\lceil\frac{\fv(T_2 \Join_{T_2.\textsql{d}=T3.\textsql{e}} T_3)}{\distEst{T_2}{\textsql{d}}} \cdot \distEst{E^{(1)}}{T_2.\textsql{d}}\right\rceil\\
  \enskip = \enskip & \left\lceil\frac{\distSet{\restSet{T_2^\D}{\textsql{d}} \capren
    \restSet{T_3^\D}{\textsql{e}}}}{\distSet{\restSet{T_2^\D}{\textsql{d}}}}
  \cdot \distEst{E^{(1)}}{T_2.\textsql{d}}\right\rceil \enskip = \enskip
  \left\lceil\frac{5}{5} \cdot 2\right\rceil \enskip = \enskip 2
\end{align*}
}%
By plugging the values for $\fv$ and $dist_\D$ in
Formula~\eqref{e:cost-application:it2}, we obtain
{\small
\begin{align*}
  \cardEst(E^{(2)}) \enskip = \enskip & \left\lceil(2 \cdot 3 \cdot 10)/(2 \cdot 10)\right\rceil = 3
\end{align*}
}%
We are now ready to calculate the cardinality of $E$, which is given by the
formula
{\small
\begin{align}\label{e:cost-application:it3}
  \cardEst(E) \enskip & = \enskip \cardEst(E^{(2)} \Join_{T_1.{\textsql{a}}=T_3'.\textsql{f}} T_3') = \left\lceil\frac{\fv(E^{(2)} \Join_{T_1.{\textsql{a}}=T_3'.\textsql{f}} T_3') \cdot \cardEst(E^{(2)}) \cdot \distSet{T_3^\D}}{\distEst{E^{(2)}}{T_1.\textsql{a}} \cdot \distEst{T_3}{\textsql{f}}}\right\rceil
\end{align}
}%
By Formula~\eqref{eq:dist}, $\distEst{E^{(2)}}{T_1.\textsql{a}}$ in
Formula~\eqref{e:cost-application:it3} can be computed as
{\small
\begin{align*}
  \distEst{E^{(2)}}{T_1.\textsql{a}} \enskip = \enskip & \min\left\{\left\lceil\frac{\fv(E^{(1)})}{\cardEst(E^{(1)})} \cdot \cardEst(E^{(2)})\right\rceil, \fv(E^{(1)})\right\}  = \min\left\{\left\lceil\frac{3}{3} \cdot 3\right\rceil, 3\right\} =  3
\end{align*}
}%
Then, by Formula~\eqref{eq:k}, $\fv(E^{(2)} \Join_{T_1.\textsql{a} = T_3'.\textsql{f}} T_3')$ in Formula~\eqref{e:cost-application:it3} can be computed as
{\small
\begin{align*}
  \fv(E^{(2)} \Join_{T_1.\textsql{a} = T_3'.\textsql{f}} T_3') \enskip = \enskip & \left\lceil\frac{\fv(T_1 \Join_{T_1.\textsql{a} = T_3'.\textsql{f}} T_3')}{\distEst{T_1}{\textsql{a}}} \cdot \distEst{E^{(2)}}{T_1.\textsql{a}}\right\rceil = \left\lceil\frac{3}{5}\right\rceil = 1
\end{align*}
}%
By plugging the values for $\fv$ and $\dist_{\D}$ in (\ref{e:cost-application:it3}), we finally obtain
{\small
\begin{align*}
  \cardEst(E) \enskip = \enskip & \left\lceil(1 \cdot 3 \cdot 10)/(3 \cdot 10)\right\rceil = 1
\end{align*}
}%
Observe that, in this example, our estimation is exact, that is, $\cardEst(E) =
\distSet{E^\D}$.
\qedfull
\end{example}

\subsection{Extending Cardinality Estimation to Non-basic CQs}

We study now how to extend the cardinality estimation developed above to
arbitrary join conditions and to CQs with existential variables.

\subsubsection{Extending Cardinality Estimation to Arbitrary Join Conditions}

To extend the cardinality estimation to the case of arbitrary join conditions,
we exploit the following property of theta-joins:
\[\sigma_{\theta_2}(T_1 \Join_{\theta_1} T_2) = T_1 \Join_{\theta_1 \land \theta_2} T_2\]
The cardinality of a CQ without existential variables, and with $n$ atoms and
$m$ join conditions ($m\geq n$) can be estimated in two steps. In the first
step, we estimate through the equations in the previous section the cardinality
of a basic CQ of $n$ atoms over $n-1$ join conditions. In the second step, we
multiply the number obtained in the first step by the probability that the
additional $\theta_n, \ldots, \theta_{m}$ conditions are all satisfied. Such
probability can be easily calculated by relying on the uniformity assumptions.
We illustrate this on the case of three atoms and three join conditions. The
generalization to an arbitrary number of atoms and join conditions is
straightforward.

\begin{example}
  \[
    \cardEst((T_1 \Join_{T_1.\textsql{a}=T_2.\textsql{d}} T_2) \Join_{T_1.\textsql{h}=T_3.\textsql{i} \land T_2.\textsql{d}=T_3.\textsql{e}} T_3) =
    \cardEst((T_1 \Join_{T_1.\textsql{a}=T_2.\textsql{d}} T_2) \Join_{T_1.\textsql{h}=T_3.\textsql{i}} T_3) \cdot \frac{\fv(T_2 \Join_{T_2.\textsql{d}=T_3.\textsql{e}} T_3)}{\distEst{T_2}{T_2.\textsql{d}}}
  \]
  where $\frac{\fv(T_2 \Join_{T_2.\textsql{d}=T_3.\textsql{e}} T_3)}{\distEst{T_2}{T_2.\textsql{d}}}$ is an estimation for $P(T_2.\textsql{d}=T_3.\textsql{e})$ under our assumptions.
\end{example}

\subsubsection{Extending Cardinality Estimation to CQs with Existential
 Variables (Projection)}

\paragraph{Consequences of Cardinality Assumptions}

In this paragraph we discuss some important consequences of our assumptions of uniform distribution across distinct values (Assumption~\eqref{eq:uniformity}) and of independent distributions (Assumption~\eqref{eq:independence}).

Consider a set of columns $C_1, \ldots, C_n$ over a table $T$. Then, Assumption~\eqref{eq:uniformity} can be written as
\begin{equation}\label{eq:c:planning:extend-uniform}
  \forall \vec{v} \in \restBag{T}{C_1,\ldots,C_n} :
  P(\restBag{T}{C_1,\ldots,C_n} = \vec{v}) = \frac{1}{\distSet{\pi_{C_1, \ldots, C_n}T}}.
\end{equation}
A consequence of Assumption~\eqref{eq:independence} is that
\begin{equation}\label{eq:c:planning:product-probs}
  \forall (v_1, \ldots, v_n) \in C_1 \times \cdots \times C_n : P(\restBag{T}{C_1} = v_1 \cap \cdots \cap \restBag{T}{C_n} = v_n) = P(\restBag{T}{C_1} = v_1) \cdot \cdots \cdot P(\restBag{T}{C_n} = v_n).
\end{equation}
By Assumption~\eqref{eq:uniformity}, we know that
\[
\forall i \in \set{1,\ldots,n} : P(\restBag{T}{C_i} = v_i) = \frac{1}{\distSet{\pi_{C_i}T}}.
\]
Hence, Equation~\eqref{eq:c:planning:product-probs} becomes
\begin{equation}\label{eq:c:planning:to-rewrite}
  \forall (v_1, \ldots, v_n) \in C_1 \times \cdots \times C_n : P(\restBag{T}{C_1} = v_1 \cap \cdots \cap \restBag{T}{C_n} = v_n) = \Pi_{i=1}^n \frac{1}{\distSet{\pi_{C_i}T}}. \end{equation}
We observe that Equation~\eqref{eq:c:planning:to-rewrite} can equivalently be
rewritten as
\begin{equation}
  \forall \vec{v} \in \pi_{C_1,\ldots,C_n}T :
  P(\pi_{C_1,\ldots,C_n}T = \vec{v}) = \frac{1}{\Pi_{i=1}^n \frac{1}{\distSet{\pi_{C_i}T}}}.
\end{equation}
Therefore, by Equation~\eqref{eq:c:planning:extend-uniform}, we conclude that
\begin{equation}\label{eq:estimate-projection}
  \distSet{\pi_{C_1, \ldots, C_n}T} = \Pi_{i=1}^n \distSet{\pi_{C_i}T}.
\end{equation}

Let $\pi_{\occ{V^1}.\vec{x}_1, \ldots, \occ{V^n}.\vec{x}_n}(E)$ be a CQ, where
$E$ is a CQ.  Then, by Equation~\eqref{eq:estimate-projection}, we estimate the
cardinality
$\distSet{\pi_{\occ{V^1}.\vec{x}_1, \ldots, \occ{V^n}.\vec{x}_n}(E)}$ through
the formula\footnote{See also
 \url{http://adrem.ua.ac.be/sites/adrem.ua.ac.be/files/db2-selectivity.pdf},
 page~13.}
\[
  \min\{\cardEst(E), \Pi_{i\in \set{1, \ldots, n}} \distEst{E}{\occ{V^i}.\vec{x}_i}\}
\]


\subsection{Collecting the Necessary Base Statistics}
\label{s:mappings-analysis}

The estimators introduced above assume a number of statistics to be
available. We now show how to compute such statistics on a data instance by
analyzing the mappings. Consider a set
$\M=\setB{L_i(\vec{f}_i(\vec{v}_i)) \leftsquigarrow V_i(\vec{v}_i)}{1 \le i \le
 n}$ of mappings and a data instance $\D$. We store the statistics:

\begin{enumerate}[(S$_1$)]
\item $\distSet{V_i^\D}$, for each $i \in \{1, \ldots, n\}$;
\item $|\pi_{\vecSet{x}} (V_i^\D)|$, if $f(\vec{x})$ is a term in
  $\vec{f}_i(\vec{v}_i)$, for some function symbol $f$ and
  $i \in \{1, \ldots, n\}$;
\item
  $\distSet{\restSet{V_i^\D}{\vec{x}} \capren \restSet{V_j^\D}{\vec{y}}}$, if
  $f(\vec{x})$ is a term in $\vec{f}_i(\vec{v}_i)$, and $f(\vec{y})$ is a term
  in $\vec{f}_j(\vec{v}_j)$, for some function symbol $f$ and
  $i,j \in \{1, \ldots, n\}$, $i\neq j$.
\end{enumerate}

Statistics~(S$_1$) and~(S$_2$) are required by all three estimators that we have
introduced, and can be measured directly by evaluating source queries on
$\D$. Statistics~(S$_3$) can be collected by first iterating over the function
symbols in the mappings, and then calculating the cardinalities for joins over
pairs of source queries whose corresponding mapping targets have a function
symbol in common.  It is easy to check that Statistics~(S$_1$)--(S$_3$) suffice for
our estimation, since all joins in a CQ are between source queries, and
moreover, every translation calculated according to Definition~\ref{d:unf-cq}
contains only joins between pairs of source queries considered by
Statistics~(S$_3$). The same intuitions apply for the possible projections
applied to a basic CQ in an unfolding.

\subsection{Cardinality  Estimation of an Unfolding}

We now show how to estimate the cardinality of an unfolding\footnote{Regardless of whether the unfolding is of a UCQ or a JUCQ, as they \emph{must} be equivalent.}
  by using the
formulae~\eqref{eq:f},~\eqref{eq:k}, and~\eqref{eq:dist} introduced for
cardinality estimation. The next theorem shows that such estimation can be
calculated by summing-up the estimated cardinalities for each CQ in the
unfolding of the input query, provided that
\begin{inparaenum}[\itshape (i)]
\item the unfolding is being calculated over the \emph{wrap} of the set of mappings, and
\item the query to unfold is a CQ.
\end{inparaenum}

We start our discussion with an auxiliary lemma about rules in an unfolding.

\begin{lemma}\label{l:basic-Datalog-fact}
  Consider an unfolding $(q(\vec{v}), \Pi)$ and two rules
  $r_1, r_2 \in \Pi$. If $\head(r_1)$ and $\head(r_2)$ are not
  unifiable, then for each database instance $\D$ for $\Sigma$, we
  have that

  \[(q(\vec{v}), \setNoFancy{r_1})^\D \cap (q(\vec{v}), \setNoFancy{r_2})^\D = \emptyset.\]
\end{lemma}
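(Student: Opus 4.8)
The plan is to argue by contradiction, converting a hypothetical common answer of the two rules into a unifier of their heads. First I would make the semantics of the unfolding, viewed as a Datalog query with (non-nested) function symbols in rule heads, fully explicit: writing $\head(r_i) = q(\vec{t}_i)$ and denoting the body of $r_i$ by $B_i$, a ground tuple $\vec{a}$ belongs to $(q(\vec{v}),\{r_i\})^\D$ precisely when there is a valuation $\mu_i$ mapping the variables of $r_i$ to database constants such that $\mu_i$ satisfies every atom of $B_i$ over $\D$ and $\mu_i(\vec{t}_i)=\vec{a}$. Here the head function symbols are read as free constructors (the IRI templates of the footnote), so two ground head instances coincide iff they are syntactically identical.

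Next I would assume, for contradiction, that some $\vec{a}$ lies in both $(q(\vec{v}),\{r_1\})^\D$ and $(q(\vec{v}),\{r_2\})^\D$ for some database instance $\D$. Since the value $(q(\vec{v}),\{r_i\})^\D$ is invariant under renaming the local variables of $r_i$, I may assume that $r_1$ and $r_2$ share no variables. Applying the observation above to each rule yields valuations $\mu_1,\mu_2$ with $\mu_1(\vec{t}_1)=\vec{a}=\mu_2(\vec{t}_2)$.

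I would then combine the two valuations: because $r_1$ and $r_2$ are variable-disjoint, $\mu := \mu_1\cup\mu_2$ is a well-defined substitution, and $\mu(\vec{t}_1)=\mu_1(\vec{t}_1)=\vec{a}=\mu_2(\vec{t}_2)=\mu(\vec{t}_2)$. Since $\head(r_1)$ and $\head(r_2)$ carry the same predicate symbol (as do all rules of an unfolding), unifiability of the heads reduces to unifiability of the argument tuples, and $\mu$ witnesses exactly that. This contradicts the hypothesis that $\head(r_1)$ and $\head(r_2)$ are not unifiable, so no such $\vec{a}$ can exist and the intersection is empty.

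The step I expect to be the crux is the first one: justifying that a common ground instance of the two heads forces them to be unifiable. This rests on interpreting the head function symbols as free, injective constructors with pairwise disjoint ranges, so that the equality $\mu_1(\vec{t}_1)=\mu_2(\vec{t}_2)$ is genuine syntactic equality of terms rather than a mere semantic coincidence of constructed individuals; only under this reading does the union valuation act as a syntactic unifier. I would therefore state this interpretation of the function symbols explicitly and observe that the argument is simply the standard unification fact that variable-disjoint terms admit a common instance if and only if they unify.
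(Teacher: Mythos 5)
Your proof is correct in this paper's setting, but it takes a genuinely different route from the paper's own argument. You proceed by contradiction: a common answer yields, after standardizing the two rules apart, valuations $\mu_1,\mu_2$ whose union is a substitution unifying the two head tuples, contradicting non-unifiability---i.e., you invoke the standard unification-theoretic fact that variable-disjoint terms with a common instance are unifiable. The paper instead argues directly: since heads are \emph{flat} (every argument is a function symbol applied to variables, with no nesting and no constants), non-unifiability of the heads forces the two tuples of function symbols $\vec{f}$ and $\vec{g}$ to differ, and then each answer set is contained in the set of ground instances of its own template, $\setBNoFancy{\vec{f}(\vec{a})}{\vec{a}\in\adom(\D)}$ versus $\setBNoFancy{\vec{g}(\vec{a})}{\vec{a}\in\adom(\D)}$, which are disjoint under precisely the free-constructor reading you identify as the crux. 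Each approach buys something: yours is the more standard argument and would extend to nested head terms (where the paper's characterization ``non-unifiable iff $\vec{f}\neq\vec{g}$'' breaks down), while the paper's avoids any renaming and so handles shared variables for free. That last point hides the one subtlety you gloss over: the lemma's hypothesis concerns the heads \emph{as they occur in} $\Pi$, which may share variables (e.g., when both rules instantiate some atom via the same mapping), and renaming $r_2$ apart can in general turn non-unifiable terms into unifiable ones (occurs-check failures disappear). So your contradiction step needs the extra observation that for flat, constant-free heads unifiability is insensitive to variable sharing---it is equivalent to $\vec{f}=\vec{g}$, as witnessed by collapsing all variables to a single fresh one. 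This holds here, so your proof goes through, but it is exactly the structural fact the paper exploits up front, which is why its direct argument needs no renaming at all.
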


\begin{proof}
  Let $A_1 = (q(\vec{v}), \setNoFancy{r_1})^\D$, and
  $A_2 = (q(\vec{v}), \setNoFancy{r_2})^\D$.  W.l.o.g., let
  $\head(r_1) = q(\vec{f}(\vec{x}))$, and
  $\head(r_2) = q(\vec{g}(\vec{y}))$. Since $\vec{f}(\vec{x})$ and
  $\vec{g}(\vec{y})$ are not unifiable, it must be that
  $\vec{f} \neq \vec{g}$. Since no constants occur in either
  $\head(r_1)$ or $\head(r_2)$, it directly follows from the
  definition of answer of a Datalog query over an instance $\D$ that
  $A_1 \subseteq \setBNoFancy{\vec{f}(\vec{a})}{\vec{a} \in
    \adom(\D)}$, and
  $A_2 \subseteq \setBNoFancy{\vec{g}(\vec{a})}{\vec{a} \in
    \adom(\D)}$, where $\adom(\D)$ denotes the set of constants occurring in $\D$\footnote{Called \emph{active domain}, see~\cite{AbHV95}}.
  The thesis follows by observing that
  $\setBNoFancy{\vec{f}(\vec{a})}{\vec{a} \in \adom(\D)} \cap
  \setBNoFancy{\vec{g}(\vec{a})}{\vec{a} \in \adom(\D)} = \emptyset$.
  \end{proof}

\begin{theorem}\label{t:est-unf}
  Consider a CQ
  $q(\vec{x}) \leftarrow L_1(\vec{v}_1), \ldots, L_n(\vec{v}_n)$ such
  that $\vecSet{x} = \bigcup_{i=1}^n \vecSet{v_i}$. Then,

  \[|\unf(q(\vec{x}), \M)^\D| = \sum_{q_{u} \in \unf(q, \wrap(\M))} |q_{u}(\vec{x})^\D| \]

\begin{proof}
  Since by Lemma~\ref{l:wrap-mapping} $\M \equiv \wrap(\M)$, it
  suffices to prove that for each pair of different queries $q_u, q_u'$
  in $\unf(q(\vec{x}), \wrap(\M))$ it holds that
  $q_u^\D \cap q_u'^\D = \emptyset$. Consider two arbitrary queries
  $q_u, q_u'$ in $\unf(q(\vec{x}), \wrap(\M))$. By
  Definition~\ref{d:unf-cq}, there must exist two different lists of
  mappings $(m_1, \ldots, m_n)$ and $(m_1', \ldots, m_n')$ in
  $\wrap(\M)$ and two substitutions $\sigma$, $\sigma'$ such that:
  \begin{enumerate}[\itshape (i)]
  \item $\head(q_u) = q_u(\sigma(\vec{x}))$, $\head(q_u') = q_u'(\sigma'(\vec{x}))$;
  \item $\target(m_i) = L_i(\sigma(\vec{v}_i))$, $\target(m_i') = L_i(\sigma'(\vec{v}_i))$, for each $i \in \set{1, \ldots, n}$.
  \end{enumerate}

  W.l.o.g., let $m_i \neq m_i'$, for some $i \in \set{1, \ldots,
    n}$. Moreover, assume that
  $\sigma(\vec{v}_i) = \vec{f}_i(\vec{x}_i)$ and
  $\sigma'(\vec{v}_i) = \vec{f}_i'(\vec{x}_i')$, for some tuples of
  terms $\vec{f}_i(\vec{x}_i)$ and $\vec{f}_i'(\vec{x}_i')$. By~\emph{(ii)} and 
  Lemma~\ref{l:wrap-template-sharing}, it must be
  $\vec{f}_i \neq \vec{f}_i'$. Hence, $\vec{f}_i(\vec{x}_i)$ and
  $\vec{f}_i'(\vec{x}_i')$ do not unify. Therefore, also
  $\sigma(\vec{v}_i)$ and $\sigma'(\vec{v}_i)$ do not unify. Since, by
  assumption, $\vecSet{x} \supseteq \vecSet{v_i}$, we conclude that
  $\sigma(\vec{x})$ does not unify with $\sigma'(\vec{x})$. Then, by
  Lemma~\ref{l:basic-Datalog-fact}, it follows that
\[q_{u}(\sigma(\vec{x}))^\D \cap q_{u}'(\sigma'(\vec{x}))^\D = \emptyset\]

By~\emph{(i)}, it holds that $q_u^\D \cap q_u'^\D = \emptyset$. Since the choice of $q_u, q_u'$ was arbitrary, we conclude the thesis.
\end{proof}
\end{theorem}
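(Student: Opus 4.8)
The plan is to reduce the claim to a disjointness statement about the CQs composing the unfolding, and then to establish that disjointness using the structural properties of the wrap.

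First, I would invoke Lemma~\ref{l:wrap-mapping} to obtain $\M \equiv \wrap(\M)$, and then Lemma~\ref{t:rev-sound-compl-unf} to conclude $\unf(q, \M) \equiv \unf(q, \wrap(\M))$; in particular $|\unf(q, \M)^\D| = |\unf(q, \wrap(\M))^\D|$ for the fixed $\D$. Since $\unf(q, \wrap(\M))$ is by definition a union of CQs $q_u$, the quantity $\sum_{q_u} |q_u(\vec{x})^\D|$ equals $|\unf(q, \wrap(\M))^\D|$ precisely when the answer sets $q_u(\vec{x})^\D$ are pairwise disjoint. Thus the whole theorem reduces to showing that any two distinct CQs $q_u \neq q_u'$ in $\unf(q, \wrap(\M))$ satisfy $q_u^\D \cap q_u'^\D = \emptyset$.

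To prove this disjointness, I would unwind Definition~\ref{d:unf-cq}: each such $q_u$ is produced by a tuple of mappings $(m_1, \ldots, m_n) \subseteq \wrap(\M)$ together with a most general unifier $\sigma$ of the atoms $L_i(\vec{v}_i)$ with the targets $L_i(\vec{f}_i(\vec{x}_i))$, and similarly $q_u'$ from $(m_1', \ldots, m_n')$ and $\sigma'$. Since $q_u \neq q_u'$, the two tuples of mappings differ at some position $i$, i.e.\ $m_i \neq m_i'$. Here the key leverage is Lemma~\ref{l:wrap-template-sharing}: in $\wrap(\M)$ no two distinct mappings share a signature, so since both $m_i$ and $m_i'$ define the same predicate $L_i$, their function-symbol tuples must differ, $\vec{f}_i \neq \vec{f}_i'$. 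Consequently $\sigma(\vec{v}_i) = \vec{f}_i(\vec{x}_i)$ and $\sigma'(\vec{v}_i) = \vec{f}_i'(\vec{x}_i')$ are non-unifiable terms.

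The final step is to push this non-unifiability from the body into the head, and this is exactly where the hypothesis $\vec{x} = \bigcup_{i=1}^n \vec{v}_i$ is indispensable: because every variable of $\vec{v}_i$ occurs among the answer variables $\vec{x}$, the non-unifiable terms $\sigma(\vec{v}_i)$ and $\sigma'(\vec{v}_i)$ appear inside the heads $q_u(\sigma(\vec{x}))$ and $q_u'(\sigma'(\vec{x}))$, so those heads themselves are non-unifiable. Lemma~\ref{l:basic-Datalog-fact} then yields $q_u^\D \cap q_u'^\D = \emptyset$, and since $q_u, q_u'$ were arbitrary distinct CQs, the disjointness holds throughout. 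I expect this last step to be the main subtlety: without the full-answer-variable assumption, existential variables could be projected away, and two CQs built from different function symbols on projected positions might still produce overlapping answers, breaking disjointness. It is therefore essential to observe that the assumption forces the distinguishing function symbols to survive in the head, where Lemma~\ref{l:basic-Datalog-fact} can be applied.
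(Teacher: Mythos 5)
Your proposal is correct and follows essentially the same route as the paper's own proof: reduce via Lemma~\ref{l:wrap-mapping} to pairwise disjointness of the CQs in $\unf(q,\wrap(\M))$, use Definition~\ref{d:unf-cq} and Lemma~\ref{l:wrap-template-sharing} to get non-unifiable function-symbol tuples at the differing mapping position, propagate non-unifiability to the heads via the hypothesis $\vecSet{x} = \bigcup_i \vecSet{v_i}$, and conclude with Lemma~\ref{l:basic-Datalog-fact}. The only (harmless) difference is that you explicitly invoke Lemma~\ref{t:rev-sound-compl-unf} to justify replacing $\unf(q,\M)$ by $\unf(q,\wrap(\M))$, a step the paper leaves implicit.
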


The previous theorem states that the cardinality of the unfolding of a query
over the wrap of a set of mappings corresponds to the sum of the cardinalities of each CQ
in the unfolding, under the assumption that all the attributes are kept in the
answer. Intuitively, the proof relies on the fact that, when the wrap of a set of mappings is used, each
CQ in the unfolding returns answer variables built using a specific combination
of function names. Hence, to calculate the cardinality of a CQ $q$, it
suffices to collect statistics as described in the previous paragraph, but over
$\wrap(\M)$ rather than $\M$, and sum up the estimations for each CQ in
$\unf(q,\wrap(\M))$.

The method above might overestimate the actual cardinality if the input CQ contains non-answer variables.
In Section~\ref{ss:deal-with-proj} we show how to address this limitation
by storing, for each property in the mappings, the probability of having duplicate answers if the projection operation
is applied to one of the (two) arguments of that property.

Also, the method above assumes a CQ as input to the unfolding, whereas a
perfect rewriting is in general a UCQ. This is usually not a critical aspect,
especially in practical applications of OBDA.  By using a saturated set of
mappings (called T-mapping~\cite{RoCa12}) $\M_\T$ in place of $\M$, in fact,
the perfect rewriting of an input CQ $q$ \emph{almost always}~\cite{KiKZ12}
coincides with $q$ itself\footnote{\emph{Always}, if the CQ is interpreted as a
 SPARQL query and evaluated according to the OWL~2~QL entailment regime, or if
 the CQ does not contain existentially quantified variables.}.  Hence, in most
cases we can directly use in Theorem~\ref{t:est-unf} the input query $q$, if we
use $\wrap(\M_\T)$ instead of $\wrap(\M)$. In the following subsection we
provide a fully detailed example of how this is done, as well as more details
on T-mappings and related techniques.

\subsection{A Note on Rewriting}\label{c:prel:s:rewriting-modern-obda}

Modern OBDA systems such as \ontop or Ultrawrap~\cite{ultrawrap} use advanced
rewriting techniques for the rewriting phase. In particular, \ontop uses a
combination of the \emph{T-mapping} and \emph{tree-witness
 rewriting}~\cite{ontop-system} techniques. We now give basic notions about
these two techniques that will be useful in the rest of this paper.

\begin{definition}
  The \emph{saturation $\A_\T$ of an ABox $\A$ with respect to a TBox $\T$} is the ABox
  \[\setB{L(\vec{a})}{(\T, \A) \models L(\vec{a}), \vec{a}\text{ is a tuple of individuals} }\]

\end{definition}

In OBDA, saturated virtual ABoxes can obtained by compiling the ontology in the mappings so as to obtain a set of mappings called \emph{T-mapping}~\cite{RoKZ13}. Intuitively, a T-mapping exposes a saturated ABox.

\begin{definition}
  Given an OBDA specification $\mathcal{S} = (\T, \M, \Sigma)$, the mapping set of mappings $\M_\T$ is a T-mapping for $\S$ if, for every OBDA instance $\D$ of $\mathcal{S}$, $\A_{(\M,\D)_{\T}} = \A_{(\M_\T,\D)}$.
\end{definition}

The saturation of an ABox is also known as a \emph{forward chaining} technique, as opposed to \emph{backward chaining} techniques such as rewriting algorithms. It is easy to see that, for each saturated ABox $\A_\T$ and \emph{single-atom query} $q(\vec{x}) \leftarrow X(\vec{x})$, it holds that $q(\vec{x})^{\A_{\T}} = \cert(q(\vec{x}), (\T, \A))$. However, such equality does not hold in case $q$ is a general CQ.

Authors from~\cite{GKKP*14} gave a characterization of queries for which query
answering with respect to an ontology corresponds to query evaluation over
saturated ABoxes. The \emph{tree-witness rewriting} technique~\cite{GKKP*14}
exploits this characterization so as to produce perfect rewritings that are
minimal with respect to a saturated ABox. In short, queries for which query
answering with respect to an ontology \emph{does not} correspond to query
evaluation over saturated ABoxes are the ones whose answers are influenced by
existentials in the right-hand-side of \dlliteR TBox axioms. Such queries
are called queries with \emph{tree-witnesses}.

\begin{example}\label{e:long-example}
Consider the data instance $\D$ from Figure~\ref{f:long-example}.
\begin{figure}[t]
  \scriptsize
  \centering
  \begin{tabular}{c c c}
    \begin{minipage}{.12\textwidth}
  \includegraphics[width=\textwidth]{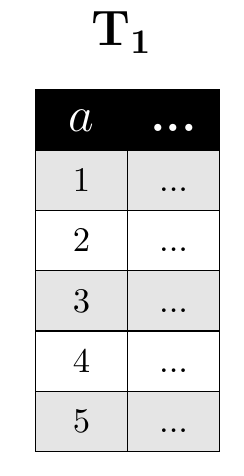}
    \end{minipage}
    &
    \begin{minipage}{.12\textwidth}
      \includegraphics[width=\textwidth]{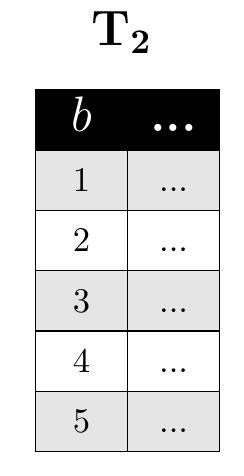}
    \end{minipage}
    &
    \begin{minipage}{.25\textwidth}
      \includegraphics[width=\textwidth]{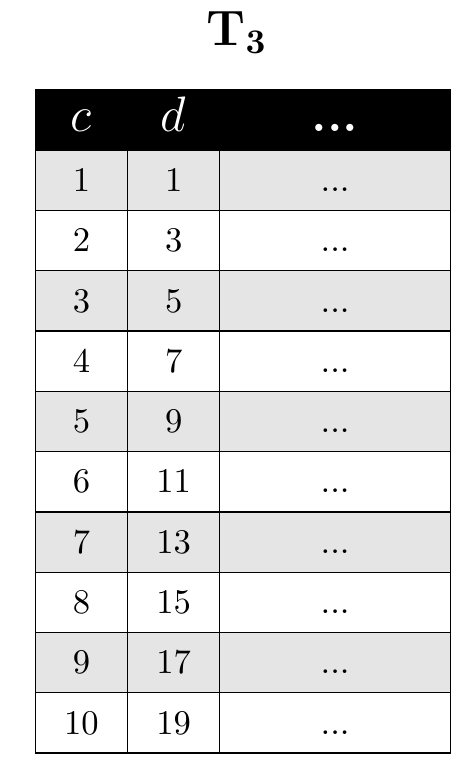}
    \end{minipage}\\
    \begin{minipage}{.25\textwidth}
      \includegraphics[width=\textwidth]{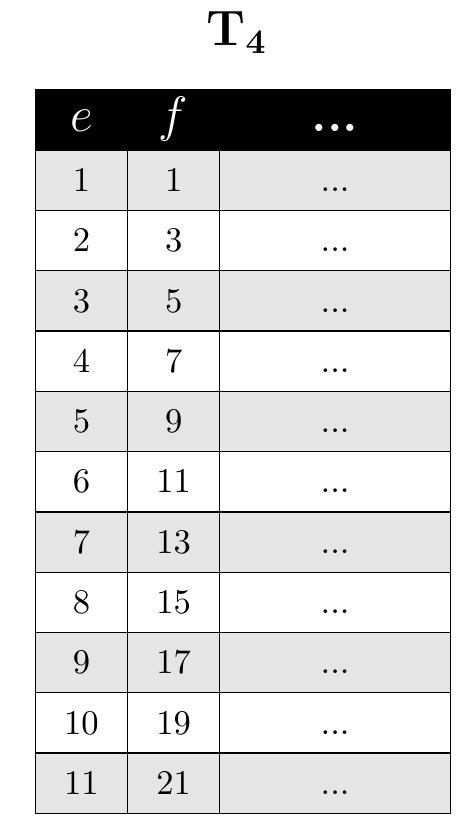}
    \end{minipage}
    &
    \begin{minipage}{.25\textwidth}
      \includegraphics[width=\textwidth]{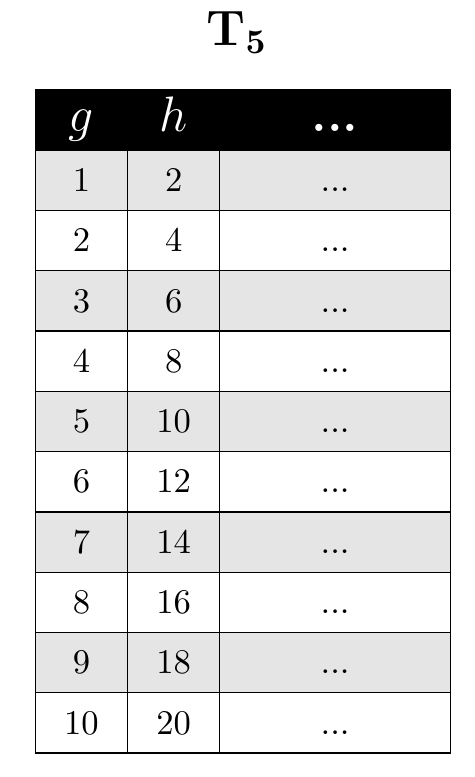}
    \end{minipage}
    &
    \begin{minipage}{.25\textwidth}
      \includegraphics[width=\textwidth]{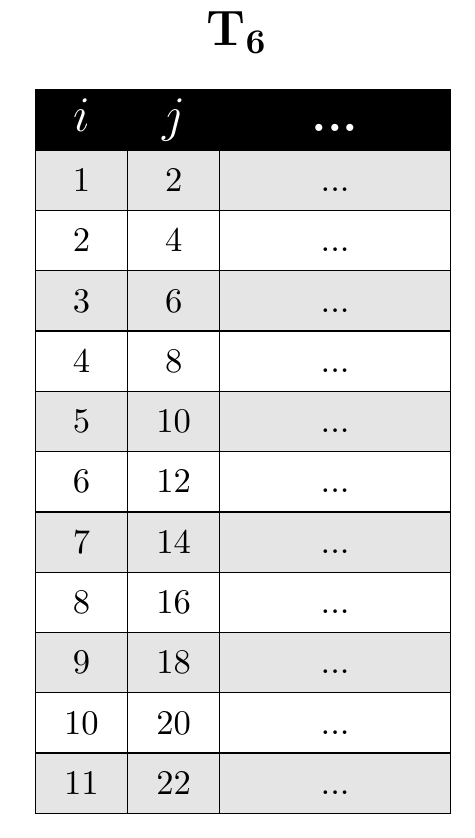}
    \end{minipage}
  \end{tabular}
\caption{\label{f:long-example} Data Instance $\D$.}
\end{figure}
Consider the query
\[q(x,y,z) \leftarrow C(x), R_1(x,y), R_2(x,z)\]
\noindent and the set of mappings
\[\M = \hspace{-1.5ex}
\begin{tabular}{l l l @{\qquad\qquad} l l l}
  \ldelim\{{4}{3mm} & $A(l(a))$ & $\leftsquigarrow T_1(a)$ & $P_2(l(g), m(h))$ & $\leftsquigarrow T_5(g,h)$ & \rdelim\}{4}{3mm} \tableRowBreak
  & $C(l(b))$ & $\leftsquigarrow T_2(b)$ & $P_2(l(g),n(h))$ & $\leftsquigarrow T_5(g,h)$ \tableRowBreak
  & $P_1(l(c),m(d))$ & $\leftsquigarrow T_3(c,d)$ & $R_2(l(i),m(j))$ & $\leftsquigarrow T_6(i,j)$ \tableRowBreak
  & $R_1(l(e), m(f))$ & $\leftsquigarrow T_4(e,f)$ &
\end{tabular}
\]

Let $\O = \set{A \sqsubseteq C, P_1 \sqsubseteq R_1, P_2 \sqsubseteq R_2}$. Since all the variables in $q$ are answer variables, to compute $\cert(q, (\T, \A_{(\M, \D)})$ it suffices to compute the evaluation $q^{\can{\A_{(\M_{\T}, \D)}}}$, where $\M_\T$ is a T-mapping of $\M$ with respect to $\T$. Such T-mapping is:
\[\M_\T = \M \cup \hspace{-1.5ex}
\begin{tabular}{l l l @{\qquad} l l l}
  \ldelim\{{2}{3mm} & $C(l(a))$ & $\leftsquigarrow T_1(a)$ & $R_2(l(c), m(h))$ & $\leftsquigarrow T_5(g,h)$ & \rdelim\}{2}{4mm} \tableRowBreak
  & $R_1(l(c),m(d))$ & $\leftsquigarrow T_3(c,d)$ & $R_2(l(g),n(h))$ & $\leftsquigarrow T_5(g,h)$ \tableRowBreak
\end{tabular}
\]

The wrap $\wrap(\M_\T)$ of $\M_\T$ is the set
\[
\begin{tabular}{l l l @{\qquad\qquad} l l l}
  \ldelim\{{4}{3mm} & $A(l(a))$ & $\leftsquigarrow T_1(a)$ & $P_2(l(g), m(h))$ & $\leftsquigarrow T_5(g,h)$ & \rdelim\}{4}{3mm} \tableRowBreak
  & $C(l(v))$ & $\leftsquigarrow V_1(v)$ & $P_2(l(g),n(h))$ & $\leftsquigarrow T_5(g,h)$ \tableRowBreak
  & $P_1(l(c),m(d))$ & $\leftsquigarrow T_3(c,d)$ & $R_2(l(z_1),m(z_2))$ & $\leftsquigarrow V_3(z_1,z_2)$ \tableRowBreak
  & $R_1(l(w_1), m(w_2))$ & $\leftsquigarrow V_2(w_1,w_2)$ & $R_2(l(g),n(h))$ & $\leftsquigarrow T_5(g,h)$
\end{tabular}
\]
where $V_1, \ldots, V_3$ are view names for Datalog queries of programs respectively
\begin{center}
  \begin{tabular}{l l l}
    $\Pi_1$ & = & $\set{V_1(a) \leftarrow T_1(a), V_1(b) \leftarrow T_2(b)}$ \tableRowBreak
    $\Pi_2$ & = & $\set{V_2(c,d) \leftarrow T_3(c,d), V_2(e,f) \leftarrow T_4(e,f)}$ \tableRowBreak
    $\Pi_3$ & = & $\set{V_3(g,h) \leftarrow T_5(g,h), V_3(i,j) \leftarrow T_6(i,j)}$. \tableRowBreak
  \end{tabular}
\end{center}
To estimate the cardinality of $q^{\can{\A_{(\M_{\T}, \D)}}}$ by using our cardinality estimator, we need the following statistics:
\begin{itemize}
\item $\distSet{\pi_{v}(V_1)} = 5$
\item $\distSet{\pi_{w_1,w_2}(V_2)} = 11$
\item $\distSet{\pi_{z_1,z_2}(V_3)} = 11$
\item $\distSet{\pi_{g,h}(T_5)} = 10$
\item $\distSet{\restSet{V_2}{w_1}} = \distSet{\restSet{V_2}{w_2}} = \distSet{\restSet{V_3}{z_1}} = \distSet{\restSet{V_3}{z_2}} = 11$
\item $\distSet{\restSet{T_6}{i}} = \distSet{\restSet{T_6}{j}} = 10$
\item $\distSet{\restSet{V_1}{v} \capren \restSet{V_2}{w_1}} = \distSet{\restSet{V_1}{v} \capren \restSet{V_3}{z_1}} = 5$
\item $\distSet{\restSet{V_1}{v} \capren \restSet{T_5}{g}} = 5$
\end{itemize}

By applying Definition~\ref{d:unf-cq}, the unfolding $\unf(q, \wrap(\M_\T))$ is the NRLP query $(q(x,y,z), \Pi_{\unf})$, where:
\[\Pi_{\unf} = \hspace{-1.5ex}
\begin{tabular}{l l l l l}
  \ldelim\{{2}{3mm} & $q(l(v),m(w_2),m(z_2))$ & $\leftarrow$ & $V_1(v), V_2(v, w_2), V_3(v,z_2)$ & \rdelim\}{2}{0mm} \tableRowBreak
  & $q(l(v),m(w_2),n(h))$ & $\leftarrow$ & $V_1(v), V_2(v, w_2), T_5(v,h)$ \tableRowBreak
\end{tabular}
\]

In relational algebra:

\[E_1 = \pi_{x/l(v), y/m(w_2), z/m(z_2)}(V_1 \Join_{V_1.v=V_2.w_2} V_2 \Join_{V_1.v = V_3.z_2} V_3)\]
  \[\cup\]
\[E_2 = \pi_{x/l(v), y/m(w_2), z/n(h)}(V_1 \Join_{V_1.v=V_2.w_2} V_2 \Join_{V_1.v = T_5.a} T_5)\]

According to Theorem~\ref{t:est-unf}, the cardinality of $q_{\unf}$ in $\D$ can be calculated as $|E_1| + |E_2|$. Such cardinalities
can be computed in the same fashion as in Example~\ref{e:f-estimator}.
\end{example}

\subsection{Extending Cardinality Estimation to Input Queries with
  Projection}\label{ss:deal-with-proj}

In the presence of projection in the input query, our estimation for
the cardinality of an unfolding is an upper-bound of the actual
cardinality, if the cardinalities for each single CQ are estimated
correctly. In this section we discuss a lower-bound of the actual
cardinality of an unfolding.

We first recall how to compute the cardinality of a union of sets.
Given sets $A, B$, the following equation holds:
\begin{equation}\label{eq:card-set-union}
  |A \cup B| = |A| + |B| - |A \cap B|.
\end{equation}
  In general, given $n$ sets $A_1, \ldots,
  A_n$, 
  it holds that
\begin{equation}\label{eq:card-set-union-general}
    \Big|\bigcup_i A_i\Big| ~=~ \sum_{i=1}^n \Bigg( \sum_{\substack{ J\subseteq \{1,\ldots,n\} \\|J|=i }} ((-1)^{i-1} \cdot \Big|\bigcap_{j\in J} A_{j}\Big|)\Bigg).
\end{equation}

 Recall that our estimator stores the cardinalities of the
 intersections of \emph{pairs of joinable columns}. An idea could be
 to estimate the cardinality of a union between $n$ sets by exploiting
 such cardinalities. Observe that, according to the equations above,
 the information about the cardinality of the intersection of pairs of
 sets is sufficient to calculate the cardinality of the union of two
 sets. However, it is not sufficient to calculate the cardinality of
 the union of more than two sets. Still, one can estimate a lower
 bound for such cardinality, by assuming that for three arbitrary sets
 $A$, $B$, $C$, we have that
\begin{equation}\label{eq:a4} \tag{a4}
  |A \cap B \cap C| ~=~ \min\left\{|A \cap B|, |A \cap C|, |B \cap C|\right\}.
\end{equation}
Intuitively, we are assuming that elements are shared \emph{as much as
  possible} between the three sets $A$, $B$, $C$.

Following this idea, in the next example we show how to calculate the
cardinality of a union of $n$ sets by using a set of linear
equations, and by relying only on the information about the
cardinalities of the intersections of pairs of sets.

\begin{example}\label{e:linear-equations}
  Consider the sets $A = \set{1,2,3,4}$, $B = \set{1,2,5,6}$, $C = \set{1,2,3,4}$, and $D = \set{5,6,7,8}$. We consider the following cardinalities to be available:
  \begin{itemize}
  \item $|A| = |B| = |C| = |D| = 4$
  \item $|A \cap B| = 2$, $|A \cap C| = 4$, $|A \cap D| = 0$, $|B \cap C| = 2$, $|B \cap D| = 2$, $|C \cap D| = 0$.
  \end{itemize}
  For each expression of the form $|S_1 \cap \cdots \cap S_k|$, we introduce
  a variable $x_{S_1\cdots S_k}$.

  In order to calculate $|A \cup B \cup C\cup D|$, by applying
  Equation~\eqref{eq:card-set-union-general} we first need to calculate
  $|A \cap B \cap C|$, $|A \cap B \cap D|$, $|B \cap C \cap D|$,
  $|A \cap C \cap D|$, and $|A \cap B \cap C \cap D|$. Boundaries for
  such values can be found by using the following system of linear
  inequalities:
\[
  \begin{array}{l}
      x_{ABC} \le \min\{x_{AB},x_{AC},x_{BC}\}\\
      x_{ABD} \le \min\{x_{AB}, x_{AD}, x_{BD}\}\\
      x_{BCD} \le \min\{x_{BC}, x_{CD}, x_{BD}\}\\
      x_{ACD} \le \min\{x_{AC}, x_{CD}, x_{AD}\} \\
      x_{ABCD} \le \min\{x_{ABC}, x_{ABD}, x_{BCD}, x_{ACD}\}\\
      x_{AB} = 2, x_{AC} = 4, x_{AD} = 0, x_{BC} = 2, x_{BD} = 2, x_{CD} = 0
  \end{array}
\]  %
By  Assumption~\eqref{eq:a4}, we obtain:
\[
\begin{array}{l}
      x_{ABC} = \min\{x_{AB},x_{AC},x_{BC}\} = \min\{2, 4, 2\} = 2\\
      x_{ABD} = \min\{x_{AB}, x_{AD}, x_{BD}\} = \min\{2, 0, 2\} = 0\\
      x_{BCD} = \min\{x_{BC}, x_{CD}, x_{BD}\} = \min\{2, 0, 2\} = 0\\
      x_{ACD} = \min\{x_{AC}, x_{CD}, x_{AD}\} = \min\{4, 0, 0\} = 0\\
      x_{ABCD} = \min\{x_{ABC}, x_{ABD}, x_{BCD}, x_{ACD}\} = \min\{2, 0, 0, 0\} = 0\\
      x_{AB} = 2, x_{AC} = 4, x_{AD} = 0, x_{BC} = 2, x_{BD} = 2, x_{CD} = 0
\end{array}
\]
  By applying the results of the system in Equation~\eqref{eq:card-set-union-general}, we obtain:
  \begin{align*}
    4\cdot 4 - (2 + 4 + 0 + 2 + 2 + 0) + (2 + 0 + 0 + 0) - 0 & = 16 - 10 + 2  =  8
  \end{align*}
  Observe that, in this example, the obtained lower bound $8$ corresponds to the actual cardinality of $A \cup B \cup C \cup D$. \qedfull
\end{example}

Observe that the previous approach for calculating the cardinality of
a union of sets comes at the cost of calculating an expression of
length exponential in the size of the union. Thus, this approach is
feasible only for unions of a small number of sets.

We now show how the approach above, or any other approach able to
estimate the cardinality for the union of sets, can be used to
calculate the cardinality of an unfolding in the presence of
projection. To do so, we first need to introduce the auxiliary notion
of \emph{answer template matrix}.

\begin{definition}
  Let $q(\vec{x}) \leftarrow L_1(\vec{v}_1), \ldots, L_n(\vec{v}_n)$ be a CQ,
  and consider the unfolding $\unf(q,\M) = (q_{\unf}(\vec{x}), \Pi)$ of $q$
  with respect to $\M$. W.l.o.g., we assume $\Pi$ to be of the form
  \[\Pi = \setB{q_{\unf}(\vec{f}_j(\vec{y}_j)) \leftarrow V_1^j \land \cdots
     \land V_n^j}{1 \le j \le m}.
  \]
  Then, the \emph{answer template matrix (ATM)} of $q_{\unf}$ is the matrix
\[
  \atm(q_{\unf}) =
  \begin{bmatrix}
    \vec{f}_1\\
    \vdots \\
    \vec{f}_m
  \end{bmatrix}
\]
of tuples of function templates $\vec{f}_1,\ldots,\vec{f}_m$.
\end{definition}

It is possible to use the notion of ATM to improve our estimation for the
cardinality of an unfolding in the presence of the projection operator.  The
estimation proceeds differently based on whether
\begin{inparaenum}[\itshape (i)]
\item the ATM does not contain duplicate rows, or
\item the ATM contains duplicate rows.
\end{inparaenum}

We start our discussion with case~\emph{(i)}.

\paragraph{Estimation without Duplicate Rows in the ATM.}

By arguing as in the proof of Theorem~\ref{t:est-unf}, it is immediate to see
that the ATM of an unfolding of a CQ $q$ with respect to $\wrap(\M)$ \emph{does not}
contain repeated rows if $q$ does not contain existentially quantified
variables. By exploiting the notion of ATM, the next result extends
Theorem~\ref{t:est-unf} to a more general case.

\begin{theorem}\label{t:est-unf-1}
  Consider a CQ $q(\vec{x}) \leftarrow L_1(\vec{v}_1), \ldots,
  L_n(\vec{v}_n)$. Then, if $\atm(\unf(q(\vec{x}), \wrap(\M)))$ does not
  contain repeated rows, the following holds:
  \[
    |\unf(q(\vec{x}), \M)^\D| = \sum_{q_{u} \in \unf(q, \wrap(\M))}
    |q_{u}(\vec{x})^\D|
  \]
\end{theorem}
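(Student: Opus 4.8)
The plan is to follow the template of the proof of Theorem~\ref{t:est-unf}, the only change being \emph{how} I establish that the answer sets of distinct rules in the unfolding are pairwise disjoint. Whereas Theorem~\ref{t:est-unf} derived head non-unifiability from the hypothesis $\vecSet{x} = \bigcup_{i=1}^n \vecSet{v_i}$ (all variables are answer variables), here I would derive it directly from the assumption that $\atm(\unf(q(\vec{x}), \wrap(\M)))$ has no repeated rows. This is the natural common generalization, since the only role the old hypothesis played was to force the heads of distinct unfolding rules to carry distinct tuples of function symbols.

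First I would reduce the claim to $\wrap(\M)$. By Lemma~\ref{l:wrap-mapping} we have $\M \equiv \wrap(\M)$, so by Lemma~\ref{t:rev-sound-compl-unf} the two unfoldings evaluate identically, giving $|\unf(q(\vec{x}), \M)^\D| = |\unf(q(\vec{x}), \wrap(\M))^\D|$ for every $\D$. Writing $\unf(q, \wrap(\M))$ as the union of its constituent CQ rules, $\unf(q(\vec{x}), \wrap(\M))^\D = \bigcup_{q_u} q_u(\vec{x})^\D$. Hence it suffices to show these sets are pairwise disjoint: once disjointness holds, $|\bigcup_{q_u} q_u(\vec{x})^\D| = \sum_{q_u} |q_u(\vec{x})^\D|$ is immediate, which is exactly the thesis.

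Next I would prove pairwise disjointness from the ATM hypothesis. Write $\Pi$ in its normalized form $\setB{q_{\unf}(\vec{f}_j(\vec{y}_j)) \leftarrow V_1^j \land \cdots \land V_n^j}{1 \le j \le m}$, so that the $j$-th row of $\atm(q_{\unf})$ is precisely the tuple $\vec{f}_j$ of function templates in the head of the $j$-th rule. Take two distinct rules $q_u, q_{u'}$ with heads $q_{\unf}(\vec{f}_j(\vec{y}_j))$ and $q_{\unf}(\vec{f}_{j'}(\vec{y}_{j'}))$. Since distinct rules carry distinct indices and the ATM has no repeated rows, the map $j \mapsto \vec{f}_j$ is injective, so $\vec{f}_j \neq \vec{f}_{j'}$. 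Recalling that each argument of a target atom is a function symbol applied to variables, the tuples $\vec{f}_j, \vec{f}_{j'}$ differ in the outermost function symbol at some argument position; therefore the two heads are not unifiable. Lemma~\ref{l:basic-Datalog-fact} then yields $q_u(\vec{x})^\D \cap q_{u'}(\vec{x})^\D = \emptyset$, and since $q_u, q_{u'}$ were arbitrary distinct rules, disjointness holds throughout.

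The routine part is the reduction step; the conceptual heart is the exact correspondence between unfolding rules and ATM rows. The main obstacle I anticipate is making precise that the normalized head of each rule has its function-symbol tuple equal to the corresponding ATM row, so that ``no repeated rows'' translates verbatim into $\vec{f}_j \neq \vec{f}_{j'}$ for distinct rules. Once that inequality is in hand, non-unifiability of the heads is automatic (no constants occur in heads produced by the unfolding, and the targets place only fresh variables beneath the function symbols), and no machinery beyond Lemma~\ref{l:basic-Datalog-fact} is needed. As a sanity check, this recovers Theorem~\ref{t:est-unf}: when $q$ has no existential variables, the ATM over $\wrap(\M)$ has no repeated rows, as already observed in the paragraph preceding the statement.
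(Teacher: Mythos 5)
Your proposal is correct and follows essentially the same route as the paper's own proof: reduce to $\wrap(\M)$ via Lemma~\ref{l:wrap-mapping}, show that distinct rules in the unfolding have non-unifiable heads because the no-repeated-rows hypothesis on the ATM forces their function-symbol tuples to differ, and conclude pairwise disjointness via Lemma~\ref{l:basic-Datalog-fact} so that the cardinality of the union is the sum of the cardinalities. The only difference is that you spell out the rule-to-ATM-row correspondence and the resulting non-unifiability in more detail, a step the paper asserts in one line.
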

\begin{proof}
  Since by Lemma~\ref{l:wrap-mapping} $\M \equiv \wrap(\M)$, it
  suffices to prove that for each pair of different queries
  $q_u, q_u'$ in $\unf(q(\vec{x}), \wrap(\M))$ it holds that
  $q_u^\D \cap q_u'^\D = \emptyset$. Consider two arbitrary queries
  $q_u, q_u'$ in $\unf(q(\vec{x}), \wrap(\M))$.  By
  Definition~\ref{d:unf-cq}, there must exist two different lists of
  mappings $(m_1, \ldots, m_n)$ and $(m_1', \ldots, m_n')$ in
  $\wrap(\M)$ and two substitutions $\sigma$, $\sigma'$ such that
  \begin{equation}
    \label{eq:head-unification}
    \head(q_u) = q_u(\sigma(\vec{x})) \qquad\text{and}\qquad
    \head(q_u') = q_u'(\sigma'(\vec{x})).
  \end{equation}
  Since the ATM of the unfolding does not contain repeated rows, then
  necessarily $\sigma(\vec{x})$ and $\sigma'(\vec{x})$ are not unifiable.
  Then, by Lemma~\ref{l:basic-Datalog-fact}, it follows that
  \[
    q_{u}(\sigma(\vec{x}))^\D \cap q_{u}'(\sigma'(\vec{x}))^\D = \emptyset
  \]
  By Equations~\eqref{eq:head-unification}, it holds that
  $q_u^\D \cap q_u'^\D = \emptyset$. Since the choice of $q_u, q_u'$ was
  arbitrary, we conclude the thesis.
\end{proof}

Theorem~\ref{t:est-unf-1} gives us a less restrictive condition for calculating
the cardinality of an unfolding, as it essentially says that projecting out
variables does not change the number of results as long as duplicate rows do
not appear in the ATM of the unfolding.

\paragraph{Estimation with Duplicate Rows in the ATM.}

We now study the problem of estimating the number of results of an unfolding
for which the ATM contains duplicate rows. First, observe that any unfolding
can be partitioned into several different UCQs where either the ATM \emph{does
 not} contain duplicate rows, or all rows are equal.  Summing up the
cardinalities of such partitions would give the cardinality of the original
unfolding.  Hence, we can calculate the cardinality of a general unfolding by
handling such two cases.  In the previous paragraph we have already seen how to
deal with the case in which there are no duplicate rows in the ATM. In this
paragraph we study the case in which all rows of the ATM are equal.
For this we exploit another useful property of the unfolding with respect to
the wrap of a set of mappings.

\ignore{
  \begin{theorem}\label{t:single-view-theorem}
    Let $q(\vec{x}) \leftarrow L_1(\vec{v}_1), \ldots, L_n(\vec{v}_n)$ be a CQ
    for which the unfolding $\unf(q,\wrap(\M)) = (q_{\unf}(\vec{x}), \Pi)$ of $q$
    with respect to $\wrap(\M)$ is such that all rows in the ATM are equal to
    some tuple $\vec{f}$ of templates.  Let $L_i(\vec{v}_i)$ be an atom in $q$
    such that $\vecSet{v}_i \subseteq \vecSet{x}$. Then, for each pair of rules
    \[
    \begin{array}{rrcl}
      r_1 = & q_{\unf}(\vec{f}(\vec{x_{j}})) &\leftarrow& V_1^j(\vec{y_1^j}),
      \ldots, V_i^j(\vec{y_i^j}), \ldots, V_n^j(\vec{y_n^j}),\\
      r_2 = & q_{\unf}(\vec{f}(\vec{x_{k}})) &\leftarrow& V_1^k(\vec{y_1^k}), \ldots,
      V_i^k(\vec{y_i^k}), \ldots, V_n^k(\vec{y_n^k})
    \end{array}
    \]
    in $\Pi$, it holds that $V_i^j = V_i^k$.
    \begin{proof}
      Without loss of generality, assume $i=1$. Since $r_1, r_2 \in \Pi$, then by Definition~\ref{d:unf-cq} of unfolding of a CQ there must
      exist two pairs $(\sigma_a, (m_1^a, \ldots, m_n^a))$, and $(\sigma_b, (m_1^b, \ldots, m_n^b))$ such that
      $(\vec{f}_j(\vec{x}_j)) = (\sigma_a(\vec{v}_1), \ldots, \sigma_n(\vec{v}_n))$, $(\vec{f}_k(\vec{x}_k)) = (\sigma_b(\vec{v}_1), \ldots, \sigma_b(\vec{v}_n))$, and
      such that $\sigma_a(\source(m_1^a)) = V_1^j(\vec{y}_1^j)$, $\sigma_b(\source(m_1^b)) = V_1^k(\vec{y}_1^k)$.
      Let $\sigma_a(v_1) = \vec{f}_a(\vec{x}_a)$ and $\sigma_b(v_1) = \vec{f}_b(\vec{x}_b)$, for tuples of function symbols $\vec{f}_a$, $\vec{f}_b$
      and variables $\vec{x}_a$, $\vec{x}_b$. Since $\vec{f}_j = \vec{f} = \vec{f}_k$, it then must be $\vec{f}_a = \vec{f}_b$. Hence, $\sign(m_1^a) = \sign(m_1^b)$.
      Then, by Lemma~\ref{l:wrap-template-sharing}, it must be $m_a = m_b$. Hence, $\pred(\sigma_a(\source(m_1^a))) = \pred(\sigma_b(\source(m_1^b)))$. Since
      $\sigma_a(\source(m_1^a)) = V_1^j(\vec{y}_1^j)$ and $\sigma_b(\source(m_1^b)) = V_1^k(\vec{y}_1^k)$, it then must be $\pred(V_1^j(\vec{y}_1^j)) = \pred(V_1^k(\vec{y}_1^k))$, our thesis.
    \end{proof}
  \end{theorem}
}
\begin{theorem}\label{t:single-view-theorem}
  Let $q(\vec{x}) \leftarrow L_1(\vec{v}_1), \ldots, L_n(\vec{v}_n)$ be a CQ
  for which the unfolding $\unf(q,\wrap(\M))$ of $q$
  with respect to $\wrap(\M)$ is such that all rows in the ATM are equal to
  some tuple $\vec{f}$ of templates.  Let $L_i(\vec{v}_i)$ be an atom in $q$
  such that $\vecSet{x} \supseteq \vecSet{v}_i$. Then, for each pair of rules
  \[
    \begin{array}{rrcl}
      r_1 = & q_{u}(\vec{f}(\vec{y})) &\leftarrow& V_1(\vec{y}_1),
      \ldots, V_i(\vec{y}_i), \ldots, V_n(\vec{y}_n),\\
      r_2 = & q_{u}'(\vec{f}(\vec{y'})) &\leftarrow& V_1'(\vec{y}_1'), \ldots,
      V_i'(\vec{y}_i'), \ldots, V_n'(\vec{y}_n')
    \end{array}
  \]
  in the program of $\unf(q,\wrap(\M))$, it holds that $V_i = V_i'$.
\end{theorem}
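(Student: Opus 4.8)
The plan is to unravel both rules through Definition~\ref{d:unf-cq} and then reduce the claim to an equality of mapping signatures, which Lemma~\ref{l:wrap-template-sharing} upgrades to an equality of mappings. First I would record what the definition of the unfolding of a CQ gives for each rule. Since $r_1$ belongs to $\unf(q,\wrap(\M))$, there exist a tuple $(m_1,\ldots,m_n)$ of mappings in $\wrap(\M)$, say $m_j = L_j(\vec{g}_j(\vec{x}_j)) \leftsquigarrow V_j(\vec{z}_j)$, and a most general unifier $\sigma$ of the set $\{(L_j(\vec{v}_j), L_j(\vec{g}_j(\vec{x}_j))) \mid 1 \le j \le n\}$, such that $r_1$ is $q_u(\sigma(\vec{x})) \leftarrow V_1(\sigma(\vec{z}_1)), \ldots, V_n(\sigma(\vec{z}_n))$; in particular the $i$-th body atom $V_i(\vec{y}_i)$ equals $V_i(\sigma(\vec{z}_i))$, so the view name $V_i$ is exactly the source name of $m_i$. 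I would introduce the analogous data $(m_1',\ldots,m_n')$, with $m_j' = L_j(\vec{g}_j'(\vec{x}_j')) \leftsquigarrow V_j'(\vec{z}_j')$, and mgu $\sigma'$ for $r_2$.

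The core step, and the one I expect to be the delicate one, is to show $\vec{g}_i = \vec{g}_i'$, i.e.\ that the two $i$-th mappings carry the same tuple of function symbols. Because $\sigma$ unifies $L_i(\vec{v}_i)$ with $L_i(\vec{g}_i(\vec{x}_i))$, we have $\sigma(\vec{v}_i) = \vec{g}_i(\sigma(\vec{x}_i))$, so the tuple of outermost function symbols of $\sigma(\vec{v}_i)$ is precisely $\vec{g}_i$. Here the hypothesis $\vec{x} \supseteq \vec{v}_i$ becomes essential: since every variable of $\vec{v}_i$ is an answer variable, the components of $\sigma(\vec{v}_i)$ occur among the components of the head $\sigma(\vec{x}) = \vec{f}(\vec{y})$, and therefore the outermost function symbols at those positions are read off directly from $\vec{f}$. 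Hence $\vec{g}_i$ equals the restriction of $\vec{f}$ to the positions occupied by $\vec{v}_i$. Running the identical argument for $r_2$, and using that \emph{all} rows of the ATM are equal to the \emph{same} tuple $\vec{f}$ (so that $\sigma'(\vec{x}) = \vec{f}(\vec{y'})$ uses the very same $\vec{f}$ rather than merely a template of the same shape), yields that $\vec{g}_i'$ equals the same restriction of $\vec{f}$. Consequently $\vec{g}_i = \vec{g}_i'$.

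With the templates identified, the conclusion is immediate: the mappings $m_i$ and $m_i'$ both have signature $(L_i,\vec{g}_i) = (L_i,\vec{g}_i')$. Since $m_i, m_i' \in \wrap(\M)$, Lemma~\ref{l:wrap-template-sharing} forces $m_i = m_i'$, and in particular their source view names coincide. As the $i$-th body atoms of $r_1$ and $r_2$ are $V_i(\sigma(\vec{z}_i))$ and $V_i'(\sigma'(\vec{z}_i'))$ with $V_i, V_i'$ the source names of $m_i, m_i'$, we conclude $V_i = V_i'$, as required.

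The only genuine subtlety is the template-reading step in the second paragraph: one must argue carefully that the outermost function symbols of $\sigma(\vec{v}_i)$ are truly determined by $\vec{f}$, a step that fails without $\vec{v}_i \subseteq \vec{x}$, since an existential variable in $\vec{v}_i$ would have its $\sigma$-image hidden from the head. Everything else is bookkeeping over Definition~\ref{d:unf-cq} and a direct appeal to Lemma~\ref{l:wrap-template-sharing}; as in the analogous argument for Theorem~\ref{t:est-unf}, I would optionally reduce to the case $i = 1$ by reordering the atoms so as to keep the notation light.
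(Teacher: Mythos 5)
Your proof is correct and follows essentially the same route as the paper's: the same unravelling of the two rules via Definition~\ref{d:unf-cq}, the same crucial use of the hypotheses $\vecSet{x} \supseteq \vecSet{v}_i$ and of the single ATM row $\vec{f}$, and the same appeal to Lemma~\ref{l:wrap-template-sharing} to pass from equal signatures $(L_i,\vec{g}_i)=(L_i,\vec{g}_i')$ to $m_i = m_i'$ and hence $V_i = V_i'$. The only difference is presentational: the paper argues by contradiction (assuming $m_i \neq m_i'$, it derives via Lemma~\ref{l:wrap-template-sharing} that the templates differ, hence that the heads cannot unify, contradicting that both heads are built from $\vec{f}$), whereas you run the contrapositive directly by reading the template of the $i$-th atom off the head $\vec{f}$ at the positions of $\vec{v}_i$ in $\vec{x}$.
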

\begin{proof}
  By  Definition~\ref{d:unf-cq}, there must exist two different lists of
  mappings $(m_1, \ldots, m_n)$ and $(m_1', \ldots, m_n')$ in
  $\wrap(\M)$ and two substitutions $\sigma$, $\sigma'$ such that:
  \begin{enumerate}[\itshape (i)]
  \item $q_u(\vec{f}(\vec{y})) = q_{u}(\sigma(\vec{x}))$,
    $q'_u(\vec{f}(\vec{y'})) = q'_u(\sigma'(\vec{x}))$;
  \item $\target(m_j) = L_j(\sigma(\vec{v}_j))$, $\target(m_j') =
    L_j(\sigma'(\vec{v}_j))$, for each $j \in \set{1, \ldots, n}$;
  \item $\sigma(\source(m_j)) = V_j(\vec{y}_j)$, $\sigma'(\source(m_j')) =
    V'_j(\vec{y}'_j)$, for each $j \in \set{1, \ldots, n}$.
  \end{enumerate}

  By~\emph{(iii)}, in order to show that $V_i = V_i'$, it suffices to prove
  that $m_i = m_i'$.  Assume by contradiction that $m_i \neq m_i'$.  Let
  $\sigma(\vec{v}_i) = \vec{f}_i(\vec{x}_i)$ and
  $\sigma'(\vec{v}_i) = \vec{f}_i'(\vec{x}_i')$, where $\vec{f}_i(\vec{x}_i)$
  and $\vec{f}_i'(\vec{x}_i')$ are tuples of terms.  By~\emph{(ii)} and
  Lemma~\ref{l:wrap-template-sharing}, it must be $\vec{f}_i \neq
  \vec{f}_i'$. Hence, $\vec{f}_i(\vec{x}_i)$ and $\vec{f}_i'(\vec{x}_i')$ do
  not unify. Therefore, also $\sigma(\vec{v}_i)$ and $\sigma'(\vec{v}_i)$ do
  not unify. Since, by assumption, $\vecSet{x} \supseteq \vecSet{v}_i$, we
  conclude that $\sigma(\vec{x})$ does not unify with
  $\sigma'(\vec{x})$. However, this contradicts with~\emph{(i)}.  Hence,
  $m_i=m_i'$.
\end{proof}

In the next example we show how to exploit Therem~\ref{t:single-view-theorem}
for the estimation of the cardinality of an unfolding.

\begin{example}\label{e:blue-cols-madness}
  Consider a database instance $\D$ and the CQ
  $q(x,y) \leftarrow P_1(x,y), P_2(y,z), P_3(z,w)$, for which all rows in the
  ATM of $\unf(q, \wrap(\M)) = (q_{unf}(x,y), \Pi)$ are equal.  Then, by
  Theorem~\ref{t:single-view-theorem} above, $\Pi$ is of the following shape:
  \[
    \begin{array}{l l l l}
      cq_1: & q_{\unf}(f(x),g(y)) & \leftarrow
      & V_{P_1}(x,y), V^1_{P_2}(\textcolor{blue}{y},z_1), V^1_{P_3}(z_1,w_1)\\
      & & & ~~~~~~~~~\vdots\\
      cq_n: & q_{\unf}(f(x),g(y)) & \leftarrow
      & V_{P_1}(x,y), V^n_{P_2}(\textcolor{blue}{y},z_n), V^n_{P_3}(z_n,w_n)
    \end{array}
  \]
  That is, the relation $V_{P_1}$ appears in each $cq_1, \ldots, cq_n$. Due to
  this fact, together with the fact that atoms
  $V^1_{P_3}(z_1,w_1), \ldots, V^n_{P_3}(z_n,w_n)$ do not contain answer
  variables, we have the interesting property that any duplicate \emph{across
   different CQs} in the occurrences of \textcolor{blue}{$y$} in $V^j_{P_2}$
  (marked in \textcolor{blue}{blue}), for $j\in\set{1,\ldots,n}$, would produce
  a duplicate result in $q^\D$.  Under Assumption~\eqref{eq:uniformity} of
  uniformity, we get
  \[
    |q(x,\textcolor{blue}{y})^{\D}| ~=~
    \frac{|(\pi_{\textcolor{blue}{y}}(\cq_1^*) \cup \cdots \cup
     \pi_{\textcolor{blue}{y}}(\cq_n^*))^{\D}|}{\sum_{i=1}^n
     |(\pi_{\textcolor{blue}{y}}(\cq_i^*))^{\D}|}
    \cdot \sum_{j=1}^n |\cq_j^{\D}|
    \text{,}
  \]
  where $\cq_i^*$ represents the query $\cq_i$ in which we have replaced in the
  head $f(x)$ with $x$ and $g(y)$ with $y$, and the fraction
  \[
    \frac{|(\pi_{\textcolor{blue}{y}}(\cq_1^*) \cup \cdots \cup
     \pi_{\textcolor{blue}{y}}(\cq_n^*))^{\D}|}{\sum_{i=1}^n
     |(\pi_{\textcolor{blue}y}(\cq_i^*))^{\D}|}
  \]
  denotes the ratio of distinct values across the different CQs over the answer
  variable $\textcolor{blue}{y}$, calculated by assuming uniformity, i.e., that
  each value is repeated the same amount of times.
  \qedfull
\end{example}

For estimating the cardinality of the unfolding from the previous example, we
need to be able estimate the quantity
$|(\pi_{\textcolor{blue}{y}}(cq_1^*) \cup \cdots \cup
\pi_{\textcolor{blue}{y}}(cq_n^*))^{\D}|$.  For this, we can resort to
Equation~\eqref{eq:card-set-union-general} as in
Example~\ref{e:linear-equations}, by using the available statistics on
intersections between pairs of attributes sets that are arguments for some
function symbol.
We also observe that, in the case where $V_{P_2}^j = V_{P_2}^k$ for some $j$
and $k$, then
$|(\pi_{\textcolor{blue}{y}}(cq_j^*) \cup
\pi_{\textcolor{blue}{y}}(cq_k^*))^{\D}| =
\max\{|(\pi_{\textcolor{blue}{y}}(cq_j^*))^{\D}|,
|(\pi_{\textcolor{blue}{y}}(cq_k^*))^{\D}|\}$, and the computation of the
cardinality of the union can be simplified.

\begin{example}\label{e:blue-cols-madness-2}
  Recall the scenario from Example~\ref{e:blue-cols-madness}. Assume that
  $n = 6$, and that
  \[
    \begin{array}{l}
      V_{P_2}^1 = V_{P_2}^2\\
      V_{P_2}^3 = V_{P_2}^4\\
      V_{P_2}^5 = V_{P_2}^6\\
    \end{array}
  \]
  Then, let
  \[
    \begin{array}{l}
      |(\pi_{\textcolor{blue}{y}}(cq_1^*))^\D| ~=~
      \max\{|(\pi_{\textcolor{blue}{y}}(cq_1^*))^{\D}|,
      |(\pi_{\textcolor{blue}{y}}(cq_2^*))^{\D}|\} ~=~ |(\pi_{\textcolor{blue}{y}}(cq_1^*) \cup
      \pi_{\textcolor{blue}{y}}(cq_2^*))^{\D}| \\
      |(\pi_{\textcolor{blue}{y}}(cq_3^*))^\D| ~=~
      \max\{|(\pi_{\textcolor{blue}{y}}(cq_3^*))^{\D}|,
      |(\pi_{\textcolor{blue}{y}}(cq_4^*))^{\D}|\}  ~=~ |(\pi_{\textcolor{blue}{y}}(cq_3^*) \cup
      \pi_{\textcolor{blue}{y}}(cq_4^*))^{\D}| \\
      |(\pi_{\textcolor{blue}{y}}(cq_5^*))^\D| ~=~
      \max\{|(\pi_{\textcolor{blue}{y}}(cq_5^*))^{\D}|,
      |(\pi_{\textcolor{blue}{y}}(cq_6^*))^{\D}|\} ~=~ |(\pi_{\textcolor{blue}{y}}(cq_5^*) \cup
      \pi_{\textcolor{blue}{y}}(cq_6^*))^{\D}| .
    \end{array}
  \]
  Hence, 
  \[
  |(\pi_{\textcolor{blue}{y}}(cq_1^*) \cup \cdots \cup \pi_{\textcolor{blue}{y}}(cq_6^*))^{\D}| = |(\pi_{\textcolor{blue}{y}}(cq_1^*) \cup \pi_{\textcolor{blue}{y}}(cq_3^*) \cup \pi_{\textcolor{blue}{y}}(cq_5^*))^\D|.
  \]
  By using Equation~\eqref{eq:card-set-union-general}, we get the formula:
  \begin{equation}\label{eq:c:planning:est-union-of-proj}
    |(\pi_{\textcolor{blue}{y}}(cq_1^*) \cup \pi_{\textcolor{blue}{y}}(cq_3^*) \cup \pi_{\textcolor{blue}{y}}(cq_5^*))^\D| = \\
    \begin{array}[t]{l}
      |(\pi_{\textcolor{blue}{y}}(cq_1^*))^{\D}| + |(\pi_{\textcolor{blue}{y}}(cq_3^*))^{\D}| + |(\pi_{\textcolor{blue}{y}}(cq_5^*))^{\D}| \\
      {}- |(\pi_{\textcolor{blue}{y}}(cq_1^*) \cap \pi_{\textcolor{blue}{y}}(cq_3^*))^{\D}| - |(\pi_{\textcolor{blue}{y}}(cq_1^*) \cap \pi_{\textcolor{blue}{y}}(cq_5^*))^{\D}|
      - |(\pi_{\textcolor{blue}{y}}(cq_3^*) \cap \pi_{\textcolor{blue}{y}}(cq_5^*))^{\D}|\\
      {}+ |(\pi_{\textcolor{blue}{y}}(cq_1^*) \cap \pi_{\textcolor{blue}{y}}(cq_3^*) \cap \pi_{\textcolor{blue}{y}}(cq_5^*))^{\D}|.
    \end{array}
  \end{equation}
    Assume that 
    \[
    \max\{|(\pi_{\textcolor{blue}{y}}(cq_1^*) \cap \pi_{\textcolor{blue}{y}}(cq_3^*))^{\D}|, |(\pi_{\textcolor{blue}{y}}(cq_1^*) \cap \pi_{\textcolor{blue}{y}}(cq_5^*))^{\D}|, |(\pi_{\textcolor{blue}{y}}(cq_3^*) \cap \pi_{\textcolor{blue}{y}}(cq_5^*))^{\D}|\} = |(\pi_{\textcolor{blue}{y}}(cq_1^*) \cap \pi_{\textcolor{blue}{y}}(cq_3^*))^{\D}|.
    \]
    Then, by relying on Assumption~\eqref{eq:a4}, Equation~\eqref{eq:c:planning:est-union-of-proj} can be rewritten as:
    \[
    |(\pi_{\textcolor{blue}{y}}(cq_1^*) \cup \pi_{\textcolor{blue}{y}}(cq_3^*) \cup \pi_{\textcolor{blue}{y}}(cq_5^*))^\D| = \\
    \begin{array}[t]{l}
      |(\pi_{\textcolor{blue}{y}}(cq_1^*))^{\D}| + |(\pi_{\textcolor{blue}{y}}(cq_3^*))^{\D}| + |(\pi_{\textcolor{blue}{y}}(cq_5^*))^{\D}| \\
      {}- 2 \cdot |(\pi_{\textcolor{blue}{y}}(cq_1^*) \cap \pi_{\textcolor{blue}{y}}(cq_3^*))^{\D}| - |(\pi_{\textcolor{blue}{y}}(cq_1^*) \cap \pi_{\textcolor{blue}{y}}(cq_5^*))^{\D}|
      - |(\pi_{\textcolor{blue}{y}}(cq_3^*) \cap \pi_{\textcolor{blue}{y}}(cq_5^*))^{\D}|\text{.}\\
    \end{array}
    \]
    We use such formula, containing only binary intersections, to compute the ratio of distinct values
    \[
    r ~=~ \frac{|(\pi_{\textcolor{blue}{y}}(cq_1^*) \cup \pi_{\textcolor{blue}{y}}(cq_3^*) \cup \pi_{\textcolor{blue}{y}}(cq_5^*))^{\D}|}{|(\pi_{\textcolor{blue}{y}}(cq_1^*))^{\D}| +
      |(\pi_{\textcolor{blue}{y}}(cq_3^*))^{\D}| + |(\pi_{\textcolor{blue}{y}}(cq_5^*))^{\D}|}\text{.}
    \]
    Hence, we estimate
    $|q(x,\textcolor{blue}{y})^\D|$ to be
    \begin{equation}\label{eq:c:planning:final-eq-planning}
      r \cdot \sum_{j=1}^6 |cq_j^{\D}|
    \end{equation}
\qedfull
\end{example}

The previous example makes use of Equation~\eqref{eq:card-set-union-general}, which is practical only if the union is on a small number of sets. Additionally, for the sake of clarity, we have used the actual cardinalities, rather than estimates. We observe, however, that all the cardinalities in Formula~\eqref{eq:c:planning:final-eq-planning} can be estimated by using our estimators $\cardEst$, $\fv$, and $\dist_{\D}$. 

\section{Unfolding Cost Model}\label{c:planning:s:unf-cost-model}
\label{c:planning:s:unf-cost-model}

We are now ready to estimate the actual costs of evaluating UJUCQ and UCQ
unfoldings, by exploiting the cardinality estimations from the previous
section. Our cost model is based on traditional textbook-formulae for query
cost estimation~\cite{SiKS05}, and it assumes source parts in the mappings
to be CQs.

\subsection{Cost for the Unfolding of a UCQ.} Recall from Section~\ref{c:planning:s:cover-based-trans} that the unfolding of a UCQ produces a UCQ translation $q^{\ucq} = \bigvee_i q_i^{\cq}$. We estimate the cost of evaluating $q^{\ucq}$ as
\cmath{
 \label{eq:ucq-cost}
 c_{\eval}(q^{\ucq}) \enskip = \enskip \sum_i c_{\eval}(q_i^{\cq}) + c_{u}(q^{\ucq})}%
where
\begin{compactitem}
\item $c_{\eval}(q_i^{\cq})$ is the cost of evaluating each $q_i^{\cq}$ in $q^{\ucq}$;
\item $c_{u}(q^{\ucq})$ is the cost of removing duplicate results.
\end{compactitem}

In the remainder of this paragraph we explain how we calculate each of these addends.

\myPar{Value of ${c_{\eval}(q^{\cq})}$.} Let $q^{\cq} = V_1 \Join \cdots \Join V_n$, where $V_i = T_{i1} \Join \cdots \Join T_{i{m_i}}$, where $m_i \in \mathbb{N}$ and $1 \le i \le n$.
\noindent Then, we define $c_{\eval}(q^{\cq})$ as
\begin{align*}
  c_{\eval}(q^{\cq}) = \sum_{1 \le i \le n}\sum_{1 \le j \le m_i} c_{\scan}(T_{ij}) + c_{\hjoin}(q^{\cq})
\end{align*}
where
\begin{itemize}
\item $c_{\scan}(T_{ij}) = \distSet{T_{ij}^\D} \times c_t$
  \begin{itemize}
  \item where $c_t$ is the fixed cost of retrieving one tuple from the database. Such constant, as well as other constants in this section, can be
    found through \emph{calibration techniques}~\cite{DBLP:conf/vldb/GardarinST96}.
  \end{itemize}
\item $c_{\hjoin}(q_i^{\cq}) = \left(\sum_{i=1}^n m_i\right) \cdot \cardEst(q_i^{\cq}) \cdot c_j$
  \begin{itemize}
  \item where $c_j$ is the fixed cost of joining one tuple.
  \end{itemize}
\end{itemize}

Summing up, $c_{\eval}(q_i^{\cq})$ is the cost of scanning each table in the conjunctive query and performing a \emph{hash join}.
We assume the statistics $\distSet{T_{ij}^\D}$, for $1 \le i \le n$, $1 \le j \le m_i$, to be available after having analyzed the database instance according to the database schema.

\myPar{Value of ${c_{u}(q^{\ucq})}$.}
We assume the removal of duplicates at this level to be carried out by a sort-based strategy. Under this assumption, the cost of removal is
\[c_{u}(q^{\ucq}) = \cardEst(q^{ucq}) \cdot log(\cardEst(q^{\ucq})) \cdot c_u\]
where $\cardEst(q^{\ucq})$ is the cardinality estimation of the unfolding,
calculated as described in Section~\ref{c:planning:s:unf-est}, and $c_u$ is a constant denoting the cost of eliminating a duplicate tuple.

\myPar{Cost for the Unfolding of a  JUCQ.} Recall from Section~\ref{c:planning:s:cover-based-trans} that the optimized unfolding of a JUCQ produces a UJUCQ. We estimate the cost of a single JUCQ $q^{jucq} = \bigwedge_{i} q_{i}^{\ucq}$ in the unfolding as
 \cmath{
 \label{eq:jucq-cost}
 \textstyle
 c(q^{\jucq}) \enskip=\enskip \sum_i c_{\eval}(q_i^{\ucq}) + \sum_{i \neq k}
 c_{\mat}(q_i^{\ucq}) + c_{\mjoin}(q^{\jucq})}%
where
\begin{compactitem}
\item $c_{\eval}(q_i^{\ucq})$ is the cost of evaluating each UCQ component
  $q_i^{\ucq}$;
\item $\sum_{i \neq k}c_{\mat}(q_i^{\ucq})$ is the cost of materializing the intermediate
  results from $q_i^{\ucq}$, where the $k$-th UCQ is assumed to be \emph{pipelined}~\cite{SiKS05} and not materialized; 
\item $c_{\mjoin}(q^{\jucq})$ is the cost of a merge join over the materialized intermediate results.
\end{compactitem}

The cost for a UJUCQ $q^{\ujucq}=\bigvee_i q^{jucq}_i$, if all the attributes are kept in the answer, is simply the sum $\sum_i c(q^{\jucq}_i)$, since the results of all JUCQs are disjoint (c.f., Section~\ref{c:planning:s:cover-based-trans}). Otherwise, we need to consider the cost of eliminating duplicate results.

\myPar{Value of ${c_{\mat}(q_i^{\ucq})}$.} Let $q_i^{\ucq} = q_1^{\cq} \Join \cdots \Join q_m^{\cq}$. Then
\[c_{\mat}(q_i)^{\ucq} = \cardEst(q_i^{\ucq}) \cdot c_m\]
\noindent where $c_m$ is the fixed cost of materializing a tuple.

\myPar{Value of ${c_{\mjoin}(q^{\jucq})}$.} Let $q^{\jucq} = q_1^{\ucq} \Join \cdots \Join q_m^{\ucq}$. Then
\[c_{\mjoin}(q^{\jucq}) = m \cdot \cardEst(q^{\jucq}) \cdot c_j \]
\noindent where $\cardEst(q^{\jucq})$ is the cardinality estimation of the unfolding, calculated as described in Section~\ref{c:planning:s:unf-est}. Observe that
we do not consider the cost of sorting each $q_i^{\ucq}$, as this cost is already included in $c_{\eval}(q_i^{\ucq})$.

\section{Experimental Results}\label{c:planning:s:eval}

In this section, we provide an empirical evaluation that compares unfoldings of UCQs and (optimized) unfoldings of JUCQs, as well as the estimated costs and the actual time needed to evaluate the unfoldings.
We ran the experiments on an HP Proliant server with
2 Intel Xeon X5690 Processors (each with 12 logical cores at 3.47GHz),
 106GB of RAM and five 1TB 15K RPM HDs.
%
 As RDBMS we have used PostgreSQL 9.6. The material to replicate our  experiments is available  online~\footnote{\url{https://github.com/ontop/ontop-examples/tree/master/iswc-2017-cost}}.

\begin{figure}[t]
  \scriptsize
  \centering
\begin{subfigure}[b]{0.49\textwidth}
  \includegraphics[width=\textwidth]{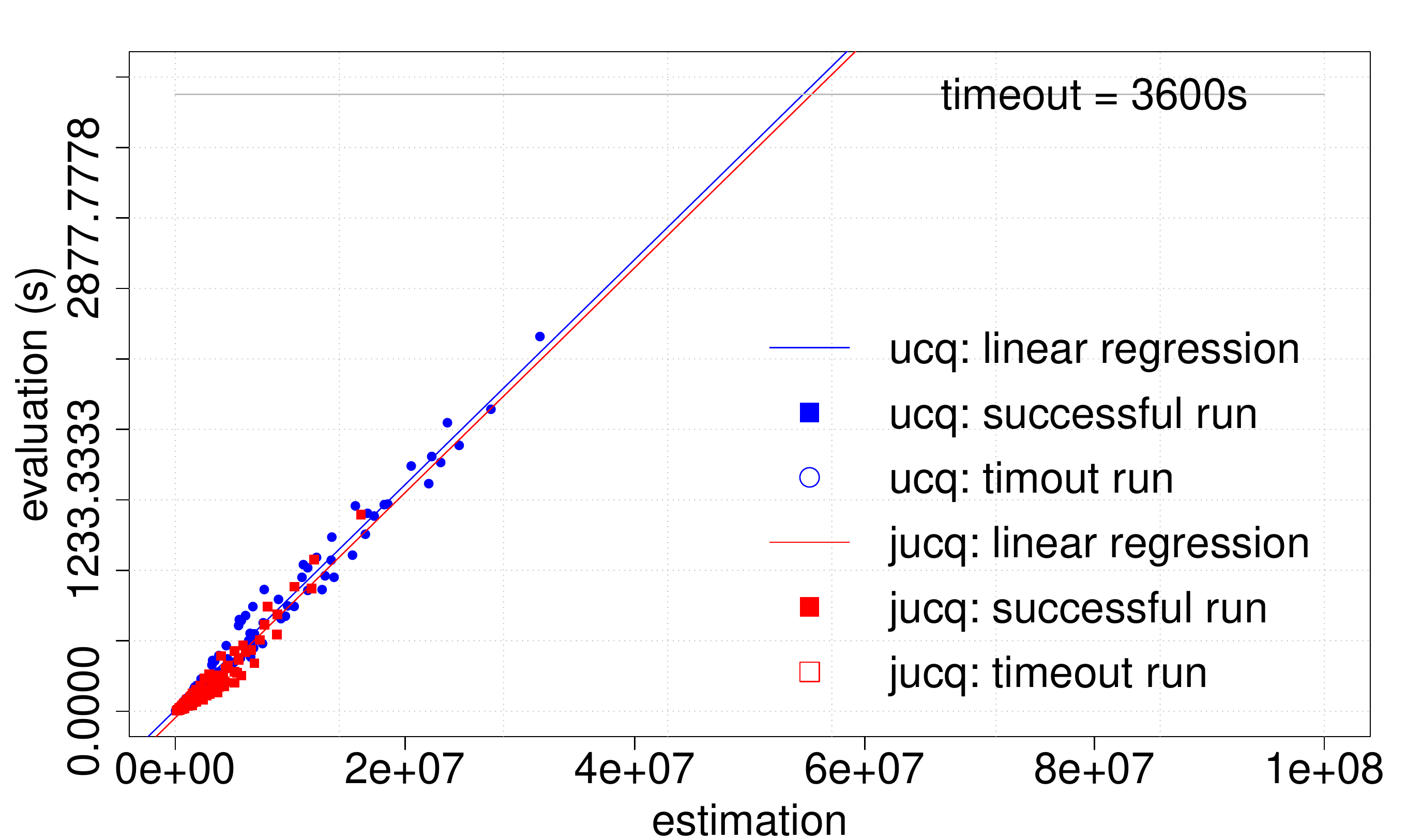}
\caption{Our cost model vs. evaluation}
\label{fig:est-vs-eval:ours}
\end{subfigure}
\begin{subfigure}[b]{0.49\textwidth}
\includegraphics[width=\textwidth]{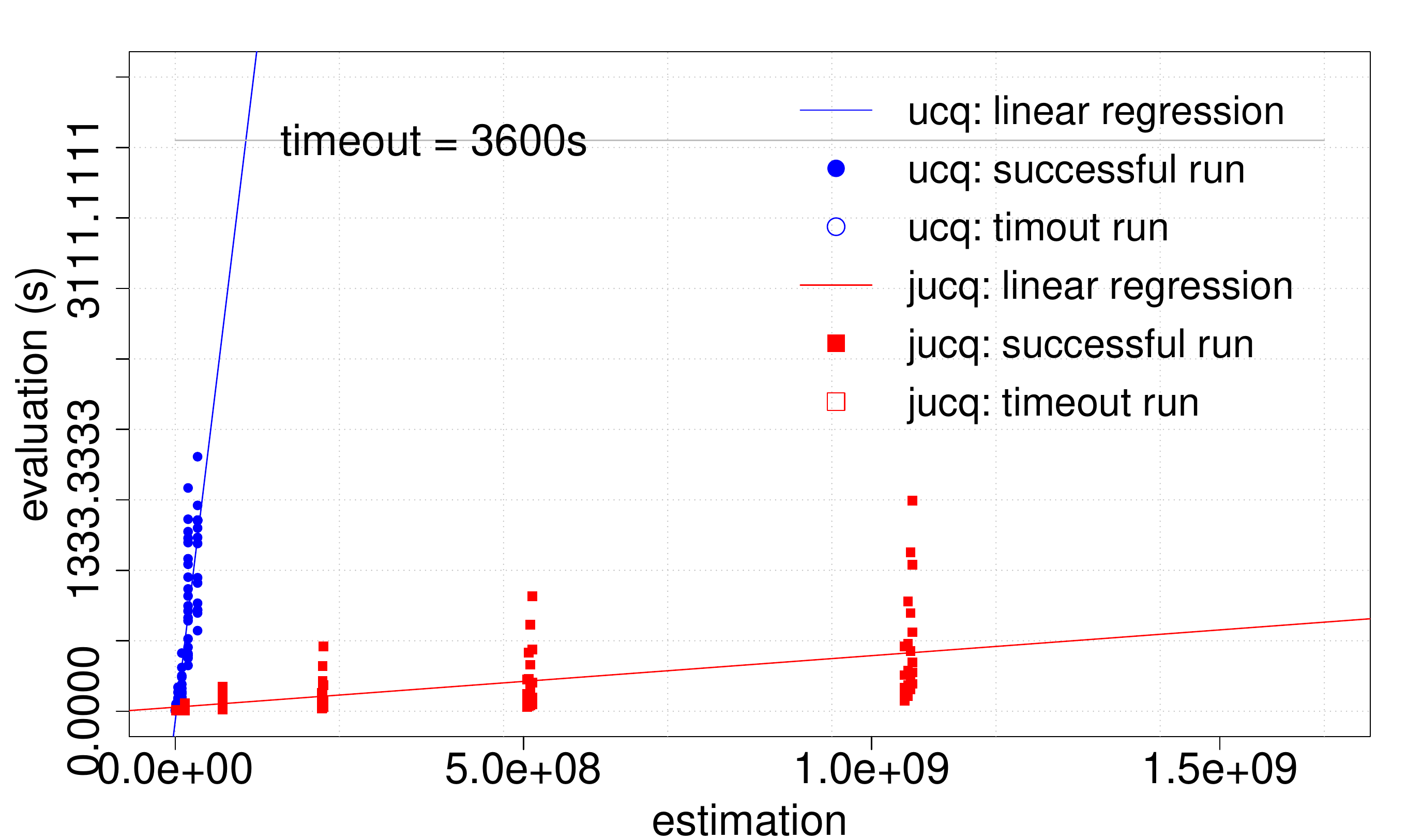}
\caption{PostgreSQL cost model vs. evaluation}
\label{fig:est-vs-eval:postgres}
\end{subfigure}

\begin{subfigure}[b]{0.9\textwidth}
  \begin{tabular}{ccc@{\extracolsep{4pt}}c@{~}c@{~}c@{~}c@{~}c@{~}c@{~}c@{~}c}
    \toprule
    & \multicolumn{2}{c}{queries} & \multicolumn{2}{c}{linear regression of our cost} & \multicolumn{2}{c}{linear regression of PostgreSQL cost} \\ \cline{2-3}\cline{4-5}\cline{6-7}
    & succ. & time out & $b_0$ & $b_1$  &  $b_0$ & $b_1$ \\ \midrule
    UCQ & 68 & 16 & 1.40e+01 & 6.17e-05  & 6.16e+01 & 3.61e+05 \\
    JUCQ & 83 & 1 & -3.76e+01 & 6.33e-05  & 2.56e+01 & 3.25e-07\\
    \bottomrule
  \end{tabular}
  \caption{Results of linear regressions  ($ \mathit{evaluation} = b_0 + b_1 \times \mathit{estimation} $)}
\label{fig:est-vs-eval:lm}
\end{subfigure}
  \caption{Cost estimations vs evaluation running times}
  \label{fig:est-vs-eval}
\end{figure}

\subsection{Wisconsin Experiment}

We devised an OBDA benchmark based on the \emph{Wisconsin Benchmark}~\cite{DeWitt93}. In the
following subsections we discuss each element of the benchmark, and their rationale.

\paragraph{Mappings.}

The mappings set consists of $1364$ mapping assertions.
Each mapping defines a property of the form \textowl{:J20O$xx$M$y$R$z$Prop$i$}, where

\begin{itemize}
\item \textowl{J20}, read ``join selectivity $20\%$'', denotes that each mapping for that property is on a query retrieving the $20\%$ of the total tuples;
\item O$xx$, read ``offset $xx$'', where $xx \in \set{0,5,10,15}$, denotes the offset for the filter in the mappings based on column \textsql{onePercent};
\item M$y$, $y \in \set{1, \ldots, 6}$ denotes the number of mapping assertions defining the property;
\item R$z$, $z \le y$ denotes the number of redundant mapping assertions, that is, the number of mapping assertions whose source query does not produces new individuals for the property in the virtual RDF ABox;
\item Prop$i$, $i \in \set{1,\ldots,3}$. The tested BGP is made of three properties.
\end{itemize}

For instance, we the following listing provides the mapping definition for the property \textowl{:S20O0M2R0Prop1}.

\noindent
\begin{minipage}{1.0\linewidth}
\begin{lstlisting}
target       :number/{unique2} :S20O0M2R0Prop1
                :name/{evenOnePercent}/{stringu1}/{stringu2} .
source       SELECT "unique2", evenOnePercent, stringu1, stringu2
             FROM t1_1m
             WHERE "onepercent" >= 0 AND "onepercent" < 20
\end{lstlisting}
\end{minipage}

\paragraph{SPARQL Queries.}

Our test is on $84$ queries, instantiations of the following template:

\noindent \begin{minipage}{\linewidth}
\begin{lstlisting}[mathescape]
SELECT DISTINCT * WHERE {
   ?x :M${m}$R${r}$Prop1 ?y1; :J$j$M$m$R$r$Prop2 ?y2; :J$j$M$m$R$r$Prop3 ?y3
}
\end{lstlisting}

\end{minipage}
where $j \in \set{5,10,15,20}$ denotes the selectivity of the join between the first property and each of the remaining two, expressed as a percentage of the number of retrieved rows for each mapping defining the property (each mapping retrieves $200$k tuples); $m \in \set{1, \ldots ,6}$ denotes the number of mappings
defining the property (all such mappings have the same signature), and
$r \in \{0, \ldots ,m-1\}$ denotes the number of
\emph{redundant} mappings, that is, the number of mappings assertions
retrieving the same results of another mapping definining the
property, minus one.
\ignore{Queries are numbered as $q_{n}$, where $m \cdot n$. Hence,  query $q_3$ is the query $q_{m=2 \cdot 1=r}$, with $2$ mappings one of which is redundant.}

For each query, we have tested a correspondent cover query of two fragments $f_1, f_2$, where each fragment is an instantiation of the following templates:

\noindent \begin{minipage}{1.0\linewidth}
\begin{lstlisting}[mathescape]
$f_1$: SELECT DISTINCT ?x ?y1 ?y2 WHERE { ?x :M$m$R$r$Prop1 ?y1; :J$j$M$m$R$r$Prop2 ?y2. }
$f_2$: SELECT DISTINCT ?x ?y3 WHERE { ?x :M$m$R$r$Prop1 ?y2; ?x :J$j$M$m$R$r$Prop3 ?y3. }
\end{lstlisting}
\end{minipage}

\paragraph{Ontology.}

We have carried out our tests over an empty TBox. Observe that this is not a limitation, as the case with a TBox can be reduced to the case without a TBox by making use of a T-mapping as in Example~\ref{e:long-example}.

\paragraph{Data.}

We have created several copies of the wisconsin table, and populated each copy with ten million tuples.

\paragraph{Evaluation.}
\label{sec:exper-eval}

In Figure~\ref{fig:est-vs-eval}, we present the cost estimation and
the actual running time for each query.
We have the following observations:

\begin{compactitem}
\item In this experiment, for the considered cover, JUCQs are generally faster
  than UCQs. In fact, out of the 84 SPARQL queries, only one JUCQ was timed
  out, while 16 UCQs were timed out. The mean running time of
  successful UCQs and JUCQs are respectively $160$ seconds and $350$ seconds.

\item In Figure~\ref{fig:est-vs-eval:ours}, where the fitted lines are obtained by applying linear regression over successful UCQ and JUCQ evaluations, we observe a strong linear
  correlation between our estimated costs and real running
  times. Moreover, the coefficients ($b_1$ and $b_0$) for UCQs and JUCQs
  are rather close. This empirically shows that our cost model can
  estimate the real running time well.

\item Figure~\ref{fig:est-vs-eval:postgres} shows that the PostgreSQL
  cost model assigns the same estimation to many queries having different running times. Moreover, the linear regressions for UCQs and JUCQs are rather
  different, which suggests that PostgreSQL is not able to recognize when two translations are semantically equivalent. Hence, PostgreSQL is not able to estimate the cost of these queries properly.

\end{compactitem}

\begin{figure}[tb]
  \centering
  \includegraphics[width=.75\textwidth]{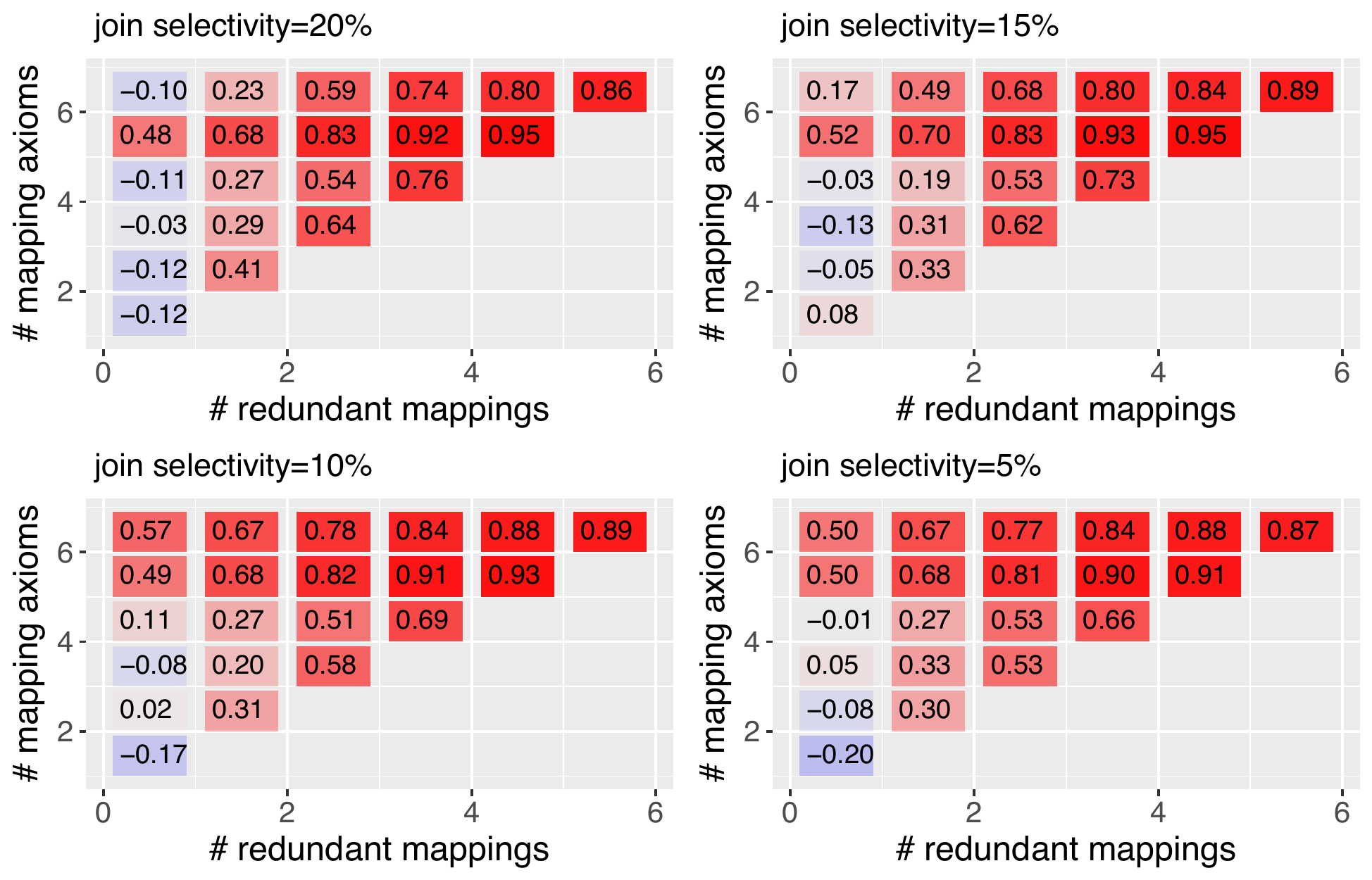}
  \caption{Performance gain of JUCQ compared with UCQ}
  \label{fig:ucq-vs-jucq}
\end{figure}

In Figure~\ref{fig:ucq-vs-jucq}, we visualize the performance gain of
JUCQs compared with UCQs. The four subgraphs correspond to four
different settings join selectivities. Each subgraph is a matrix in
which each cell shows the value of the performance gain
$g = 1 - \text{jucq\_time}/\text{ucq\_time}$. When $g > 0$, we apply
the red color; otherwise blue.
These graphs clearly show that when there is a large number of mappings
and there is high redundancy, we have better performance gains.
When the redundancy is low (0 or 1), and the number of mapping axioms
is large, the join selectivity plays an important role in the
performance gain, as discussed in~\cite{BursztynGM15b}; in other
cases, the impacts are non-significant.

Figures~\ref{fig:postgres-ucq-est-vs-real} and
\ref{fig:postgres-jucq-est-vs-real} report the cardinalities estimated by PostgreSQL divided by the actual sizes of the query answers for all
UCQ and JUCQ queries. For UCQs, it shows that PostgreSQL normally
underestimates the cardinalities, but it overestimates them when the
redundancies are high. As for JUCQS, PostgreSQL always overestimates
the cardinalities, ranging from 40 to 200K times. These numbers partially
explains why PostgreSQL estimate the costs of both UCQs and JUCQs so badly in
Figure~\ref{fig:est-vs-eval:postgres}.

We obtained similar conclusions for a query with four atoms, and a
cover of three fragments.

\begin{figure}[t]
  \centering
  \includegraphics[width=.9\textwidth]{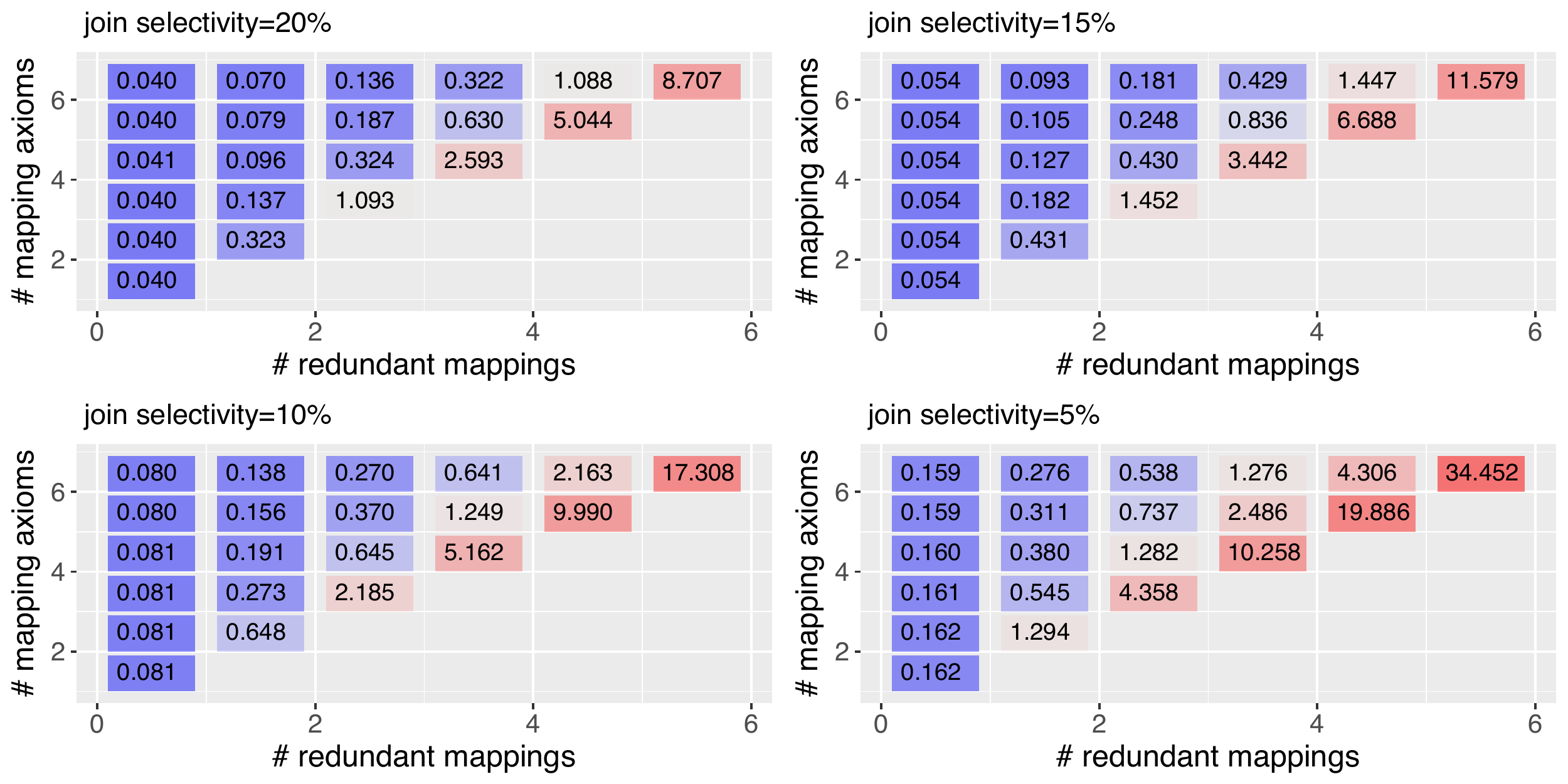}
  \caption{UCQs: (PostgreSQL estimated cardinality) / (real cardinality)}
  \label{fig:postgres-ucq-est-vs-real}

  \centering
  \includegraphics[width=.9\textwidth]{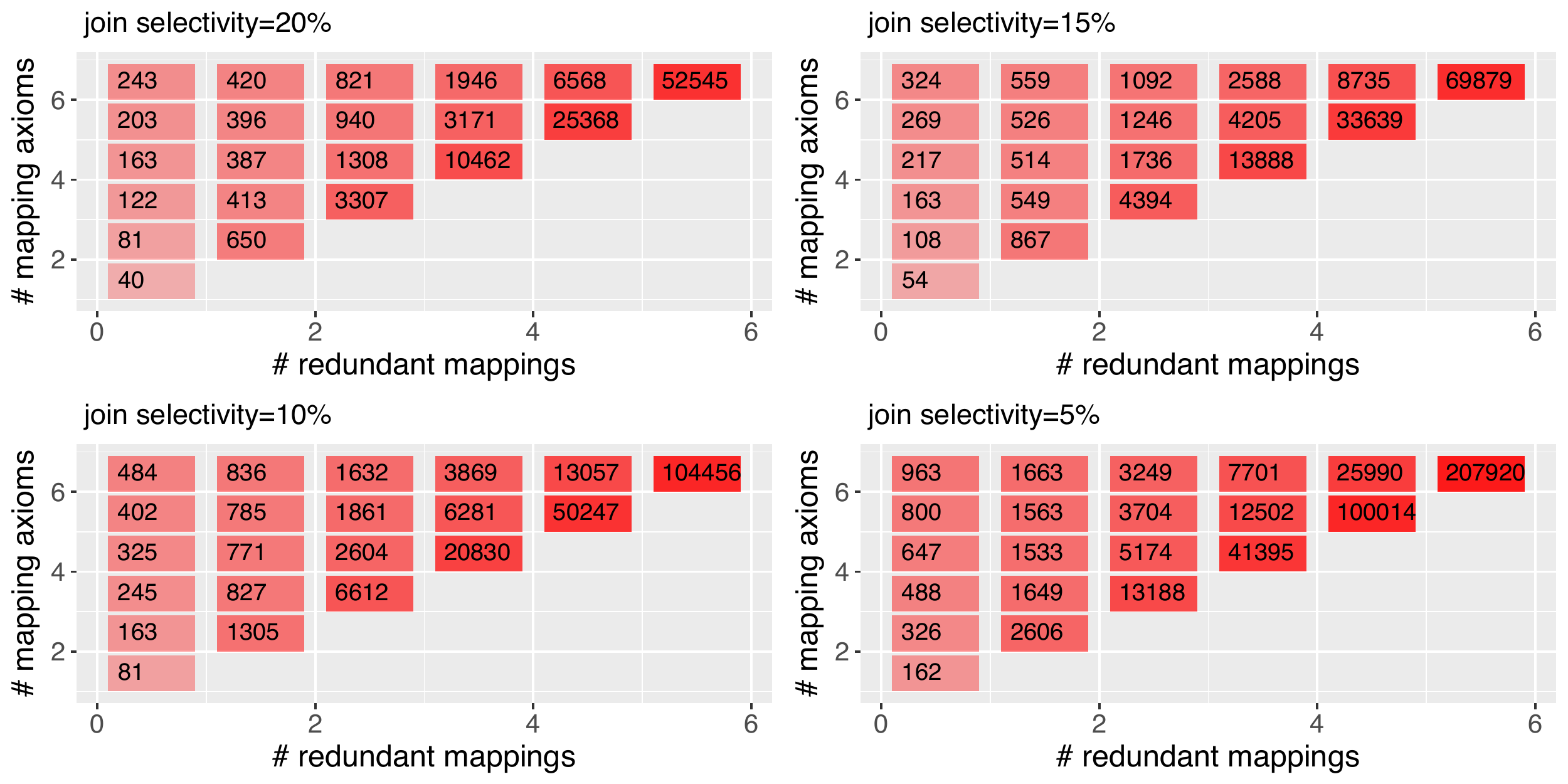}
  \caption{JUCQs: (PostgreSQL estimated cardinality) / (real cardinality)}
  \label{fig:postgres-jucq-est-vs-real}
\end{figure}

\FloatBarrier

\subsubsection{Wisconsin Experiment II: Four Atoms}

In this experiment we consider 84 queries, instances of the following template

\noindent \begin{minipage}{\linewidth}
\begin{lstlisting}[mathescape]
SELECT DISTINCT * WHERE {?x :M${m}$R${r}$Prop1 ?y1; :J$j$M$m$R$r$Prop2 ?y2; :J$j$M$m$R$r$Prop3 ?y3; :J$j$M$m$R$r$Prop3 ?y4}
\end{lstlisting}
\end{minipage}

where $j \in$, $m$, and $r$ are defined as for the case with three atoms.

For each query, we have tested a correspondent cover query of three fragments $f_1, f_2, f_3$, where each fragment is an instantiation of the following templates:

\noindent \begin{minipage}{1.0\linewidth}
\begin{lstlisting}[mathescape]
$f_1$: SELECT DISTINCT ?x ?y1 ?y2 WHERE { ?x :M$m$R$r$Prop1 ?y1; :J$j$M$m$R$r$Prop2 ?y2. }
$f_2$: SELECT DISTINCT ?x ?y3 WHERE { ?x a :M$m$R$r$Prop1; ?x :J$j$M$m$R$r$Prop3 ?y3. }
$f_3$: SELECT DISTINCT ?x ?y4 WHERE { ?x a :M$m$R$r$Prop1; ?x :J$j$M$m$R$r$Prop3 ?y4. }
\end{lstlisting}
\end{minipage}

\paragraph{Evaluation.}

Figures~\ref{fig:est-vs-eval-4},~\ref{fig:postgres-ucq-est-vs-real-4}, and~\ref{fig:postgres-jucq-est-vs-real-4} are the counterparts for the Figures of the three atoms case. Observe that the results look extremely similar, thus confirming our comments from the previous Subsection.

\begin{figure}[t]
  \scriptsize
  \centering
\begin{subfigure}[b]{0.49\textwidth}
  \includegraphics[width=\textwidth]{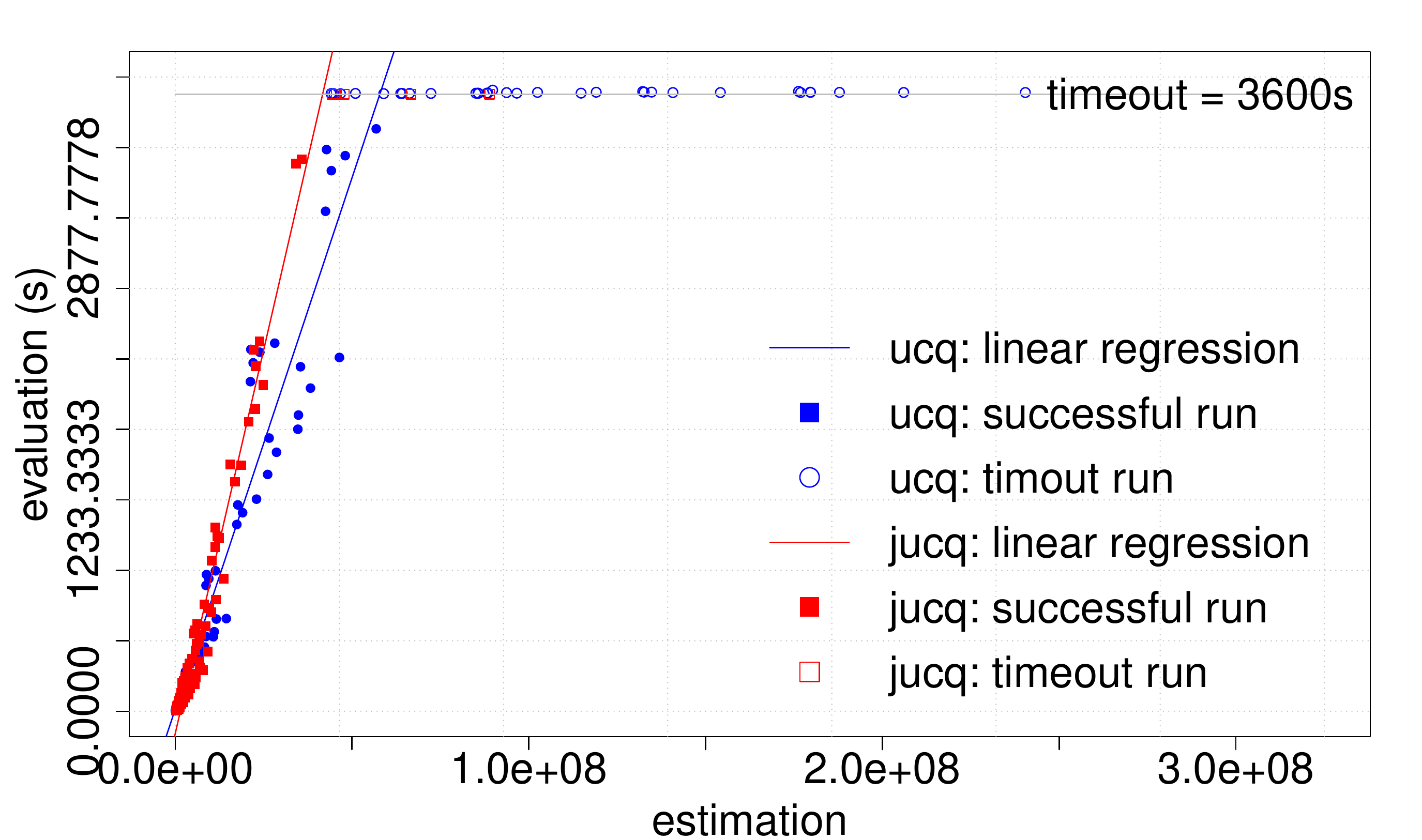}
\caption{Our cost model vs. evaluation}
\label{fig:est-vs-eval:ours-4}
\end{subfigure}
\begin{subfigure}[b]{0.49\textwidth}
\includegraphics[width=\textwidth]{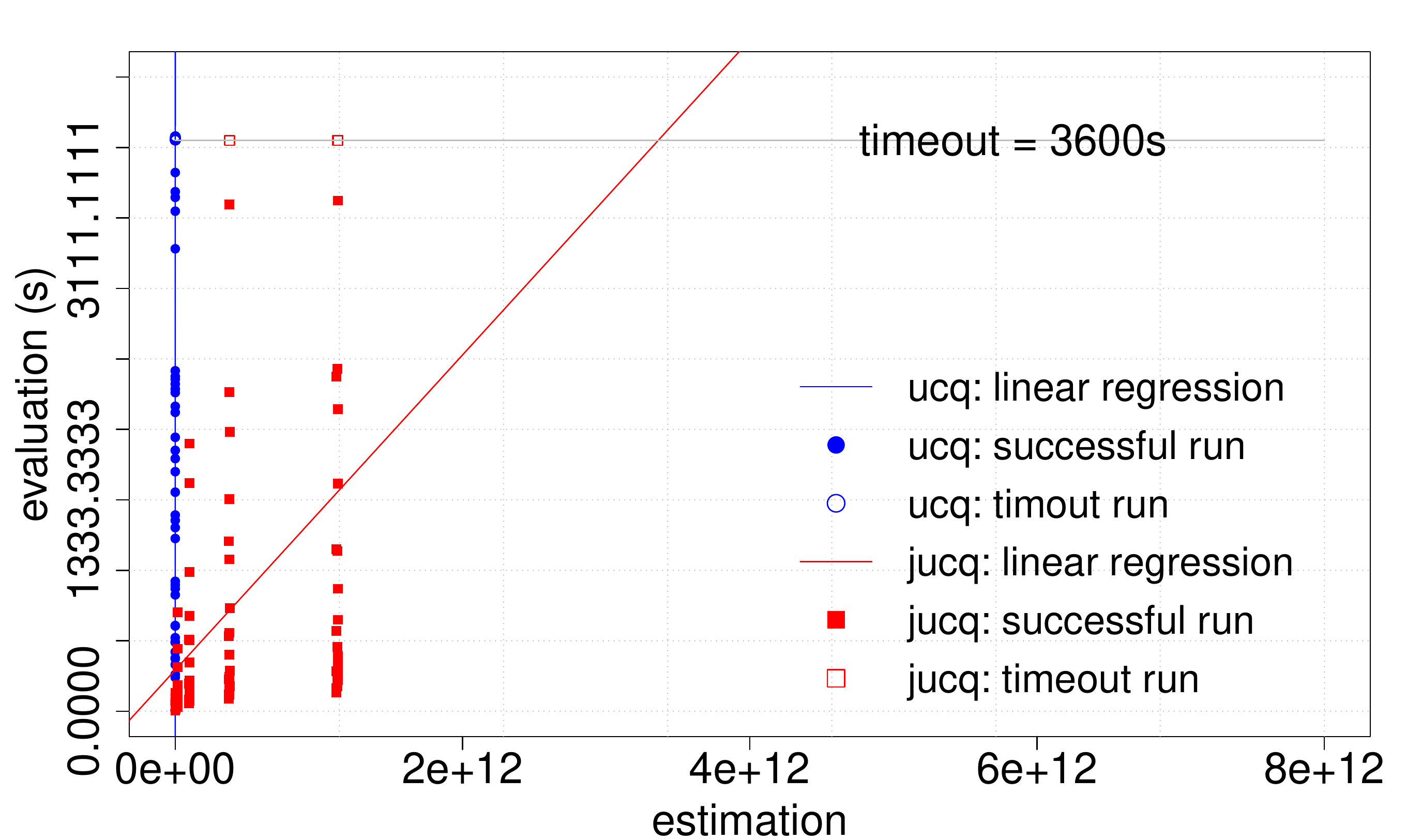}
\caption{PostgreSQL cost model vs. evaluation}
\label{fig:est-vs-eval:postgres-4}
\end{subfigure}

\begin{subfigure}[b]{0.9\textwidth}
  \begin{tabular}{ccc@{\extracolsep{4pt}}c@{~}c@{~}c@{~}c@{~}c@{~}c@{~}c@{~}c}
    \toprule
    & \multicolumn{2}{c}{queries} & \multicolumn{2}{c}{linear regression of our cost} & \multicolumn{2}{c}{linear regression of PostgreSQL cost} \\ \cline{2-3}\cline{4-5}\cline{6-7}
    & succ. & time out & $b_0$ & $b_1$  &  $b_0$ & $b_1$ \\ \midrule
    UCQ & 50 & 34 & 1.22e+01 & 6.20e-05 & 8.78e+01 & 2.21e-05 \\
    JUCQ & 79 & 5 & -1.30+e02 & 8.95e-05 & 2.74e+02 & 5.84e-10 \\
    \bottomrule
  \end{tabular}
  \caption{Results of linear regressions  ($ \mathit{evaluation} = b_0 + b_1 \times \mathit{estimation} $)}
  \label{fig:est-vs-eval:lm-4}
\end{subfigure}
\caption{Cost estimations vs evaluation running times}
\label{fig:est-vs-eval-4}
\end{figure}

\begin{figure}[t]
  \centering
  \includegraphics[width=.9\textwidth]{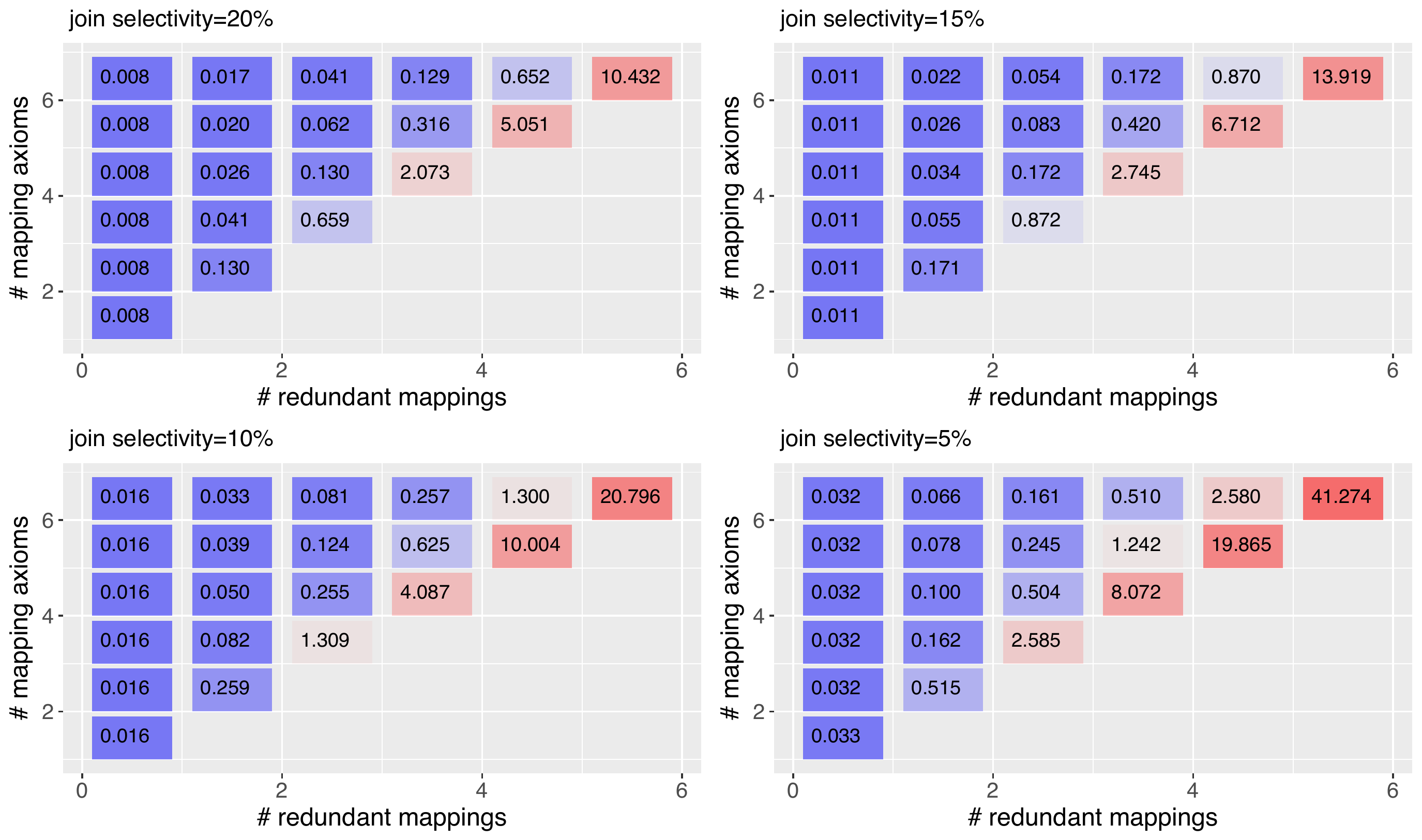}
  \caption{UCQs: (PostgreSQL estimated cardinality) / (real cardinality)}
  \label{fig:postgres-ucq-est-vs-real-4}

  \centering
  \includegraphics[width=.9\textwidth]{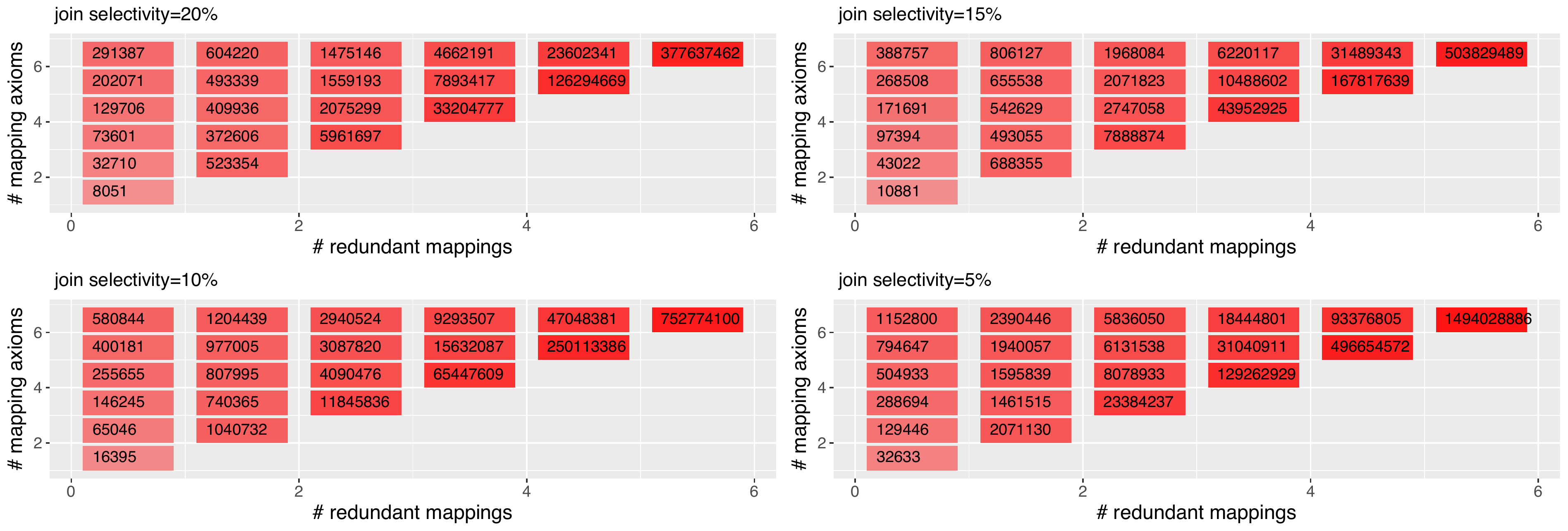}
  \caption{JUCQs: (PostgreSQL estimated cardinality) / (real cardinality)}
  \label{fig:postgres-jucq-est-vs-real-4}
\end{figure}

\subsection{NPD Experiment}

The goal of this experiment is to verify that cost-based techniques
can improve the performance of query answering over real-world queries
and instances. This test is carried on the original real-world
instance (as opposed to the scaled data instances) of the NPD
benchmark~\cite{LRXC15} for OBDA systems. We pick the three most
challenging UCQ queries (namely $q_6$, $q_{11}$, and $q_{12}$) from
the query catalog, and create another even more difficult query
(called $q_{31}$) by combining $q_6$ and $q_9$. Query
$q_{31}$, in the listing below, retrieves information regarding
wellbores (from $q_6$) and their related facilities (from $q_9$).

\noindent \begin{minipage}{\textwidth}
\begin{lstlisting}
SELECT DISTINCT ?fn ?c ?id ?wlb ?year ?comp WHERE {
  ## Fragment 1
  ?f a npdv:Facility ; npdv:name ?fn ; npdv:regInCountry ?c; npdv:idNPD ?id .
  ?w npdv:productionFacility ?f.

  ## Fragment 2
  ?w rdf:type npdv:Wellbore ; npdv:name ?wlb ; npdv:wellbCYear ?year;
     npdv:drillOpComp  [ npdv:name ?comp ]
FILTER (?id > "0"^^xsd:integer) }
\end{lstlisting}
\end{minipage}
In Table~\ref{tab:eval-npd}, we show the evaluation results over the
NPD benchmark for UCQs and JUCQs.
The unfoldings of JUCQs are constructed using cover queries of 2 fragments, each guided by our cost model. We observe that the sizes of the
unfoldings of JUCQs, measured in number of CQs, are sensibly
smaller than the size of the unfoldings of UCQs. Finally, we observe that the unfoldings of the JUCQ version of the considered queries improve the running times up to a factor of 34.

\begin{table}[t!]
\scriptsize
    \centering
    \caption{Evaluation over the NPD benchmark}
\label{tab:eval-npd}
\begin{tabular}{cc  @{\extracolsep{8pt}} cc  @{\extracolsep{8pt}} ccc}
  \toprule
  \multicolumn{2}{c}{SPARQL Query} & \multicolumn{2}{c}{Unfolding of UCQs}  &  \multicolumn{3}{c}{Unfolding of JUCQs} \\
  \cline{1-2}\cline{3-4}\cline{5-7}
  name & \# triple patterns & time (s) & \# CQs & time (s) & \# Frags & \# CQs \\
  \midrule
  $q_6$ & 7 & 2.18 & 48 & 1.20 & 2 & 14 \\
  $q_{11}$ & 8 & 3.39 & 24 & 0.40 & 2 & 12 \\
  $q_{12}$ &  10 & 6.67 & 48 & 0.47 & 2 & 14  \\
  $q_{31}$ & 10 & 54.27 & 3840 &  1.58 & 2 & 327 \\
  \bottomrule
\end{tabular}
\end{table}
















\section{Conclusion and Future Work}\label{c:planning:s:conclusion}

In this paper, we have studied the problem of finding efficient alternative translations of a user query in OBDA.
Specifically, we introduced a translation of JUCQ queries that preserves the JUCQ structure while maintaining the possibility of performing joins over database values, rather than URIs constructed by applying mappings definitions.
We devised a cost model based on a novel cardinality estimation, for estimating the cost of evaluating a translation of a UCQ or JUCQ over the database. We compared different translations on both a synthetic and fully customizable scenario based on the Wisconsin Benchmark and on a real-world scenario from the NPD Benchmark. In these experiments we have observed that (i)~our approach based on JUCQ queries can produce translations that are orders of magnitude more efficient than traditional translations into UCQs, and that (ii)~the cost model we devised is faithful to the actual query evaluation cost, and hence is well suited to select the best translation.

As future work, we plan to implement our techniques in the state-of-the-art OBDA system \ontop and to integrate them with existing optimization strategies. This will allow us to test our approach in more and diversified settings. We also plan to explore alternatives beyond JUCQs. An interesting line of research would be on the problem of relaxing the uniformity assumption made in our cost estimator, by integrating our model with existing techniques based on histograms. Finally, we plan to improve our approach for dealing with projection in an unfolding. In fact, the approach proposed here relies on a formula that has size exponential in the number of queries in the unfolding. To overcome this problem, we plan to remove such formula from our estimation by relying on additional statistics provided by the mapping and the ontology, such as the cardinality of the projection over one of the two arguments of a property defined in the wrap of a T-mapping.



\bibliographystyle{plain}



\end{document}
